\pdfoutput=1

\documentclass[11pt]{article}
\usepackage{xspace}
\usepackage{mathptmx}
\usepackage{epsfig}
\usepackage{epstopdf}
\usepackage[sort]{cite}
\usepackage{graphicx}
\usepackage{amsfonts}
\usepackage{amsmath}
\usepackage{amsthm,mathtools}
\usepackage{latexsym}
\usepackage{amssymb}
\usepackage{color}
\usepackage{comment}
\usepackage[margin=1in]{geometry}
\usepackage{extras,tikz}
\usepackage{psfrag}
\usepackage{comment}
\usepackage[noend]{algorithmic}
\usepackage{algorithm}

\usepackage[protrusion=true,expansion=true]{microtype}

\usepackage{hyperref}

\definecolor{darkblue}{rgb}{0,0,0.45}
\definecolor{darkred}{rgb}{0.6,0,0}
\definecolor{darkgreen}{rgb}{0.13,0.5,0}
\hypersetup{colorlinks, linkcolor=darkblue, citecolor=darkgreen,
urlcolor=darkblue}

\newtheorem{theorem}{Theorem}[section]
\newtheorem{lemma}[theorem]{Lemma}
\newtheorem{claim}[theorem]{Claim}

\newtheorem{observation}[theorem]{Observation}

\newtheorem{definition}[theorem]{Definition}

\DeclareMathAlphabet{\mathcal}{OMS}{cmsy}{m}{n}
\newcommand{\cqed}{\renewcommand{\qedsymbol}{$\lrcorner$}}

\newtheorem*{rep@theorem}{\rep@title} \newcommand{\newreptheorem}[2]{%
\newenvironment{rep#1}[1]{%
\def\rep@title{\autoref{##1}}%
\begin{rep@theorem} }%
{\end{rep@theorem} } }
\makeatother
\newreptheorem{theorem}{}
\newreptheorem{lemma}{}
\newreptheorem{corollary}{}


\pagestyle{plain}

\newcommand{\executeiffilenewer}[3]{%
\ifnum\pdfstrcmp{\pdffilemoddate{#1}}%
{\pdffilemoddate{#2}}>0%
{\immediate\write18{#3}}\fi%
}
\newcommand{%
\executeiffilenewer{.svg}{.pdf}%
{inkscape -z -D --file=.svg %
--export-pdf=.pdf --export-latex}%
\input{.pdf_tex}%
}[1]{%
\executeiffilenewer{#1.svg}{#1.pdf}%
{inkscape -z -D --file=#1.svg %
--export-pdf=#1.pdf --export-latex}%
\input{#1.pdf_tex}%
}

\newcommand{\poly}{{\mathrm{poly}}}

\newcommand{\mcc}{\textsc{Multicolored Clique} \xspace}
\newcommand{\st}{\textsc{Steiner Tree} \xspace}
\newcommand{\dst}{\textsc{Directed Steiner Tree} \xspace}
\newcommand{\scss}{\textsc{Strongly Connected Steiner Subgraph} \xspace}
\newcommand{\dsn}{\textsc{Directed Steiner Network}\xspace}
\newcommand{\csi}{\textsc{Partitioned Subgraph Isomorphism}\xspace}
\newcommand{\gt}{\textsc{Grid Tiling}\xspace}

\newcommand{\T}{\mathcal{T}}

\newcommand{\tw}{\ensuremath{\textbf{tw}}}

\newcommand{\inn}{\texttt{in}}
\newcommand{\out}{\texttt{out}}

\newif\ifappendix
\newif\ifmain
\newif\iffullorapp
\maintrue
\fullorappfalse
\newif\ifabstract
\abstracttrue
 \abstractfalse 
\newif\iffull
\ifabstract \fullfalse \else \fulltrue \fullorapptrue \fi

\psfragscanon

{\makeatletter
 \gdef\xxxmark{%
   \expandafter\ifx\csname @mpargs\endcsname\relax 
     \expandafter\ifx\csname @captype\endcsname\relax 
       \marginpar{xxx}
     \else
       xxx 
     \fi
   \else
     xxx 
   \fi}
 \gdef\xxx{\@ifnextchar[\xxx@lab\xxx@nolab}
 \long\gdef\xxx@lab[#1]#2{{\bf [\xxxmark #2 ---{\sc #1}]}}
 \long\gdef\xxx@nolab#1{{\bf [\xxxmark #1]}}
 \long\gdef\xxx@lab[#1]#2{}\long\gdef\xxx@nolab#1{}%
}

\begin{document}

\title{Tight Bounds for Planar Strongly Connected Steiner Subgraph with Fixed Number of Terminals (and Extensions)\thanks{A preliminary version of this paper appeared in SODA 2014~\cite{DBLP:conf/soda/ChitnisHM14}}}

\author{Rajesh Chitnis\thanks{School of Computer Science, University of
Birmingham. Part of this work was done while at the University of Maryland, USA and University of Warwick, UK (supported by ERC grant 2014-CoG 647557). Email: \texttt{rajeshchitnis@gmail.com}} \and
Andreas Emil Feldmann\thanks{Department of Applied Mathematics, Charles
University, Prague, Czechia. Supported by project CE-ITI (GA\v{C}R
no.~P202/12/G061) of the Czech Science Foundation, and by the Center for
Foundations of Modern Computer Science (Charles Univ.\ project UNCE/SCI/004).
Email: \texttt{feldmann.a.e@gmail.com} } \and MohammadTaghi
Hajiaghayi\thanks{Department of Computer Science , University of Maryland at
College Park, USA. Supported in part by NSF CAREER award 1053605, ONR YIP award
N000141110662, DARPA/AFRL award FA8650-11-1-7162 and a
University of Maryland Research and Scholarship Award (RASA). Email: \texttt{hajiagha@cs.umd.edu}} \and D{\'a}niel Marx\thanks{Max Planck Institute for Informatics, Saarbr\"ucken, Germany. Supported by ERC Starting Grant PARAMTIGHT (No. 280152) and ERC Consolidator Grant SYSTEMATICGRAPH (No. 725978). Email: \texttt{dmarx@mpi-inf.mpg.de}}}

\date{}

\maketitle
\thispagestyle{empty}

\begin{abstract}
Given a vertex-weighted directed graph $G=(V,E)$ and a set $T=\{t_1, t_2, \ldots t_k\}$ of $k$ terminals, the objective of the \scss (SCSS)
problem is to find a vertex set $H\subseteq V$ of minimum weight such that $G[H]$ contains a $t_{i}\rightarrow t_j$
path for each $i\neq j$. The problem is NP-hard, but Feldman and Ruhl [FOCS '99; SICOMP '06] gave a novel $n^{O(k)}$
algorithm for the SCSS problem, where $n$ is the number of vertices in the graph and $k$ is the number of terminals.
We explore how much easier the problem becomes on planar directed graphs.

\begin{itemize}
\item Our main algorithmic result is a $2^{O(k)}\cdot
  n^{O(\sqrt{k})}$ algorithm for planar SCSS, which is an improvement
  of a factor of $O(\sqrt{k})$ in the exponent over the algorithm of
  Feldman and Ruhl.
\item Our main hardness result is a matching lower bound for our
  algorithm: we show that planar SCSS does not have an $f(k)\cdot
  n^{o(\sqrt{k})}$ algorithm for any computable function $f$, unless
  the Exponential Time Hypothesis (ETH) fails.
\end{itemize}

To obtain our algorithm, we first show combinatorially that there is a minimal
solution whose treewidth is $O(\sqrt{k})$, and then use the dynamic-programming
based algorithm for finding bounded-treewidth solutions due to Feldmann and Marx
[ICALP '16].
%
%
To obtain the lower bound matching the
algorithm, we need a delicate construction of gadgets arranged in a
grid-like fashion to tightly control the number of terminals in the
created instance.

The following additional results put our upper and lower bounds in context:
\begin{itemize}
\item Our $2^{O(k)}\cdot n^{O(\sqrt{k})}$ algorithm for planar directed graphs
can be generalized to graphs excluding a fixed minor. Additionally, we can
obtain this running time for the problem of finding an optimal planar solution
even if the input graph is not planar.
\item In general graphs, we cannot hope for such a dramatic improvement over
the $n^{O(k)}$ algorithm of Feldman and Ruhl: assuming ETH,
SCSS in general graphs does not have an $f(k)\cdot n^{o(k/\log k)}$ algorithm
for any computable function $f$.
\item Feldman and Ruhl generalized their $n^{O(k)}$ algorithm to the more
general \dsn (DSN) problem; here the task is to find a subgraph of minimum
weight such that for every source $s_i$ there is a path to the corresponding
terminal $t_i$. We show that, assuming ETH, there is no $f(k)\cdot
n^{o(k)}$ time algorithm for DSN on acyclic planar graphs.
\end{itemize}

All our lower bounds hold for the edge-unweighted version, while the algorithm works for the more general vertex-(un)weighted version.
\end{abstract}





\section{Introduction}
\label{sec:intro}

%

The \textsc{Steiner Tree} (ST) problem is one of the earliest and
most fundamental problems in combinatorial optimization: given an
undirected graph $G=(V,E)$ and a set $T\subseteq V$ of terminals, the
objective is to find a tree of minimum size which connects all the
terminals. The ST problem is believed to have been
first formally defined by Gauss in a letter in 1836, and the first
combinatorial formulation is attributed
independently to Hakimi~\cite{hakimi} and Levin~\cite{levin} in
1971. The ST problem is known to be NP-complete, and was in fact part of
Karp's original list~\cite{karp1972reducibility} of 21
NP-complete problems. In the directed version of the ST problem,
called \textsc{Directed Steiner Tree} (DST), we are
also given a root vertex $r$ and the objective is to find a minimum
size arborescence which connects the root $r$ to each terminal from
$T$. An easy reduction from \textsc{Set Cover} shows that the DST
problem is also NP-complete.

Steiner-type problems arise in the design of networks. Since many networks are
symmetric, the directed versions of Steiner-type problems were mostly of
theoretical interest. However, in recent years, it has been
observed~\cite{ramanathan1996multicast, salama1997evaluation} that the
connection cost in various networks such as satellite or radio networks are not
symmetric. Therefore, directed graphs form the most suitable model for such
networks. In addition, Ramanathan~\cite{ramanathan1996multicast} also used the
DST problem to find low-cost multicast trees, which have applications in
point-to-multipoint communication in high bandwidth networks. We refer the
interested reader to Winter~\cite{winter1987steiner} for a survey on
applications of Steiner problems in networks.

In this paper we consider two
well-studied Steiner-type problems in directed graphs, namely the \scss and the
\dsn problems. In the (vertex-unweighted) \scss (SCSS) problem, given a directed graph $G=(V,E)$ and
a set $T=\{t_{1}, t_{2}, \ldots, t_{k}\}$ of $k$ terminals, the objective is to
find a set $S\subseteq V$ of minimum size such that $G[S]$ contains a $t_{i}\rightarrow t_{j}$
path for each $1\leq i\neq j\leq k$. Thus, just as DST, the SCSS problem is
another directed version of the ST problem, where all terminals need to be
connected to each other. The (vertex-unweighted) \dsn (DSN) problem generalizes both DST and SCSS:
given a directed graph $G=(V,E)$ and a set $T=\{(s_1, t_{1}), (s_{2}, t_{2}),
\ldots, (s_k, t_{k})\}$ of $k$ pairs of terminals, the objective is to find a
set $S\subseteq V$ of minimum size such that $G[S]$ contains an $s_{i}\rightarrow t_{i}$ path
for each $1\leq i\leq k$.
We first describe the known results for both SCSS and DSN before stating our
results and techniques.

\subsection{Previous work}

Since both DSN and SCSS are NP-complete, one can try to design polynomial-time
approximation
algorithms for these problems. An $\alpha$-approximation for DST implies a $2\alpha$-approximation for SCSS as follows: fix a
terminal $t\in T$ and take the union of the solutions of the DST instances $(G,t, T\setminus t)$ and $(G_{\textup{rev}}, t, T\setminus
t)$, where $G_{\textup{{rev}}}$ is the graph obtained from $G$ by reversing the orientations of all edges. The best known approximation
ratio in polynomial time for SCSS is $k^{\epsilon}$ for any $\epsilon>0$~\cite{DBLP:journals/jal/CharikarCCDGGL99}. A result
of Halperin and Krauthgamer~\cite{DBLP:conf/stoc/HalperinK03} implies SCSS has no $\Omega(\log^{2-\epsilon} n)$-approximation
for any $\epsilon>0$, unless NP has quasi-polynomial Las Vegas algorithms. For
the more general DSN problem, the best approximation ratio known is $n^{2/3 +
\epsilon}$ for any $\epsilon>0$. Berman et
al.~\cite{DBLP:journals/iandc/BermanBMRY13} showed
that DSN has no $\Omega(2^{\log^{1-\epsilon} n})$-approximation for any $0< \epsilon <1$, unless NP has quasi-polynomial
time algorithms.

Rather than finding approximate solutions in polynomial time, one can
look for exact solutions in time that is still better than the running
time obtained by brute force algorithms. For (unweighted versions of) both the SCSS and DSN
problems, brute force can be used to check in time $n^{O(p)}$ if a
solution of size at most $p$ exists: one can go through all sets of
size at most $p$. A more efficient algorithm would have runtime $f(p)\cdot n^{O(1)}$,
where $f$ is some computable function depending only on $p$. A problem is said to be \emph{fixed-parameter tractable} (FPT) with a particular parameter $p$ if it
admits such an algorithm;
see~\cite{fpt-book,downey-fellows,flum-grohe,niedermeier} for more
background on FPT algorithms. A natural parameter for our considered problems is the number $k$
of terminals or terminal pairs; with this parameterization, it is not
even clear if there is a polynomial-time algorithm for every fixed
$k$, much less if the problem is FPT.  It is known that \st on
undirected graphs is FPT parameterized by the number $k$ of terminals: the classical algorithm of Dreyfus and
Wagner \cite{DBLP:journals/networks/DreyfusW71} solves the problem in
time $3^k\cdot n^{O(1)}$. The
running time was recently improved to $2^k\cdot n^{O(1)}$ by
Bj\"orklund et al.~\cite{DBLP:conf/stoc/BjorklundHKK07}. The same
algorithms work for \dst as well.

For the SCSS and DSN problems, we cannot expect fixed-parameter
tractability: Guo et al.~\cite{guo-et-al} showed
that SCSS is W[1]-hard parameterized by the number of terminals $k$,
and DSN is W[1]-hard parameterized by the number of terminal pairs
$k$. In fact, it is not even clear how to solve these problems in
polynomial time for small fixed values of the number $k$ of
terminals/pairs.  The case of $k=1$ in DSN is the well-known shortest
path problem in directed graphs, which is known to be polynomial time
solvable. For the case $k=2$ in DSN, an $O(n^5)$ algorithm was given
by Li et al.~\cite{Li1992267} which was later improved to
$O(mn+n^{2}\log n)$ by Natu and Fang~\cite{Natu1997207}. The question
regarding the existence of a polynomial time algorithm for DSN when $k=3$
was open. Feldman and Ruhl~\cite{feldman-ruhl} solved this question by
giving an $n^{O(k)}$ algorithm for DSN, where $k$ is the number of
terminal pairs. They first designed an $n^{O(k)}$ algorithm for SCSS,
where $k$ is the number of terminals, and used it as a subroutine in
the algorithm for the more general DSN problem.

\subsection{Our results and techniques}

Given the amount of attention
the planar version of Steiner-type problems received from the
viewpoint of approximation (see, e.g.,
\cite{DBLP:conf/soda/BateniCEHKM11,DBLP:journals/jacm/BateniHM11,DBLP:journals/talg/BorradaileKM09,DBLP:conf/icalp/DemaineHK09a,DBLP:conf/soda/EisenstatKM12})
and the availability of techniques for parameterized algorithms on
planar graphs (see, e.g.,
\cite{DBLP:conf/focs/BodlaenderFLPST09,DBLP:journals/cj/DemaineH08,DBLP:journals/jacm/FrickG01,DBLP:conf/icalp/KleinM12,MarxPP-FOCS2018}),
it is natural to explore SCSS and DSN restricted to planar graphs\footnote{Planarity for directed graph problems refers to the underlying undirected graph being planar}. In
general, one can have the expectation that the problems restricted to
planar graphs become easier, but sophisticated techniques might be
needed to exploit planarity. In particular, a certain {\em square root phenomenon} was observed for a wide range of algorithmic problems: the exponent of the running time can be improved from $O(k)$ to $O(\sqrt{k})$ (or to $O(\sqrt{k}\log k)$) and lower bounds indicate that this improvement is essentially best possible \cite{DBLP:conf/icalp/Marx12,DBLP:conf/icalp/KleinM12,MarxPP-FOCS2018,KleinM14,FominLMPPS16,DemaineFHT05,DBLP:conf/stacs/PilipczukPSL13,DBLP:conf/esa/MarxP15,DBLP:conf/fsttcs/LokshtanovSW12,DBLP:journals/corr/AboulkerBHMT15,FominKLPS16}. Our main algorithmic result is also an improvement of this form:
\begin{theorem}
\label{thm:algo-sqrt-k-h-minor-free} An instance $(G,T)$ of the vertex-weighted \scss problem with $|G|=n$ and $|T|=k$ can be solved in $2^{O(k)}\cdot
n^{O(\sqrt{k})}$ time, when the underlying undirected graph of $G$ is planar.
\end{theorem}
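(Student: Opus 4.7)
The plan is to combine a structural treewidth bound with a known parameterized dynamic-programming algorithm, following the two-step strategy hinted at in the abstract. Specifically, I would first establish the combinatorial claim that some optimal SCSS solution $S^{*}\subseteq V$ is such that the underlying undirected graph of $G[S^{*}]$ has treewidth $O(\sqrt{k})$. I would then invoke the DP algorithm of Feldmann and Marx, which, given a planar (or minor-closed) SCSS instance, finds the cheapest feasible solution of treewidth at most $w$ in time $2^{O(k)}\cdot n^{O(w)}$. Plugging in $w=O(\sqrt{k})$ yields the $2^{O(k)}\cdot n^{O(\sqrt{k})}$ bound in the theorem.

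For the structural step, the leverage comes from planarity via the linear planar grid-minor theorem of Robertson--Seymour--Thomas: if the underlying graph of an optimal $G[S^{*}]$ had treewidth $\Omega(\sqrt{k})$ with a large enough constant, it would contain a $\Omega(\sqrt{k})\times \Omega(\sqrt{k})$ grid minor, and hence some $\Theta(k)$ disjoint ``pieces.'' I would then pick an optimal $S^{*}$ minimizing some secondary measure (for instance lexicographically smallest under an embedding-derived order) and run a re-routing / exchange argument: since only $k$ terminals participate in the $t_{i}\to t_{j}$ reachability constraints, only $O(k)$ of the grid's pieces can ``carry'' a terminal, and the remaining pieces can be locally bypassed using cheaper planar paths along the grid boundary without destroying any reachability. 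This would contradict minimality once the grid is sufficiently large, forcing treewidth $O(\sqrt{k})$. A clean way to implement this is via the planar embedding of $G[S^{*}]$: bound the number of ``non-trivial'' faces by $O(k)$ using the terminals as anchors, and then conclude treewidth $O(\sqrt{k})$ by an Euler-formula / separator argument on the face-vertex incidence graph.

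For the algorithmic step, I would use the Feldmann--Marx DP essentially as a black box. Its $n^{O(w)}$ factor comes from guessing the bags of a width-$w$ tree decomposition of the (unknown) solution subgraph, while the $2^{O(k)}$ factor comes from maintaining, at each bag, a signature describing which $t_{i}\to t_{j}$ reachabilities are already realized by the partial solution on one side of the separator. Combining this with the treewidth bound from the first step yields the claimed runtime.

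The hard part will be the combinatorial treewidth bound, not the algorithmic invocation. The subtlety is that the exchange argument must preserve all $k(k-1)$ pairwise reachability constraints \emph{simultaneously} in a \emph{directed} graph, even though the treewidth notion used is on the underlying undirected graph. Terminals can be spread throughout any alleged large grid minor, so a naive ``cut out the middle and stitch around'' argument will fail once a terminal lies in the interior; one must carefully route around terminals and argue that the new solution weighs no more than $S^{*}$. I expect the cleanest treatment to go through a face-based charging argument on the planar embedding of the minimal optimum, showing that faces not incident to terminals can be assumed to be small, which is exactly the kind of delicate planarity exploitation the paper advertises.
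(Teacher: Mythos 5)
Your two-step outline (structural treewidth bound, then the Feldmann--Marx bounded-treewidth DP) is the same as the paper's, and the second step is handled correctly. But the mechanism you propose for the first step is genuinely different from the paper's, and as sketched it has a real gap.

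The paper does not use a re-routing/exchange argument at all. Instead it fixes a terminal $r$ and uses the observation (due to Feldman--Ruhl) that any \emph{edge-minimal} solution $M$ is the union of an in-arborescence $A_{\inn}$ and an out-arborescence $A_{\out}$, both rooted at $r$ and with only terminals as leaves. It then builds an explicit hitting set $W\subseteq V(M)$ of size at most $9k$ consisting of (i) branching points of $A_{\inn}$ and $A_{\out}$, (ii) all terminals, and (iii) the endpoints of shared paths of $A_{\inn}\cap A_{\out}$ that contain a branching point or a terminal. A local argument (using edge-minimality, not weight-minimality, to rule out redundant edges) shows that each component of $M\setminus W$ is a subdivision of an outerplanar graph, hence has treewidth at most $2$. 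Combined with the planar grid-minor theorem this forces $\tw(M)=O(\sqrt{k})$: a $\Theta(\sqrt{k})\times\Theta(\sqrt{k})$ grid minor contains more than $9k$ disjoint $3\times 3$ grid minors, so one of them avoids $W$ and lies in a single treewidth-$\le 2$ component, a contradiction since a $3\times 3$ grid has treewidth $3$. Note this proof never re-embeds or re-routes anything, never takes a lexicographically-extremal optimum, and never counts faces.

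The gap in your version is in the sentence ``the remaining pieces can be locally bypassed using cheaper planar paths along the grid boundary without destroying any reachability.'' A grid minor is not an embedded grid: the $3\times 3$ blocks are branch sets connected by arbitrary paths in $G[S^*]$, and there is no guaranteed ``boundary path'' in the actual graph to detour along, nor any reason the detour would have weight $\le$ the piece you cut out. Moreover the same block can carry many directed $t_i\to t_j$ through-paths in both orientations, and you have not argued that a single local surgery preserves all of them simultaneously at no extra cost. Your fallback ``face-based charging'' claim (faces not incident to terminals are small) is also not the right statement: a planar graph can have all faces of size $\le 4$ and still have huge treewidth, so bounding face sizes does not by itself bound treewidth. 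What is actually needed is a bound on the number of \emph{branching} vertices plus a bound on the complexity of what remains after removing them; that is exactly what the arborescence decomposition and the $9k$-vertex set $W$ deliver, and it is the piece your proposal is missing.

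One minor correction: the Feldmann--Marx DP does not need the instance to be planar or minor-closed; it works on arbitrary directed graphs and only requires that \emph{some} optimum have treewidth $\le\omega$. Planarity is used only in the structural lemma, which is why the paper can also state the ``optimal planar solution in an arbitrary input graph'' variant.
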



This algorithm presents a major improvement over the Feldman-Ruhl algorithm for
SCSS in general graphs which runs in
$n^{O(k)}$ time. A preliminary version of this paper~\cite{DBLP:conf/soda/ChitnisHM14} by a subset of the authors contained a complicated algorithm with a worse running time of $2^{O(k\cdot \log k)}\cdot   n^{O(\sqrt{k})}$. It relied on modifying the Feldman-Ruhl token game, and then using the excluded grid theorem for planar graphs followed by treewidth-based techniques. We briefly give some intuition behind this algorithm and the original $n^{O(k)}$ algorithm of Feldman-Ruhl. The algorithm of Feldman-Ruhl for SCSS is based on defining a game with $2k$ tokens and costs associated with the moves of the tokens such that the minimum cost of the game is equivalent to
the minimum cost of a solution of the SCSS problem; then the minimum cost of the game can be computed by exploring a state space of size
$n^{O(k)}$. The $2^{O(k\cdot \log k)}\cdot   n^{O(\sqrt{k})}$ algorithm was obtained by generalizing the Feldman-Ruhl token game via introducing \emph{supermoves}, which are sequences of certain types of moves. The
generalized game still has a state space of $n^{O(k)}$, but it has the advantage that we can now give a bound of $O(k)$ on the number of
supermoves required for the game (such a bound is not possible for the original
version of the game). This gives an $O(k)$-sized summary of the token game, and
hence has treewidth~$O(\sqrt{k})$. However, this summary is ``unlabeled", i.e.,
we do not explicitly know which vertices occur where in the summary. Guessing by
brute force requires $n^{O(k)}$ time, and the improvement to $2^{O(k\cdot \log
k)}\cdot   n^{O(\sqrt{k})}$ is obtained by using an embedding theorem of Klein and Marx~\cite{DBLP:conf/icalp/KleinM12}.

Unlike the $2^{O(k\cdot \log k)}\cdot n^{O(\sqrt{k})}$ algorithm
of~\cite{DBLP:conf/soda/ChitnisHM14}, the $2^{O(k)}\cdot n^{O(\sqrt{k})}$
algorithm from Theorem~\ref{thm:algo-sqrt-k-h-minor-free} does not depend on the
Feldman-Ruhl algorithm. It is conceptually much simpler: first we show
combinatorially (see Lemma~\ref{lem:tw-sqrt-k}) that there is a minimal solution
whose treewidth is $O(\sqrt{k})$, and then use the dynamic-programming based
algorithm for finding bounded-treewidth solutions for DSN due to Feldmann and
Marx~\cite[Theorem 5]{andreas-dm-arxiv}. The simplicity of our new approach
also allows transparent generalizations in two directions:
\begin{itemize}
\item \underline{From planar to $H$-minor-free graphs:} we may use
the excluded grid minor theorem for $H$-minor-free
graphs~\cite{H-minor-free-grid-theorem} instead of the excluded grid minor
theorem for planar graphs~\cite{planar-grid-theorem} to prove the existence of
a minimal solution of treewidth $O(\sqrt{k})$, which again implies a
$2^{O(k)}\cdot n^{O(\sqrt{k})}$ time algorithm.

\item \underline{Between restricted inputs and restricted solutions:}
our algorithm only exploits the $H$-minor-freeness of an optimum solution, and not of
the whole input graph. Thus, only the existence of one optimum $H$-minor-free solution in an otherwise
unrestricted input graph is enough to show that some optimum solution (which might not necessarily be $H$-minor-free) can be found in $2^{O(k)}\cdot n^{O(\sqrt{k})}$ time\footnote{This can be thought of as an intermediate requirement between the two extremes of either forcing the input graph itself to be $H$-minor-free versus the other extreme of finding an optimum solution which is $H$-minor-free. There has been some recent work in this direction~\cite{rajesh-andreas-ipec,rajesh-andreas-pasin}}.

\end{itemize}

Can we get a better speedup in planar graphs than the improvement from $O(k)$ to $O(\sqrt{k})$ in the exponent of $n$?
Our main hardness result matches our algorithm: it shows that $O(\sqrt{k})$ is
best possible under the Exponential Time Hypothesis (ETH).
\begin{theorem}
\label{thm:scss-main-hardness-planar-graphs} The edge-unweighted version of the SCSS problem is W[1]-hard
parameterized by the number of terminals $k$, even when the underlying
undirected graph is planar. Moreover, under ETH, the SCSS problem on planar
graphs cannot be solved in $f(k)\cdot n^{o(\sqrt{k})}$ time where $f$ is any
computable function, $k$ is the number of terminals and $n$ is the number of
vertices in the instance.
\end{theorem}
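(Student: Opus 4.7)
The plan is to reduce from \gt on a $g\times g$ grid, which under ETH admits no $f(g)\cdot n^{o(g)}$-time algorithm and is W[1]-hard in $g$. The target is a planar SCSS instance with $k=\Theta(g^{2})$ terminals; an $f(k)\cdot n^{o(\sqrt{k})}$-time algorithm for planar SCSS would then solve $g\times g$ \gt in $f'(g)\cdot n^{o(g)}$ time, contradicting ETH, and the W[1]-hardness claim follows at once.

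The construction arranges $g^{2}$ cell gadgets in the plane in a $g\times g$ pattern. The gadget $C_{i,j}$ represents the set $S_{i,j}$ from the \gt instance via $|S_{i,j}|$ parallel directed paths, one per element $(a,b)\in S_{i,j}$, plus a constant number of terminals placed on its boundary and shared only with its four neighbors. Edge orientations and path lengths are chosen so that the cheapest way for an SCSS solution to establish the required two-way reachability among the terminals sitting on the boundary of $C_{i,j}$ is to select exactly one of these parallel paths, thereby committing to some $(a,b)\in S_{i,j}$. Horizontal consistency between $C_{i,j}$ and $C_{i,j+1}$ is enforced by routing the ``right'' half of the chosen path through a shared boundary region whose total traversal length is minimal precisely when the first coordinates of the two chosen elements agree; vertical consistency is enforced symmetrically through the top/bottom boundaries on the second coordinate. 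Because SCSS demands paths in both directions between every pair of terminals, the full grid of gadgets is globally forced to be strongly connected, and a strongly connected subgraph meeting the budget exists iff a consistent choice, i.e.\ a valid tiling, exists.

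The main obstacle is the tight control on the number of terminals: a naive gadget that uses one terminal per element of $S_{i,j}$ would give $k=\Omega(g^{2}\cdot\max_{i,j}|S_{i,j}|)$ and only a lower bound of $n^{o(\sqrt{g^{2}\cdot s})}$, which is weaker than the upper bound. Our gadget must therefore use only $O(1)$ terminals per cell while still letting the SCSS demand distinguish consistent from inconsistent global choices; the combinatorial bulk has to be pushed into non-terminal vertices and into carefully calibrated path lengths. The second delicate point is preserving planarity in the directed setting: the ``return'' halves of the two-way reachability must be routed through a shared outer backbone so that each cell gadget remains local and the entire arrangement still embeds in the plane. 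Once these geometric and counting issues are resolved, both directions of correctness (tiling $\Rightarrow$ SCSS solution of the prescribed weight, and any solution within the budget $\Rightarrow$ tiling) should follow by a direct accounting argument on the path lengths inside each gadget and along the shared boundaries.
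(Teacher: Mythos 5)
Your high-level strategy matches the paper: reduce from $g\times g$ \gt, arrange gadgets in a $g\times g$ planar pattern, keep the number of terminals at $O(1)$ per cell so that $k=\Theta(g^{2})$, and transfer the $f(g)\cdot n^{o(g)}$ lower bound for \gt to $f(k)\cdot n^{o(\sqrt k)}$ for planar SCSS. You have also correctly identified the two genuine obstacles: controlling the terminal count, and forcing both directions of reachability through the gadgets while staying planar. So the route is right.

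However, the proposal stops precisely where the proof begins. The phrase ``once these geometric and counting issues are resolved, both directions of correctness \dots should follow by a direct accounting argument'' defers essentially all of the content. The paper's proof is dominated by two structural lemmas --- the existence of a connector gadget (Lemma~\ref{lem:connector-gadget}) and of a main gadget (Lemma~\ref{lem:main-gadget}) --- whose constructions occupy Sections~\ref{connector:proof} and~\ref{main:proof} and whose correctness needs a delicate hierarchy of weights ($B^5, B^4, \ldots$) together with a long chain of claims (``exactly one source edge,'' ``exactly $2n+1$ inrow up edges,'' monotonicity of the interrow columns, etc.). Nothing in your sketch explains how to get a budget argument that is tight enough to force a unique element of $S_{i,j}$ from only $O(1)$ terminals per cell; ``parallel directed paths, one per element of $S_{i,j}$, plus a constant number of terminals'' leaves entirely open how the SCSS demand can distinguish between choosing one path versus mixing fragments of several, which is exactly what the paper's weight scheme is designed to rule out. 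Moreover, the paper's construction relies on a specific trick that your sketch does not surface: the terminals live \emph{inside} a separate layer of connector gadgets between main gadgets, and the connector gadget deliberately lets the $t_i\leadsto t_j$ path and the reverse $t_j\leadsto t_i$ path share edges (via the shortcut edges $e_i,f_i$) so that the cheapest feasible edge set is forced to ``represent'' a single coordinate. Your sketch of a ``shared outer backbone'' plays a similar role to the paper's $x^\ast,y^\ast$ return structure, but it does not supply the crucial interaction between forward and reverse paths inside each gadget that makes the cost analysis close. As written, the proposal is a plausible blueprint but not a proof; the gadgets have to actually be built, and the budget analysis is where the work is.
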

This also answers the question of Guo et al.~\cite{guo-et-al}, who showed the
W[1]-hardness of these problems on general graphs and left the fixed-parameter
tractability status on planar graphs as an open question.
Recall that ETH has the consequence that
$n$-variable 3SAT cannot be solved in time $2^{o(n)}$~\cite{eth,eth-2}.
There are relatively few parameterized problems that
are W[1]-hard on planar
graphs~\cite{DBLP:conf/iwpec/BodlaenderLP09,DBLP:journals/mst/CaiFJR07,
DBLP:conf/iwpec/EncisoFGKRS09,DBLP:conf/icalp/Marx12}. The
reason for the scarcity of such hardness results is mainly because for most problems, the fixed-parameter tractability of finding a solution
of size $k$ in a planar graph can be reduced to a bounded-treewidth problem by standard layering techniques. However, in our case the
parameter $k$ is the number of terminals, hence such a simple reduction to the bounded-treewidth case does not seem to be
possible. Our reduction is from the \gt problem formulated by Marx~\cite{daniel-grid-tiling,DBLP:conf/icalp/Marx12} (see also \cite{fpt-book}), which is a
convenient starting point for parameterized reductions for planar problems.
For our reduction we need to construct two types of gadgets, namely the
connector gadget and main gadget, which are then arranged in a grid-like
structure (see Figure~\ref{fig:big-picture}). The main technical
part of the reduction is the structural result regarding the existence and
construction of particular types of connector gadgets
and main gadgets (Lemma~\ref{lem:connector-gadget} and Lemma~\ref{lem:main-gadget}). Interestingly, the construction of the
connector gadget poses a greater challenge: here we exploit in a fairly
delicate way the fact that the $t_i \leadsto t_j$ and the reverse $t_j\leadsto
t_i$ paths appearing in the solution subgraph might need to share edges to reduce the weight.

We present additional results that put our algorithm and lower bound for SCSS in a wider context.  Given our speedup for SCSS in planar
graphs, one may ask if it is possible to get any similar speedup in general graphs.  Our next result shows that the $n^{O(k)}$ algorithm of
Feldman-Ruhl is almost optimal in general graphs:
\begin{theorem}
\label{thm:scss-main-hardness-general-graphs} Under ETH, the edge-unweighted version of the SCSS problem cannot be solved in time $f(k)\cdot n^{o(k/\log k)}$ where $f$ is an arbitrary
computable function, $k$ is the number of terminals and $n$ is the number of vertices in the instance.
\end{theorem}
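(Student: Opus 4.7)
The plan is to reduce from the Partitioned Subgraph Isomorphism problem (PSI), which by the result of Marx cannot be solved in $f(k) \cdot n^{o(k/\log k)}$ time under ETH, where $k$ is the number of edges of the pattern graph $H$. Given an instance of PSI with pattern $H$ having edge set $E(H) = \{e_1, \ldots, e_k\}$ and host graph $G$ whose vertex set is partitioned into color classes $\{V_u : u \in V(H)\}$, I would construct an SCSS instance $(G', T)$ with $|T| = O(k)$ such that the optimal strongly connected Steiner subgraph has size at most some threshold $B$ if and only if $H$ embeds into $G$ respecting the color classes.

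The construction I have in mind has two types of gadgets. First, for each edge $e = \{u,v\} \in E(H)$ I introduce an edge gadget containing a single terminal $t_e$, built from parallel directed paths — one path for every candidate pair $(x,y) \in V_u \times V_v$ with $\{x,y\} \in E(G)$ — so that picking one path corresponds to selecting a PSI-image of $e$. Second, for each vertex $u \in V(H)$ I install a vertex-selector gadget whose cost scales linearly with the number of distinct representatives of $V_u$ used by the incident edge gadgets; thus, any solution within the budget $B$ is forced to commit to a single representative per color class. The edge and selector gadgets are wired together so that the edge-gadget choices and the vertex-selector commitments are identified for shared color classes, and the strong connectivity requirement on the $O(k)$ terminals $\{t_e : e \in E(H)\}$ compels the SCSS solution to thread a chosen path through every edge gadget. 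Correctness in both directions then follows: a valid color-preserving embedding $\phi : V(H) \to V(G)$ yields an SCSS solution of size exactly $B$ by activating the path corresponding to $(\phi(u), \phi(v))$ in each edge gadget, and conversely a budget-$B$ solution determines a unique representative per color class whose choices agree across shared vertices, producing a valid $\phi$.

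The main obstacle, as always for SCSS lower bounds, is that strong connectivity demands both $t_{e_i} \to t_{e_j}$ and $t_{e_j} \to t_{e_i}$ paths for every pair of edge terminals, and a naive gadget would force the budget to essentially double (and worse, might allow the reverse paths to cheat by using different vertex choices than the forward paths). I would resolve this by structuring each edge gadget as an oriented cycle rather than as two independent paths: the forward traversal picks the candidate pair $(x,y)$, while the return traversal uses a shared backbone so that the reverse direction is essentially free once the forward choices are fixed, and crucially the reverse backbone does not grant access to any $V_u$-vertex other than the one already selected. A separate ``ring backbone'' linking the edge gadgets cyclically would give the inter-gadget strong connectivity at an additive cost independent of the PSI solution.

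Since $|T| = O(k)$ and the construction has $|V(G')| = n^{O(1)}$ vertices, a hypothetical algorithm for SCSS running in $f(|T|) \cdot |V(G')|^{o(|T|/\log |T|)}$ time would solve PSI in time $f'(k) \cdot n^{o(k/\log k)}$, contradicting ETH and giving Theorem~\ref{thm:scss-main-hardness-general-graphs}. The W[1]-hardness in the number of terminals on general graphs is already implied by Guo et al.~\cite{guo-et-al}, but the reduction above re-proves it as a byproduct of the tight ETH bound.
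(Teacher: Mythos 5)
Your high-level strategy is exactly the one the paper uses: reduce from \csi (PSI), exploit Marx's $f(r)\cdot n^{o(r/\log r)}$ ETH lower bound for PSI with $r=|E_G|$, and make the number of SCSS terminals linear rather than quadratic in $|E_G|$ (the quadratic blowup being what limits the Guo et al.\ reduction to $n^{o(\sqrt{k})}$). That linear-parameter reduction from PSI is the entire content of the theorem, and you identify it correctly.

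Where you diverge is the gadget architecture, and there the paper is both simpler and more decisive than your sketch. The paper's terminal set is $T^*=B\cup F$ with one terminal $b_i$ per pattern vertex and two terminals $f_{ij},f_{ji}$ per pattern edge, still $O(|E_G|)$ in total. The $b_i$ terminals are the crucial device you discard: they are the ``vertex-selector chokepoints,'' and because each $b_i$ is itself a terminal, strong connectivity plus a budget of exactly three ``black'' edges per $b_i$ forces the solution to commit to a single 3-cycle $b_i\to h_v\to c_v\to b_i$, i.e., a single host vertex $v\in H_i$. In your plan the vertex-selector gadget carries no terminal, so the connectivity requirement cannot be applied to it directly; you rely on an unspecified cost that ``scales linearly with the number of distinct representatives,'' but you do not say what forces any path to enter the selector at all, nor why a solution cannot bypass it. That is the genuine gap: without a terminal (or some equivalent unavoidable structure) inside the per-vertex gadget, the budget argument alone does not obviously rule out inconsistent choices. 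Similarly, your ``oriented cycle per edge gadget plus ring backbone'' is not how the paper resolves the bidirectional-path issue; the paper uses the $b_i$ terminals as a shared hub (every terminal reaches every $b_i$ and vice versa), which makes the global strong-connectivity argument a few lines once the local cycles are pinned down. Your ring backbone could plausibly be made to work, but it would need a careful accounting that the reverse traversal cannot ``leak'' into a different color-class representative, which is precisely what you flag as the difficulty and then leave unaddressed. In short: same theorem, same source problem, same parameter accounting, but your gadget design omits the paper's central mechanism (terminals on the vertex selectors) and replaces it with a sketch whose correctness is not established.
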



Our proof of Theorem~\ref{thm:scss-main-hardness-general-graphs} is similar to the W[1]-hardness proof of Guo et al.~\cite{guo-et-al}. They showed the W[1]-hardness of SCSS on general graphs parameterized by the number $k$ of terminals by giving a reduction from \textsc{$k$-Clique}. However, this reduction uses ``edge selection gadgets'' and since a $k$-clique has
$\Theta(k^2)$ edges, the parameter is increased at least to $\Theta(k^2)$. Combining with the result of Chen et al.~\cite{chen-hardness} regarding the non-existence of an $f(k)\cdot
n^{o(k)}$ algorithm for ${k}$-Clique under ETH, this gives a lower bound of $f(k)\cdot n^{o(\sqrt{k})}$ for SCSS on general graphs.
To avoid the quadratic blowup in the parameter and thereby get a stronger lower bound, we use the \csi (PSI) problem as the source problem of our reduction. For this problem, Marx~\cite{marx-beat-treewidth} gave a $f(k)\cdot n^{o(k/\log k)}$ lower bound under ETH, where $k=|E(G)|$ is the number of edges of the subgraph $G$ to be found in graph $H$. The reduction of Guo et al.~\cite{guo-et-al} from \textsc{Clique} can be turned into a reduction from PSI which uses only $|E(G)|$ edge selection
gadgets, and hence the parameter is $\Theta(|E(G)|)$. Then the lower bound of $f(k)\cdot n^{o(k/\log k)}$ transfers from PSI to SCSS.
A natural question is whether we can close the $O(\log k)$ factor in the exponent: however, our reduction is from the PSI problem and the best known lower bound for PSI also has such a gap~\cite{marx-beat-treewidth}. Note that there are many other parameterized problems for which the only known way of proving almost tight lower bounds is by a similar reduction from PSI, and hence an $O(\log k)$ gap appears for these problems as well~\cite{DBLP:conf/esa/MarxP15,DBLP:journals/jcss/JansenKMS13,DBLP:conf/stoc/CurticapeanDM17,DBLP:journals/siamdm/JonesLRSS17,DBLP:conf/focs/CurticapeanX15,DBLP:conf/esa/BonnetM16,DBLP:journals/algorithmica/GuoHNS13,DBLP:journals/dam/BonnetS17,DBLP:conf/esa/Bringmann0MN16,DBLP:journals/corr/abs-1808-02162,DBLP:journals/corr/LokshtanovRSZ17,DBLP:conf/iwpec/BonnetGL17,logk-1,rajesh-andreas-pasin,fahad-dsn,logk-2}.


Even though Feldman and Ruhl were able to generalize their $n^{O(k)}$ time algorithm from SCSS to DSN, we show that, surprisingly, such a
generalization is not possible for our $2^{O(k)}\cdot n^{O(\sqrt{k})}$ time algorithm for planar SCSS.
\begin{theorem}
\label{thm:dsn-w[1]-hardness} The edge-unweighted version of the \dsn problem is W[1]-hard parameterized by
the number $k$ of terminal pairs, even when the input is restricted to planar
directed acyclic graphs (planar DAGs). Moreover, there is no $f(k)\cdot
n^{o(k)}$ time algorithm for any computable function $f$, unless the ETH fails.
\end{theorem}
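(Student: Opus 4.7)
The plan is a parameter-linear reduction from \gt, which by Marx~\cite{daniel-grid-tiling,DBLP:conf/icalp/Marx12} admits no $f(k)\cdot n^{o(k)}$ algorithm under ETH. An instance of \gt is a $k\times k$ array of sets $S_{i,j}\subseteq[n]\times[n]$, and a solution picks row values $a_i\in[n]$ and column values $b_j\in[n]$ with $(a_i,b_j)\in S_{i,j}$ for every $(i,j)$. Since the reduction will produce a planar DAG with only $O(k)$ terminal pairs, the ETH lower bound transfers directly to \dsn.

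I would introduce one terminal pair $(s_i^R,t_i^R)$ per row $i$ (whose path encodes $a_i$) and one pair $(s_j^C,t_j^C)$ per column $j$ (encoding $b_j$), giving $2k$ pairs in total. The graph is laid out as a $k\times k$ grid of cell gadgets $G_{i,j}$, with the right boundary of $G_{i,j}$ identified with the left boundary of $G_{i,j+1}$ and the bottom of $G_{i,j}$ with the top of $G_{i+1,j}$; commitment widgets connect $s_i^R$ to the left boundary of $G_{i,1}$ and $t_i^R$ to the right of $G_{i,k}$, and analogously for columns. Each cell $G_{i,j}$ is an $n\times n$ internal grid in which position $(a,b)$ is a single shared vertex $g_{i,j}^{a,b}$ (lying on both the row-$a$ chain and the column-$b$ chain) if $(a,b)\in S_{i,j}$, and is split into two distinct vertices $g^{a,b}_R, g^{a,b}_C$ otherwise. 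With all edges pointing strictly rightward along row chains and downward along column chains, the entire graph is a planar DAG.

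The natural clean solution, in which each row $i$ traverses a single row-$a_i$ chain throughout all $k$ of its cells and each column $j$ a single column-$b_j$ chain, uses $2nk^2+O(k^2)$ vertices minus one for every cell $(i,j)$ whose crossing vertex $g_{i,j}^{a_i,b_j}$ is of the shared kind. Sharing at $(i,j)$ is possible exactly when $(a_i,b_j)\in S_{i,j}$, so at most $k^2$ vertices can be saved; the threshold $W:=2nk^2+O(k^2)-k^2$ is attained iff every cell admits sharing, iff the \gt instance is a \textsc{Yes}-instance. Combined with the ETH lower bound for \gt, this yields both the W[1]-hardness and the claimed $f(k)\cdot n^{o(k)}$ lower bound for \dsn on planar DAGs.

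The main obstacle is the soundness direction. Shared vertices allow a row path to detour into a column chain and, since edges only go rightward and downward, re-emerge on a strictly higher row chain, so in principle a path could change row values across cells (and analogously for columns), potentially harvesting extra sharings with other terminal pairs. A careful exchange argument is needed to show that each such detour either costs as many extra vertices along the column chain it traverses as it could ever recoup through sharing, or --- if the detour runs precisely along the very column chain selected by that column's terminal --- is cost-neutral but then the assignment read off from the cells is a monotonically increasing sequence of row values, i.e., a solution to the $\le$ variant of \gt, which has the same ETH lower bound. Either way the threshold $W$ is met iff the original \gt instance is solvable, completing the reduction.
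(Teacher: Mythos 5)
Your high-level plan matches the paper's: reduce from $k\times k$ \gt, lay out a $k\times k$ array of $n\times n$ grid cells, introduce $2k$ terminal pairs (one per row and one per column), orient edges rightward/downward to get a planar DAG, and encode $S_{i,j}$ by allowing a saving exactly when the row path and column path cross at a pair in $S_{i,j}$. The gap is in soundness, and your own text points at it without closing it.

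The paper resolves the level-hopping problem with an explicit weight mechanism: each canonical horizontal path $P_i^\ell$ starts with a blue edge of weight $\Delta(n+1-\ell)$ and ends with a blue edge of weight $\Delta\ell$, for $\Delta=5n^2$. A path that enters at level $\ell$ and exits at level $\ell'>\ell$ then pays an extra $\Delta(\ell'-\ell)\geq\Delta$, which exceeds the maximum conceivable savings from all $k^2$ shortcuts. This forces each horizontal path to be ``almost canonical'' (same start and end level) and each vertical path to be fully canonical (Claims~\ref{claim:vertical-canonical} and \ref{claim:horizontal-canonical}), after which the counting is immediate. The paper proves the result for polynomially-bounded edge weights and then passes to the unweighted version by subdividing edges, which preserves both planarity and acyclicity.

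Your construction has no analogue of these $\Delta$-penalties. The ``commitment widgets'' are mentioned but never specified, and nothing in your gadget stops a row path from sliding to a larger row index across cells by riding an already-selected column chain for free. You acknowledge this and sketch an exchange argument, but the argument you sketch does not close the gap: you conclude that a cost-neutral sliding solution yields a \emph{solution to the $\leq$ variant} of the instance, and then assert ``either way the threshold $W$ is met iff the original \gt instance is solvable.'' That last step is a non sequitur. A $\leq$-solution of a derived \gt instance is not an exact-equality solution of that instance, so you would either have to reduce from the $\leq$ variant of \gt from the start (and then re-verify completeness and all the bookkeeping against that variant), or you need an actual penalty mechanism that rules sliding out. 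As written, a YES answer to DSN could correspond to a NO instance of the \gt problem you reduced from.

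There is a second, smaller issue in the sliding discussion: even when a detour runs along the selected column chain, it is not automatically cost-neutral end to end, because the row path must later leave that column chain and re-enter the horizontal structure at the new level, and you have to account for the black edges that this entails across the remaining cells. The paper's weight device subsumes all of this in one inequality; your sketch would need to track these extra vertices explicitly and show they always dominate the potential extra savings. Until that exchange argument is made precise (or a $\Delta$-style penalty is reintroduced), the soundness direction is unproven.
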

This implies that the Feldman-Ruhl algorithm for DSN is optimal, even on planar directed acyclic graphs. As in our lower bound for planar SCSS, the
proof is by a reduction from an instance of the $k\times k$~\gt problem.
However, unlike in the reduction to SCSS where we needed
$O(k^2)$ terminals, the reduction to DSN needs only $O(k)$ pairs of terminals (see Figure~\ref{fig:dsn}). Since the parameter blowup is linear, the
$f(k)\cdot n^{o(k)}$ lower bound for \gt from~\cite{daniel-grid-tiling} transfers to DSN.

\textbf{Remark:} All our hardness results (Theorem~\ref{thm:scss-main-hardness-planar-graphs}, Theorem~\ref{thm:scss-main-hardness-general-graphs} and Theorem~\ref{thm:dsn-w[1]-hardness}) are presented for weighted-edge versions with polynomially-bounded integer weights (including edges with weight zero). By splitting each edge of weight $W$ into $W$ edges of weight one, all the results also hold for the unweighted-edge version. Our algorithm (Theorem~\ref{thm:algo-sqrt-k-h-minor-free}) is presented for the weighted-vertex version.
Appendix~\ref{appendix:vertex-general-than-edge} shows that the unweighted-vertex version is more general than the weighted-edge version. Hence all our lower bounds also hold for the (un)weighted-vertex version too.

Finally, instead of parameterizing by the number of terminals, we can consider parameterization by the number of edges/vertices of the solution. Let us
briefly and informally discuss this parameterization.  Note that the number of terminals is a lower bound on the number of edges/vertices
of the solution (up to a factor of 2 in the case of DSN parameterized by the number of edges), thus fixed-parameter tractability could be
easier to obtain by parameterizing with the number of edges/vertices. However,
our lower bound for SCSS on general graphs
(as well as the W[1]-hardness of Guo et al.~\cite{guo-et-al}) actually proves hardness also
with these parameterizations, making fixed-parameter tractability unlikely. On the other hand, it follows from standard techniques that
both SCSS and DSN are FPT on planar graphs when parameterizing by the number $k$ of edges/vertices in the solution. The main argument here
is that the solution is fully contained in the $k$-neighborhood of the
terminals, whose number is at most $2k$.
It is known that the $k$-neighborhood of $O(k)$ vertices in a planar graph has
treewidth $O(k)$, and thus one can use standard techniques on bounded-treewidth
graphs (dynamic programming or Courcelle's Theorem). Alternatively, at least in the unweighted case, one can formulate the problem as a first
order formula of size depending only on $k$ and then invoke the result of Frick and Grohe~\cite{DBLP:journals/jacm/FrickG01} stating that
such problems are FPT. Therefore, as fixed-parameter tractability is easy to establish on planar graphs, the challenge here is to obtain
optimal dependence on $k$. One would expect that sub-exponential dependence on~$k$
(e.g., $2^{O(\sqrt{k})}$ or $k^{O(\sqrt{k})}$) should be possible at least for SCSS, but this is
not yet fully understood even for undirected
\st~\cite{DBLP:conf/stacs/PilipczukPSL13}. A slightly different parameterization
is to consider the number $k$ of {\em non-terminal} vertices in the solution,
which can be much smaller than the number of terminals. This leads to problems
of somewhat different flavour, see
e.g.~\cite{DBLP:conf/stacs/DvorakFKMTV18,DBLP:journals/siamdm/JonesLRSS17}.

\subsection{Further related work}

Subsequent to the conference version~\cite{DBLP:conf/soda/ChitnisHM14} of this
paper, there have been several related results. Chitnis et al.~\cite{khandekar}
considered a variant of SCSS with only 2 terminals but with a requirement of
multiple paths. Formally, in the $2$-SCSS-$(k_1, k_2)$ problem we are given two
vertices $s,t$ and the goal is to find a min weight subset $H\subseteq E(G)$
such that $H$ has $k_1, k_2$ paths from $s\leadsto t, t\leadsto s$,
respectively. The objective function is given by $\text{cost}(H) = \sum_{e\in H}
\phi(e)\cdot \text{cost}(e)$ where $\phi(e)$ is the maximum number of times $e$
appears on $s\leadsto t$ paths and $t\leadsto s$ paths. Chitnis et
al.~\cite{khandekar} showed that the $2$-SCSS-$(k, 1)$ problem can be solved in
$n^{O(k)}$ time for any $k\geq 1$, and has a $f(k)\cdot n^{o(k)}$ lower bound
under ETH.

Such{\'y}~\cite{suchy-wg} introduced a generalization of DST and SCSS called
the {\sc $q$-Root Steiner Tree} ($q$-RST) problem. In this problem, given a set
of $q$ roots and a set of $k$ leaves, the task is to find a minimum-cost
network where the roots are in the same strongly connected component and every
leaf can be reached from every root. Generalizing the token game of Feldman and
Ruhl~\cite{feldman-ruhl}, Such{\'y}~\cite{suchy-wg} designed a $2^{O(q)}\cdot
n^{O(k)}$ algorithm for $q$-RST.

Recently, Chitnis et al.~\cite{rajesh-andreas-pasin} considered the SCSS and DSN
problems on bidirected graphs: these are directed graphs with the guarantee
that for every edge $(u,v)$ the reverse edge $(v,u)$ exists and has the same
weight. They showed that on bidirected graphs, the DSN problem stays W[1]-hard
parameterized by $k$ but SCSS becomes FPT (while still being NP-hard). In fact,
under ETH, no $f(k)n^{o(k/\log k)}$ time algorithm for DSN on bidirected graphs
exists, and thus the problem is essentially as hard as for general directed
graphs. For bidirected planar graphs however, Chitnis et
al.~\cite{rajesh-andreas-pasin} show that DSN can be solved in
$2^{O(k^{3/2}\log k)}n^{O(\sqrt{k})}$, which is in contrast to
Theorem~\ref{thm:dsn-w[1]-hardness}.
Some FPT approximability and inapproximability results for SCSS and DSN were
also shown in~\cite{rajesh-andreas-pasin,rajesh-andreas-ipec}.

\textbf{Pattern graphs and DSN:} The set of pairs $\{(s_i, t_i)\ :\ i\in [k]\}$ in the input of DSN can be
interpreted as a directed (unweighted) pattern graph on a set
$R=\bigcup_{i=1}^{k} \{s_i, t_i\}$ of terminals. For a graph class
$\mathcal{H}$, the $\mathcal{H}$-DSN problem takes as input a directed graph
$H\in \mathcal{H}$ on vertex set $R$ and the goal is to find a minimum cost
subgraph $N\subseteq E(G)$ such that $N$ has an $s\leadsto t$ path for each
$(s,t)\in E(H)$. Thus for a fixed class $\mathcal{H}$ of pattern graphs, the $\mathcal{H}$-DSN problem is a restricted special case of the general DSN problem, and it is possible that  $\mathcal{H}$-DSN is FPT (for example, if $\mathcal{H}$ is the class of out-stars). Feldmann and Marx~\cite{andreas-dm-arxiv} gave a complete
dichotomy for which graph classes the $\mathcal{H}$-DSN problem is FPT or
W[1]-hard parameterized by $|R|$.

Given an instance of DSN with the pattern graph $H=(R,A)$ on the terminal set
$R$ with $|A|=k$, the algorithm of Feldman and Ruhl~\cite{feldman-ruhl} runs in
$n^{O(k)}$ time. The $f(k)\cdot n^{o(k)}$ lower bound under ETH for DSN in this
paper (Theorem~\ref{thm:dsn-w[1]-hardness}) has $|A|=O(|R|)$. Hence, for the
parameter $|R|$ we have a lower bound of $f(|R|)\cdot n^{o(|R|)}$ and an upper
bound of $n^{O(|R|^2)}$ (since $|A|=O(|R|^2)$ in the worst case). Recently,
Eiben et al.~\cite{fahad-dsn} essentially closed this gap by showing a lower
bound of $f(|R|)\cdot n^{o(|R|^2 /\log |R|)}$ under ETH for DSN. They also gave
an algorithm for DSN on bounded genus graphs: for graphs of genus $g$, the
algorithm runs in $f(|R|)\cdot n^{O_{g}(|R|)}$ time where $O_{g}(\cdot)$ hides
constants depending only on $g$.


%
%
%
%
%


\section{Improved algorithm for SCSS on planar graphs}

In this section we describe the proof to
Theorem~\ref{thm:algo-sqrt-k-h-minor-free}, i.e., we present an algorithm to
solve SCSS on planar graphs in $2^{O(k)}\cdot n^{O(\sqrt{k})}$ time. The definitions
of some of the graph-theoretic notions used in this section such as treewidth
and minors are deferred to Appendix~\ref{appendix:gt-defns} to maintain the flow
of presentation. The key is to analyze the structure of \emph{edge-minimal
solutions}, i.e., subgraphs of the input graph $G$ (induced by some set
$S\subseteq V$) containing all terminals for which no edge can be removed
without also removing all $s\leadsto t$ paths for some terminal pair $(s,t)$. We
show that for an edge-minimal solution $M$ of the SCSS problem there is a vertex set
$W\subseteq V(M)$ of size $O(k)$ such that, after removing $W$ from $M$, each
component has constant treewidth. More formally, we define a
\emph{$W_M$-component} as a connected component of the (underlying undirected) graph induced by
$V(M)\setminus W$ in $M$, and prove the following.




\begin{lemma}\label{lem:W}
For any edge-minimal solution $M$ to the edge-weighted SCSS problem there is a set of at
most $9k$ vertices $W\subseteq V(M)$ for which every $W_M$-component has treewidth at most $2$.
\end{lemma}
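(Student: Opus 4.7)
The plan is to take $W := T \cup B$, where $B \subseteq V(M) \setminus T$ consists of all non-terminal vertices whose degree in the underlying undirected graph $\tilde M$ of $M$ is at least~$3$. With this choice every vertex of $\tilde M \setminus W$ has undirected degree at most~$2$, so each $W_M$-component is a disjoint union of paths and cycles and therefore has treewidth at most~$2$. The lemma thus reduces to showing $|B| \le 8k$.

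As a preliminary observation, edge-minimality forbids any non-terminal vertex of $\tilde M$-degree at most~$1$: such a vertex $v$ would have its sole in-edge and sole out-edge going to the same neighbour, forming a $2$-cycle whose two edges cannot lie on any simple $t_i \leadsto t_j$ path and could hence be removed. In particular, the connected graph $\tilde M$ has at most $k$ vertices of degree~$1$. The handshake lemma then gives
\[
|B| \;\le\; 2\,\bigl(|E(\tilde M)| - |V(M)| + 1\bigr) \;+\; k - 2,
\]
so it suffices to bound the cyclomatic number of $\tilde M$ by roughly $4k$. To this end, fix any terminal $r \in T$ and choose a spanning out-arborescence $A_1$ and a spanning in-arborescence $A_2$ of $M$ rooted at $r$; both exist because $M$ is strongly connected. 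Since $A_1 \cup A_2 \subseteq M$ is itself a strongly connected subgraph containing~$T$, edge-minimality of $M$ forces $M = A_1 \cup A_2$. Because each of $A_1, A_2$ contains exactly $|V(M)|-1$ edges, a short calculation gives
\[
c(\tilde M) \;=\; |V(M)| - 1 - s,
\]
where $s$ is the number of undirected edges contained in both $\tilde A_1$ and $\tilde A_2$. The lemma is therefore reduced to choosing $A_1, A_2$ so that their underlying trees agree on all but $O(k)$ undirected edges.

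The main obstacle will be this final step. The naive bound $|E(M)| \le 2|V(M)|-2$, which holds for any minimally strongly connected digraph, is not strong enough here since $|V(M)|$ is not controlled by~$k$. The key is to exploit the finer fact that in an edge-minimal SCSS every edge must be essential on some simple $t_i \leadsto t_j$ path: each undirected edge on which $\tilde A_1$ and $\tilde A_2$ disagree can be charged to a terminal pair whose essentiality witness forces the two arborescences to diverge at that edge, and a careful collision analysis confines the total number of such divergent edges to a constant multiple of~$k$. Combining everything, $c(\tilde M) = O(k)$, and the inequality above then yields $|B| \le 8k$, giving the desired $|W| \le 9k$.
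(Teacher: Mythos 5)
Your decomposition --- take $W = T \cup B$, where $B$ is the set of non-terminal vertices of undirected degree at least~$3$ in $\tilde M$, so that every $W_M$-component has maximum degree~$2$ and hence treewidth at most~$2$ --- correctly reduces the lemma to bounding $|B|$. The bookkeeping (handshake lemma, $M = A_1 \cup A_2$, the identity $c(\tilde M) = |V(M)| - 1 - s$) is also fine. But the crucial bound $|B| \le 8k$, equivalently $c(\tilde M) = O(k)$, equivalently that $\tilde A_1$ and $\tilde A_2$ agree on all but $O(k)$ edges, is \emph{false} for edge-minimal SCSS solutions. This is precisely the step you defer as the ``main obstacle,'' and there is no way to complete the proposed charging, because there are not enough terminals (or terminal pairs) to charge against.

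Here is a counterexample with $k=2$. Take terminals $a,b$ and non-terminals $u_1,v_1,\dots,u_h,v_h$, and let $M$ be the directed ``weave'' of width two and length $h$ depicted (for $h=4$) in Figure~\ref{fig:tw-2}, with $a$ attached above and $b$ below: the unique $a\leadsto b$ path and the unique $b\leadsto a$ path each have length $2h+1$, they share exactly the $h$ ``rung'' edges $\{u_i,v_i\}$, and they traverse these rungs in opposite orders, alternating between the left and right columns. Since each of the two required paths is unique, $M$ is edge-minimal. Here $A_{\out}$ and $A_{\inn}$ are single Hamiltonian paths, so $s=h$, giving $c(\tilde M) = (2h+2) - 1 - h = h+1$; and every one of the $2h$ non-terminal vertices has undirected degree~$3$, so $|B| = 2h$. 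Both quantities are unbounded as $h\to\infty$ even though $k=2$. Every non-rung edge is an edge on which $\tilde A_1$ and $\tilde A_2$ disagree, and there are $2h+2$ of them, far exceeding any $O(k)$ or even $O(k^2)$ budget.

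This example is exactly what forces the paper into a more delicate argument. The paper's $W_M$ does \emph{not} collect all high-degree vertices; it collects only the \emph{branching points} of the two arborescences (vertices of in-degree $\geq 2$ in $A_{\inn}$ or out-degree $\geq 2$ in $A_{\out}$), which number at most $2k$ because each arborescence has at most $k$ leaves, together with the $k$ terminals and a constant number of shared-path endpoints per such vertex. In the weave example there are no branching points at all, so $W_M = \{a,b\}$ and the entire weave is a single $W_M$-component with unboundedly many degree-$3$ vertices and unbounded cyclomatic number. The paper then shows (Claims~\ref{clm:2paths} and~\ref{clm:order}) that each such component consists of at most two essential paths whose shared segments appear in reversed order, i.e.\ is a subdivision of an outerplanar graph, and this outerplanarity --- not any bound on the cyclomatic number --- is what yields treewidth at most $2$. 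The missing step in your proof is genuinely unrecoverable as stated.
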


We defer the proof of Lemma~\ref{lem:W} to Section~\ref{subsec:bounding-treewidth}. First, we see how we can use Lemma~\ref{lem:W} to bound the treewidth of the minimal solution $M$.

\begin{lemma}
If an edge-minimal solution $M$ to edge-weighted SCSS is planar (or excludes some fixed minor),
then its treewidth is~$O(\sqrt{k})$.
\label{lem:tw-sqrt-k}
\end{lemma}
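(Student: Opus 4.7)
The plan is to derive the bound by combining Lemma~\ref{lem:W} with an excluded grid minor theorem. Suppose for contradiction that $\tw(M) > c\sqrt{k}$ for a sufficiently large constant $c$. Since $M$ is planar (or excludes a fixed minor $H$), the excluded grid theorem for planar graphs~\cite{planar-grid-theorem} (respectively its $H$-minor-free analogue~\cite{H-minor-free-grid-theorem}) then produces an $r \times r$ grid minor of $M$ with $r > 9\sqrt{k}$. Let $\{B_{i,j}\}_{1 \le i,j \le r}$ be the corresponding pairwise disjoint, connected branch sets in $V(M)$.

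Next, apply Lemma~\ref{lem:W} to obtain a set $W \subseteq V(M)$ with $|W| \le 9k$ such that every $W_M$-component of $M$ has treewidth at most~$2$. Since the branch sets are pairwise disjoint, at most $9k$ of them meet $W$; call the remaining ones \emph{good}. Partition the $r \times r$ grid into $\lfloor r/3 \rfloor^{2} > 9k$ disjoint $3 \times 3$ subgrids (blocks). Each bad branch set sits in exactly one block, so by pigeonhole at least one block $B$ consists entirely of good branch sets.

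The $9$ good branch sets of $B$, together with the grid edges among them, exhibit a $3 \times 3$ grid as a minor of $M \setminus W$. Because the $3 \times 3$ grid is connected and all the realising vertices and edges avoid $W$, these $9$ branch sets lie in a single $W_M$-component. A direct verification shows that the $3 \times 3$ grid contains $K_{4}$ as a minor (e.g.\ with branch sets $\{(1,1),(1,2),(1,3),(2,3)\}$, $\{(3,1),(3,2),(3,3)\}$, $\{(2,1)\}$, $\{(2,2)\}$), hence has treewidth~$3$. Consequently this $W_M$-component has treewidth at least~$3$, contradicting Lemma~\ref{lem:W}. Therefore $\tw(M) \le c\sqrt{k} = O(\sqrt{k})$.

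The only place requiring care is the choice of $c$, so that the excluded grid theorem indeed yields a grid with side strictly greater than $9\sqrt{k}$: both the planar version and the $H$-minor-free version provide a linear dependence of the guaranteed grid size on treewidth (with the constant in the $H$-minor-free case depending only on $H$), so any sufficiently large $c$ works. The substantive content of the proof is therefore entirely packaged into Lemma~\ref{lem:W}, while the step above is a clean pigeonholing on branch sets.
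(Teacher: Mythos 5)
Your proof is correct and follows essentially the same route as the paper's: assume treewidth exceeds $c\sqrt{k}$, extract a large grid minor via the excluded grid theorem, partition it into more than $9k$ vertex-disjoint $3\times 3$ subgrids, and pigeonhole against the set $W$ from Lemma~\ref{lem:W} to find a $3\times 3$ grid minor sitting inside a single $W_M$-component, contradicting the treewidth-$2$ bound. The only cosmetic differences are that you supply an explicit $K_4$-minor to certify treewidth~$3$ (the paper just asserts it) and you phrase the disjoint-subgrid count as a block partition rather than a direct count of disjoint $3\times 3$ grids; as you note, the constant in front of $\sqrt{k}$ just needs to be taken large enough for the pigeonhole inequality to hold, which matches the paper's use of $\lceil 20\sqrt{k}\rceil$.
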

\begin{proof}
By the planar grid theorem~\cite{planar-grid-theorem}, there is a constant $c_{\text{Planar}}$ such that any planar graph $G$ with treewidth $c_{\text{Planar}}\cdot \omega$ has a $\omega \times \omega$ grid minor.
If the treewidth of $M$ is at least $c_{\text{Planar}}\cdot \lceil 20\sqrt{k} \rceil$, then it follows that $M$ has a $\lceil 20\sqrt{k} \rceil\times
\lceil 20\sqrt{k} \rceil$ grid minor $M'$. It is easy to see that $M'$ contains at least $\lfloor \frac{\lceil 20\sqrt{k}\rceil}{3} \rfloor   \cdot \lfloor \frac{\lceil 20\sqrt{k}\rceil}{3} \rfloor$ (pairwise vertex-disjoint) grids of size~$3\times 3$. For each $k\geq 1$, one can easily verify that $\lfloor \frac{\lceil 20\sqrt{k}\rceil}{3} \rfloor \geq \lceil 4\sqrt{k}\rceil$, and hence the number of pairwise vertex-disjoint $3\times 3$ grids is at least $\lceil 4\sqrt{k}\rceil \cdot \lceil 4\sqrt{k}\rceil \geq 4\sqrt{k}\cdot 4\sqrt{k} =16k$.
By Lemma~\ref{lem:W}, there is a set of vertices $W$ of size $9k$
whose deletion makes every $W_M$-component have treewidth at most $2$.
Since $16k>9k$, it follows that $W$ does not contain a vertex from at least one
of the (pairwise vertex-disjoint) $16k$ grid minors of size $3\times 3$ in~$M$. Hence, there is a
$W_{M}$-component, which contains a $3\times 3$ grid minor, and hence has
treewidth at least~$3$, which is a contradiction.

For the case when the input graph is $H$-minor-free for some fixed graph $H$,
we can instead use the excluded grid-minor theorem of Demaine and
Hajiaghayi~\cite{H-minor-free-grid-theorem}: for any fixed graph $H$, there is a constant $c_{H}$ (which depends only on $|H|$) such that any $H$-minor-free graph of treewidth at least $c_{H}\cdot \omega$ has a
$\omega\times \omega$ grid as a minor.
\end{proof}


To prove Theorem~\ref{thm:algo-sqrt-k-h-minor-free}, which is restated below,
we invoke an algorithm of~\cite{andreas-dm-arxiv} to find the optimum solution
of bounded treewidth. The algorithm of~\cite{andreas-dm-arxiv} is designed for the edge-weighted version, and we state below the corresponding statement for the more general unweighted vertex version (so that it may also be of future use).

\begin{theorem}
\label{thm:dm-andreas-vertex-version} (\textbf{generalization of~\cite[Theorem~5]{andreas-dm-arxiv}})
If there is an optimum solution to an instance on $k$ terminals of the vertex-weighted version of SCSS
which has treewidth at most $\omega$, then an optimum solution\footnote{Not necessarily the same optimum solution as the one mentioned in the first part of this theorem. For example, the actual optimum found by this algorithm might have treewidth much larger than $\omega$.} can be found in $2^{O(k+\omega \cdot \log
\omega)}\cdot n^{O(\omega)}$ time.
\end{theorem}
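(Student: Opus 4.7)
The plan is to reduce the vertex-weighted SCSS problem to the edge-weighted SCSS problem (for which \cite[Theorem~5]{andreas-dm-arxiv} already applies) via the standard in/out vertex-splitting transformation, and then verify that (i) the number of terminals is preserved, (ii) the treewidth of the target solution blows up by at most a factor of two, and (iii) the correspondence between solutions preserves costs.

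First I would construct $G' = (V', E')$ from the given vertex-weighted graph $G = (V, E, w)$ by replacing each vertex $v \in V$ with two copies $v^{\inn}$ and $v^{\out}$, adding a ``splitting'' arc $(v^{\inn}, v^{\out})$ of weight $w(v)$, and replacing every original arc $(u,v) \in E$ by an arc $(u^{\out}, v^{\inn})$ of weight~$0$. The new terminal set is $T' = \{t_i^{\inn} : i \in [k]\}$, so $|T'| = k$. The key structural observation is that the only outgoing arc from any $v^{\inn}$ is $(v^{\inn}, v^{\out})$, so any directed path leaving $v^{\inn}$ necessarily traverses this arc. Hence a subgraph of $G'$ contains a $t_i^{\inn} \leadsto t_j^{\inn}$ path for every $i \neq j$ if and only if the corresponding vertex subset of $G$ induces a feasible SCSS solution, and because the nonzero-weight arcs in $G'$ are exactly the splitting arcs, the edge-weighted cost in $G'$ equals the vertex-weighted cost in $G$.

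Second I would verify the treewidth bound. Let $M \subseteq G$ be an optimum solution of vertex-weighted SCSS with treewidth at most $\omega$, and let $M' \subseteq G'$ be its image under the transformation. Given any tree decomposition of $M$ of width $\omega$, replace each occurrence of a vertex $v$ in every bag by the pair $\{v^{\inn}, v^{\out}\}$; this yields bags of size at most $2(\omega+1)$ and is easily checked to be a valid tree decomposition of $M'$: every split vertex appears in the bags originally holding $v$; each splitting arc $(v^{\inn}, v^{\out})$ is covered wherever $v$ was; and each arc $(u^{\out}, v^{\inn})$ is covered in a bag containing both $u$ and $v$, which exists because $(u,v) \in E(M)$. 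Thus $M'$ has treewidth at most $\omega' := 2\omega + 1$.

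Third I would invoke \cite[Theorem~5]{andreas-dm-arxiv} on the edge-weighted instance $(G', T')$ with treewidth parameter $\omega'$. The algorithm runs in time
\[
2^{O(|T'| + \omega' \log \omega')} \cdot |V(G')|^{O(\omega')}
\;=\; 2^{O(k + \omega \log \omega)} \cdot n^{O(\omega)},
\]
where $|V(G')| = 2n$ contributes only an extra factor that can be absorbed. Projecting the returned edge-weighted optimum of $G'$ back to $G$ (including $v$ in the solution precisely when the splitting arc $(v^{\inn}, v^{\out})$ is used) yields a vertex-weighted SCSS solution of the same cost, which must be optimum by the cost-preserving correspondence. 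The main thing to be careful about is exactly this treewidth-preservation step: one needs to check that the assumed existence of a small-treewidth vertex-weighted optimum genuinely implies the existence of a small-treewidth edge-weighted optimum in $G'$, which is precisely what the bag-doubling argument above provides.
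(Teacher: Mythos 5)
Your proof is essentially correct and follows the same high-level plan as the paper: reduce the vertex-weighted instance to an edge-weighted one by the standard in/out vertex-splitting, bound the treewidth blow-up by doubling every bag, and then invoke \cite[Theorem~5]{andreas-dm-arxiv}. The treewidth accounting ($2\omega+1$), the cost correspondence, and the reverse translation are all handled correctly.

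There are a few places where your construction differs from the paper's, and they are worth noticing because one of them papers over a technical point the authors treat with some care. You split \emph{every} vertex, including terminals, and declare the new terminal set to be $\{t_i^{\inn}\}$; the paper instead splits only non-terminal vertices, so the terminal set is literally unchanged. Both choices yield a valid reduction. More substantively, the paper first subdivides every edge of $G$ with a weight-$0$ non-terminal dummy vertex before doing the split. The purpose is to guarantee that every edge of $G'$ is incident to a non-terminal, which is what makes their \emph{edge-minimality} argument go through: they show that the image of a vertex-minimal optimum solution is an edge-minimal optimum for the edge-weighted instance because deleting any edge would create a non-terminal source or sink. Your proof skips the minimality argument entirely and only shows that \emph{some} edge-weighted optimum of treewidth $O(\omega)$ exists. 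Whether that omission is a genuine gap depends on the exact hypothesis of \cite[Theorem~5]{andreas-dm-arxiv}: as paraphrased in this paper (``if the optimum solution has treewidth $\omega$, then it can be found in time $\dots$''), mere existence of a bounded-treewidth optimum suffices and your argument is complete; but the authors' extra care suggests they wanted the image solution to be edge-minimal to match the precise form of the cited theorem. You should at least acknowledge and check this point. Everything else — the cost-preserving correspondence, the terminal count being $k$, the running-time arithmetic — matches the paper.
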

\begin{proof}
In the given graph $G$, we start by subdividing each edge by adding a non-terminal vertex of weight $0$ (note that this does not increase the treewidth). Let us call these vertices we have added as dummy vertices, and the graph obtained at this point be $G^*$. Note that each dummy vertex has in-degree one and out-degree one. Now we reduce the vertex-weighted version of SCSS to the edge-weighted
version, using a standard reduction: substitute each non-terminal vertex $u\in G$ of weight $W$ with
two new non-terminal vertices $u^-$ and $u^+$ and an edge $(u^-,u^+)$ of the same weight $W$.
Every edge that had $u$ as its head will now have $u^-$ as its head instead, and
every edge that had $u$ as its tail will now have $u^+$ as its tail. We set the weight of all these
edges to be zero. Let the graph obtained after these modifications be $G'$.

Consider an optimum solution $S$ for the vertex-weighted version of SCSS, and without loss of generality we can assume that $S$ is minimal under vertex deletions (if it is not, then make it minimal by deleting unnecessary vertices). Let $S' = (S\cap T)\cup \{\{u^-, u^+\}\ : u\in S\cap (V\setminus T)\}$. We now show that the induced graph $G'[S']$ is an edge-minimal solution (with same weight as that of $S$) for the edge-weighted version of SCSS: we do this by showing that deletion of any edge from $G'[S']$ creates a non-terminal source or a non-terminal sink which contradicts the fact that $S$ was a vertex-minimal solution for vertex-weighted version of SCSS. The construction of the graph $G'$ from $G$ implies that any edge $e$ in $G$ must be of either of the following two types:
\begin{itemize}
  \item Without loss of generality\footnote{The other case is the edge being $(v^+,y)$ for some dummy vertex $y$ and some non-terminal $v\in G$}, the edge is $(y,v^-)$ for some dummy vertex $y$ and some non-terminal $v\in G$ in which case deleting this edge makes the non-terminal $y$ to be a sink.
  \item The edge is $(z^-,z^+)$ for some non-terminal $z\in G$ in which case deleting this edge makes the non-terminal $z^+$ a source and the non-terminal $z^-$ a sink.
\end{itemize}
Note that $G'$ is not necessarily planar (or $H$-minor-free)
even if $G$ is. However, the treewidth of $G'[S']$ is at most twice the treewidth of
$G[S]$ since we can simply replace each non-terminal vertex $u$ in the bags of the tree
decomposition of $N$ by the two vertices $u^-$ and $u^+$.
Feldmann and Marx~\cite[Theorem~5]{andreas-dm-arxiv} showed that if the optimum
solution to an instance on $k$ terminals of the edge-weighted version of SCSS
has treewidth $\omega$, then it can be found in $2^{O(k+\omega \cdot \log
\omega)}\cdot n^{O(\omega)}$ time. Hence, the claimed running time for the vertex-weighted version follows.
\end{proof}


Finally, we are now ready to prove Theorem~\ref{thm:algo-sqrt-k-h-minor-free}

\begin{reptheorem}{thm:algo-sqrt-k-h-minor-free}
An instance $(G,T)$ of the vertex-weighted \scss problem with $|G|=n$ and $|T|=k$ can be solved
in $2^{O(k)}\cdot n^{O(\sqrt{k})}$ time, when the underlying undirected graph of
$G$ is planar (or more generally, $H$-minor-free for any fixed graph $H$).
\end{reptheorem}
\begin{proof}
%
Consider a subgraph $M$ of $G$ induced by the optimum solution $S\subseteq V$,
which is also minimal, i.e., no edge of $M$ can be removed without destroying
the connectivity between some terminal pair $(s,t)$. By
Lemma~\ref{lem:tw-sqrt-k} we know that the treewidth of $M$ is $O(\sqrt{k})$. Hence, the claimed running time follows from Theorem~\ref{thm:dm-andreas-vertex-version}.
%
\end{proof}

Note that Lemma~\ref{lem:tw-sqrt-k} only used the planarity (or
$H$-minor-freeness) of $M$, and not of the input graph. Hence, the algorithm of
Theorem~\ref{thm:algo-sqrt-k-h-minor-free} also works for the weaker restriction
of finding an optimal planar (or $H$-minor-free) solution in an otherwise
unrestricted input graph, rather than finding an optimal solution in a planar
(or $H$-minor-free respectively) graph. It only remains to prove
Lemma~\ref{lem:W}, which is done in the next section.

\subsection{Proof of Lemma~\ref{lem:W}}
\label{subsec:bounding-treewidth}

Fix an arbitrary terminal $r\in T$. It is easy to see (observed for example by Feldman and Ruhl~\cite{feldman-ruhl}) that any minimal SCSS
solution $M$ is the union of an in-arborescence $A_{\inn}$ and an
out-arborescence~$A_{\out}$, both having the same root $r\in T$ and
only terminals as leaves, since every terminal of $T$ can be reached from $r$,
and conversely every terminal can reach $r$ in $M$. We construct the set $W_M$ by including
three different kinds of vertices. First, $W_M$ contains every \emph{branching
point} of $A_{\inn}$ and $A_{\out}$, i.e.\ every vertex with in-degree at least
$2$ in $A_{\inn}$ and every vertex with out-degree at least $2$ in~$A_{\out}$.
Since $A_{\inn}$ and $A_{\out}$ are arborescences with at most $k$ leaves (the
terminals), they each have at most $k$ branching points. Secondly, $W_M$
contains all terminals of $T$, which adds another $k$ vertices to the
set~$W_M$. The third kind of vertices in $W_M$ is the following. Note that
every component of the intersection of $A_{\inn}$ and $A_{\out}$ forms a path (possibly consisting only of a single vertex),
since
every vertex of $A_{\inn}$ has out-degree at most~$1$, while every vertex of
$A_{\out}$ has in-degree at most~$1$. We call such a component a \emph{shared
path}. If a shared path contains a branching point or a terminal, we add the endpoints of the shared path to $W_M$. For a branching point or terminal~$v$ on such a shared path,
we can map the endpoints of the shared path to $v$. This maps at most two
endpoints of shared paths to each branching point or terminal, so that the
number of vertices of the third kind in $W_M$ is at most $6k$ (as there are $k$
terminals and at most $2k$ branching points). Thus the total size of $W_M$ is at
most~$9k$.

%

We claim that every $W_M$-component consists of at most two interacting paths,
one from $A_{\inn}$ and one from $A_{\out}$. More formally, consider a $u\to v$
path $P$ of $A_{\inn}$ such that $u$ and $v$ are either terminals or branching
points of $A_{\inn}$, and such that no internal vertex of $P$ is a terminal or
branching point of $A_{\inn}$. We call any such path $P$ an \emph{essential path
of} $A_{\inn}$. Note that we ignore the branching points of $A_{\out}$ in this
definition, and that the edge set of the arborescence~$A_{\inn}$ is the disjoint
union of the edge sets of its essential paths. Analogously we define the
\emph{essential paths of} $A_{\out}$ as those $u\to v$ paths $P$ in~$A_{\out}$
for which $u$ and $v$ are terminals or branching points of $A_{\out}$, and no
internal vertices of $P$ are of such a type.

\begin{claim}\label{clm:2paths}
Every $W_M$-component contains edges of at most two essential paths, one
from~$A_{\inn}$ and one from~$A_{\out}$.
\end{claim}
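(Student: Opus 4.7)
The plan is to show the stronger statement that all $A_{\inn}$-edges of $C$ lie on a single essential $A_{\inn}$-path; the claim for $A_{\out}$ then follows by a symmetric argument. Suppose for contradiction that $C$ contains $A_{\inn}$-edges on two distinct essential paths $P_1 \neq P_2$. Every vertex of $C$ is outside $W_M$ and hence neither a branching point of $A_{\inn}$ nor a terminal, so it is an interior vertex of exactly one essential $A_{\inn}$-path. Tracing an undirected walk in $C$ from an edge on $P_1$ to an edge on $P_2$, the essential-path label must change at some consecutive pair of vertices $v, w$; the connecting edge cannot lie in $A_{\inn}$ (else both endpoints would be on the same essential path via that edge), so it lies in $A_{\out}\setminus A_{\inn}$. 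Orient it as $v \to w$ in $A_{\out}$ without loss of generality.

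The central technical step will be to contradict the edge-minimality of $M$. Let $y$ denote the $A_{\inn}$-parent of $v$, so that $(v,y)$ is $v$'s unique out-edge in $A_{\inn}$. I plan to show that $M \setminus \{(v,y)\}$ is still an SCSS, contradicting minimality. The directed edge $(v,y)$ is used only in paths $t_i \leadsto t_j$ where $t_i$ is an $A_{\inn}$-descendant of $v$ and the route passes $v \to y$ on its way to $r$; for each such pair I would construct a replacement detour that climbs from $t_i$ up to $v$ in $A_{\inn}$, then uses the shortcut $(v,w) \in A_{\out}$ to enter the $A_{\out}$-subtree of $w$, descends there to some terminal $t^*$, follows $A_{\inn}$ from $t^*$ back to $r$, and continues via $A_{\out}$ to $t_j$.

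The hardest part will be establishing the existence of a suitable terminal $t^*$ in the $A_{\out}$-subtree of $w$ whose $A_{\inn}$-path to $r$ avoids $(v,y)$, i.e., $t^*$ must not itself be an $A_{\inn}$-descendant of $v$. In the main case, where $w$ does not lie in $v$'s $A_{\inn}$-subtree, $w$ itself witnesses this, or the $A_{\out}$-subtree of $w$ contains a terminal leaf outside $v$'s $A_{\inn}$-subtree. In the degenerate case, where $w$ does lie in the $A_{\inn}$-subtree of $v$---only possible when the backward endpoint of $P_v$ is a branching point of $A_{\inn}$ and $w$ sits in one of its other backward subtrees---I expect the third clause in the definition of $W_M$, namely the inclusion of endpoints of shared paths containing a branching point or terminal, to be doing real work: edge-minimality of $M$ should force the $A_{\out}$-path used to reach $v$ from $r$ to coincide with an $A_{\inn}$-subpath spanning a branching point, producing a shared path whose interior contains that branching point, which in turn places one of $v, w$ into $W_M$ and contradicts $v,w \in V(C) \subseteq V(M) \setminus W_M$. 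The concrete structural case analysis needed to pin down $t^*$ in the main case and to derive the forced $W_M$-membership in the degenerate case is the main obstacle in completing the proof.
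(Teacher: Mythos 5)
Your core move---delete $v$'s unique $A_{\inn}$-out-edge $(v,y)$ and reroute through $A_{\out}$ to contradict edge-minimality of $M$---is exactly the paper's move (the paper calls this edge $e_v$). You also correctly isolate the crux: the detour needs a terminal $t^*$ that is $A_{\out}$-reachable from $v$ but \emph{not} an $A_{\inn}$-descendant of $v$. However, the case split you propose (on whether $w$ lies in $v$'s $A_{\inn}$-subtree) does not decide whether $t^*$ exists. Even when $w$ is outside $v$'s $A_{\inn}$-subtree, nothing prevents \emph{every} terminal in $w$'s $A_{\out}$-subtree from being an $A_{\inn}$-descendant of $v$; indeed this happens whenever the set $T_Q$ of terminal $A_{\out}$-descendants of $v$ is contained in the set $T_P$ of terminal $A_{\inn}$-descendants of $v$. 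In that regime your deletion of $(v,y)$ genuinely breaks feasibility and there is no contradiction.

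The missing ingredient is the paper's labeling scheme and the realization that the claim is \emph{not} a pure minimality argument. The paper compares $T_P$ and $T_Q$: if $T_Q\setminus T_P\ne\emptyset$ your argument works verbatim; if $T_P\setminus T_Q\ne\emptyset$ one instead deletes the $A_{\out}$-only edge $f_u$ entering the \emph{other} endpoint $u$ of the shared path (a case your $v$-local setup cannot reach---and which also shows your ``orient $(v,w)$ WLOG'' step is hiding a genuine asymmetry); and if $T_P=T_Q$ there is provably no redundant edge and the argument has to change character entirely. In that last case the paper concludes not by contradicting minimality but by observing that essential paths of a single arborescence have pairwise distinct labels, so a matching label pins down $(P,Q)$ uniquely. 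A plan that only ever contradicts minimality cannot close this case. Finally, your intuition that the ``third kind'' of $W_M$-vertex (endpoints of shared paths containing a branching point or terminal) must do real work is correct, but the paper uses it differently than you expect: it guarantees that the shared path $I$ through $v$ contains no branching points or terminals, which is what makes the edges $e_v,f_v,e_u,f_u$ exist and makes the $t\to r$ detour provably avoid $e_v$.
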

\begin{proof}
Any vertex at which two essential paths of the same arborescence intersect is a
terminal or branching point. These vertices are in $W_M$ and therefore not
contained in any $W_M$-component. Thus if a $W_M$-component $H$ contains at
least
two essential paths then they either coincide on every edge of $H$, in
which case the claim is clearly true, or $H$ contains the endpoint $v$ of a shared path, i.e., there are two essential paths, one
from each arborescence, that both contain vertex $v$. We will show that there is only one pair of
essential paths that can meet at an endpoint of a shared path in $H$, from which
the claim follows.

In order to prove this, we label every essential $u\to v$ path $P$ of $A_{\inn}$ with those
terminals $T_P\subseteq T$ that can reach the start vertex of $P$ in the in-arborescence, i.e.\
$t\in T_P$ if and only if there exists a $t\to u$ path in $A_{\inn}$. Note that no
two essential paths of $A_{\inn}$ can have the same label. We also label any
essential $u\to v$ path $Q$ of $A_{\out}$ analogously, by setting the label
$T_Q\subseteq T$ to be the terminals which can be reached from the end vertex of $Q$ in the
out-arborescence, i.e.\ there is a $v\to t$ path in~$A_{\out}$ if and only if
$t\in T_Q$. Even though no two essential paths of an individual arborescence
have the same label, there can be pairs of essential paths from $A_{\inn}$ and
$A_{\out}$ with the same label. Let $P$ and $Q$ be essential paths of $A_{\inn}$
and $A_{\out}$, respectively. We prove that if $P$ and $Q$ meet at an endpoint
$v$ of a shared path, then $v\in W_M$ or $T_P=T_Q$.

Assume this is not the case so that $v\notin W_M$ and $T_P\neq T_Q$. Let $I$ be
the shared path in the intersection of $A_{\inn}$ and $A_{\out}$ for which $v$ is
an endpoint. If $u$ is the other endpoint of $I$, assume w.l.o.g.\ that $I$ is
a $u\to v$ path (the other case is symmetric). If there were any branching
points or terminals on $I$ then~$v\in W_M$, since $v$ would then be one of the
third kind of vertices in $W_M$. As this is not the case, $I$ lies in the
intersection of $P$ and $Q$, there are edges $e_v\in E(P)$ and $f_v\in E(Q)$
leaving $v$ such that $e_v\notin E(A_{\out})$ and $f_v\notin E(A_{\inn})$, and
there are edges $e_u\in E(P)$ and $f_u\in E(Q)$ entering $u$ such that
$e_u\notin A_{\out}$ and $f_u\notin E(A_{\inn})$.

As $T_P\neq T_Q$ there is a terminal $t$ contained in one of the two sets but
not the other. Consider the case when $t\in T_Q\setminus T_P$, i.e.\ there is a
$v\to t$ path in $A_{\out}$ but no $t\to u$ path in~$A_{\inn}$. The latter implies
that $e_v$ cannot be reached from $t$ in $A_{\inn}$, as the $u\to v$ path $I$
contains no branching point of $A_{\inn}$. The in-arborescence $A_{\inn}$ does
however contain a $t\to r$ path from $t$ to the root~$r$. Since $e_v\notin
E(A_{\out})$, this means that the root $r$ can be reached from $v$ through the
$v\to t$ path of $A_{\out}$ and the $t\to r$ path without passing through~$e_v$.
Hence $e_v$ can safely be removed without making the solution $M$ infeasible.
This contradicts the minimality of $M$.

In case $t\in T_P\setminus T_Q$ a symmetric argument shows that the edge $f_u$
is redundant in $M$, which again contradicts its minimality. We have thus shown
that $P$ and $Q$ are the only essential paths that meet in any endpoint of a
shared path in the $W_M$-component $H$. Hence $H$ consists of exactly these two
paths $P$ and $Q$, and the claim follows.
\cqed\end{proof}

Consider the case when there is at most one shared path of $M$ that intersects
with a $W_M$-component~$H$. Since by Claim~\ref{clm:2paths}, $H$ consists of at
most
two essential paths, it is easy to see that in this case $H$ is a tree, and
thus its treewidth is $1$. If at least two shared paths of $M$ intersect with
$H$, by Claim~\ref{clm:2paths}, $H$~contains edges of two essential paths $P$ and
$Q$ of $A_{\inn}$ and $A_{\out}$ respectively. To show that in this case the
treewidth of $H$ is at most $2$, we need the following observation on $P$ and
$Q$:

\begin{claim}\label{clm:order}
Let $I_1,\ldots,I_h$ be the connected components in the intersection of $P$ and
$Q$, ordered in the way that $P$ visits them, i.e.\ for any $i\in\{1,\ldots,
h-1\}$ there is a subpath of $P$ with prefix $I_i$ and suffix~$I_{i+1}$. The
path $Q$ visits the shared paths in the opposite order, i.e.\ for any
$i\in\{1,\ldots, h-1\}$ there is a subpath of $Q$ with suffix $I_i$ and prefix
$I_{i+1}$.
\end{claim}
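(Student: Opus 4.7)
My plan is to argue by contradiction against the edge-minimality of $M$, reusing the redundant-edge strategy from the proof of Claim \ref{clm:2paths}. Suppose $Q$ does not visit $I_1,\ldots,I_h$ in the reverse of $P$'s order; then there must be a pair of indices $i<i'$ for which $Q$ also visits $I_i$ before $I_{i'}$ (otherwise every pair is reversed, which is exactly the reverse order). Fix any such pair, write $b_i$ for the terminal vertex of $I_i$ in the common orientation of $P$ and $Q$ (i.e., the vertex at which both paths leave $I_i$), and let $e_v$ denote the unique edge of $P$ leaving $b_i$ in~$A_{\inn}$.

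The key observation is $e_v\in A_{\inn}\setminus A_{\out}$. Since $b_i$ lies in a $W_M$-component, $b_i\notin W_M$ and in particular $b_i$ is not a branching point of $A_{\out}$, so its out-degree in $A_{\out}$ is at most one. If $e_v$ were in~$A_{\out}$ it would have to coincide with the edge by which~$Q$ leaves~$b_i$, placing $e_v$ into $P\cap Q$. But at every vertex each of $A_{\inn}$ and~$A_{\out}$ contributes at most one in-edge and one out-edge, so the components of $P\cap Q$ are vertex-disjoint directed paths; as $I_i$ already ``uses up'' $b_i$ as its final vertex, no other component of $P\cap Q$ can have an edge leaving~$b_i$, a contradiction.

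Next I build a $b_i\to r$ walk in $M\setminus e_v$: follow $Q$ from $b_i$ up to the start vertex of~$I_{i'}$ (possible because $Q$ visits $I_{i'}$ after $I_i$), using only $A_{\out}$-edges, then continue along $P$ from that vertex through $v_P$ and on to the root~$r$ via the rest of~$A_{\inn}$. The first piece avoids $e_v\in A_{\inn}\setminus A_{\out}$ by construction; the second piece stays strictly above $b_i$ in $A_{\inn}$ (each of its vertices lies closer to $r$ than $b_i$), and so cannot contain the edge $e_v$ which leaves~$b_i$. Using this walk one verifies $M\setminus e_v$ remains a valid SCSS: the $r\to t$ direction is untouched as $e_v\notin A_{\out}$, and any $t\to r$ path originally traversing~$e_v$ corresponds to a terminal $t$ in the $A_{\inn}$-subtree rooted at~$b_i$, which can reach~$b_i$ along the edges below~$b_i$ (all still present in $M\setminus e_v$) and then concatenate with the rerouted walk. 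This contradicts the edge-minimality of~$M$.

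The main obstacle, as is typical for such arguments, lies in cleanly establishing $e_v\notin A_{\out}$ and that the rerouted walk genuinely avoids $e_v$; both reductions hinge on the two structural facts used above, namely that $b_i\notin W_M$ (so $A_{\out}$ cannot branch at~$b_i$) and that the components of $P\cap Q$ are vertex-disjoint directed paths. Once these are in place, the redundancy argument closes in the same way as the final step of Claim \ref{clm:2paths}.
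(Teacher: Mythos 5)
Your proof is correct and follows essentially the same strategy as the paper's: assume $Q$ does not reverse $P$'s order on the shared components, pick a pair violating this, observe that the edge leaving the last vertex of the earlier shared component is redundant, and contradict the edge-minimality of $M$ by exhibiting a rerouted walk. The one noteworthy (and perfectly valid) difference is that you make the symmetric choice of which redundant edge to delete: you remove the $A_{\inn}$-edge $e_v$ leaving $b_i$ and reroute the $t\leadsto r$ paths through $Q$ and then the later portion of $P$, whereas the paper works with a consecutive pair $I_i,I_{i+1}$, removes the $A_{\out}$-edge $f$ leaving the same vertex, and reroutes the $r\leadsto t$ paths through the intervening segment of $P$; your use of a general pair $i<i'$ rather than a consecutive one is equally fine, since the negation of the claim directly yields such a pair. (One small notational slip: $v_P$ is never introduced --- presumably you mean the end vertex of $P$.)
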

\begin{proof}
Assume this is not the case, so that there is an index $i\in\{1,\ldots, h-1\}$
such that both $P$ and $Q$ contain subpaths with prefix $I_i$ and suffix
$I_{i+1}$. This means that there are edges $e\in E(P)\setminus E(Q)$ and $f\in
E(Q)\setminus E(P)$ which share the last vertex $u$ of $I_i$. Hence $Q$ contains
a $u\to v$ subpath $Q'$ to the first vertex $v$ of $I_{i+1}$, which does not contain
the edge $e$, and also $P$ contains a $u\to v$ subpath $P'$, which does not
contain the edge $f$. As $Q$, and therefore also $Q'$, contains no branching
point of $A_{\out}$, any terminal reachable from $u$ through $Q'$ in $A_{\out}$ is
also reachable from $u$ via the detour $P'$. We can therefore remove edge $f\in
E(A_{\out})$ without violating the feasibility of $M$. This however contradicts
its minimality.
\cqed
\end{proof}

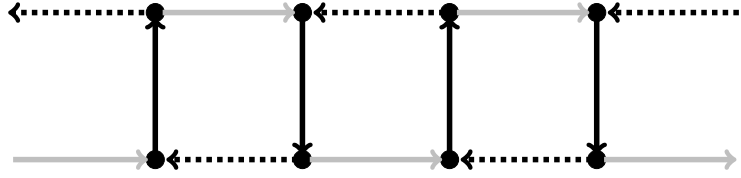
\begin{figure}[t]
\centering
\resizebox{10.0cm}{!}{

\begin{tikzpicture}[rotate=90]

\centering

\draw [black] plot [only marks, mark size=7, mark=*] coordinates {(4,0)}
node[label={[xshift=0mm,yshift=0mm] $$}] {} ;

\draw [black] plot [only marks, mark size=7, mark=*] coordinates {(8,0)}
node[label={[xshift=0mm,yshift=0mm] $$}] {} ;

\draw [black] plot [only marks, mark size=7, mark=*] coordinates {(4,-4)}
node[label={[xshift=0mm,yshift=0mm] $$}] {} ;

\draw [black] plot [only marks, mark size=7, mark=*] coordinates {(8,-4)}
node[label={[xshift=0mm,yshift=0mm] $$}] {} ;

\draw [black] plot [only marks, mark size=7, mark=*] coordinates {(4,-8)}
node[label={[xshift=0mm,yshift=0mm] $$}] {} ;

\draw [black] plot [only marks, mark size=7, mark=*] coordinates {(8,-8)}
node[label={[xshift=0mm,yshift=0mm] $$}] {} ;

\draw [black] plot [only marks, mark size=7, mark=*] coordinates {(4,-12)}
node[label={[xshift=0mm,yshift=0mm] $$}] {} ;

\draw [black] plot [only marks, mark size=7, mark=*] coordinates {(8,-12)}
node[label={[xshift=0mm,yshift=0mm] $$}] {} ;


\path (4,0) node(a) {} (8,0) node(b) {};
        \draw[line width=1.5mm,black,->] (a) -- (b);

\path (4,-4) node(a) {} (8,-4) node(b) {};
        \draw[line width=1.5mm,black,->] (b) -- (a);

\path (4,-8) node(a) {} (8,-8) node(b) {};
        \draw[line width=1.5mm,black,->] (a) -- (b);

\path (4,-12) node(a) {} (8,-12) node(b) {};
        \draw[line width=1.5mm,black,->] (b) -- (a);


\path (4,4) node(a) {} (4,0) node(b) {};
        \draw[line width=1.5mm,lightgray,->] (a) -- (4,0.2);

\path (8,0) node(a) {} (8,-4) node(b) {};
        \draw[line width=1.5mm,lightgray,->] (8,-0.2) -- (8,-3.8);

\path (4,-4) node(a) {} (4,-8) node(b) {};
        \draw[line width=1.5mm,lightgray,->] (4,-4.2) -- (4,-7.8);

\path (8,-8) node(a) {} (8,-12) node(b) {};
        \draw[line width=1.5mm,lightgray,->] (8,-8.2) -- (8,-11.8);

\path (4,-12) node(a) {} (4,-16) node(b) {};
        \draw[line width=1.5mm,lightgray,->] (4,-12.2) -- (4,-15.8);


\path (8,0) node(a) {} (8,4) node(b) {};
        \draw[line width=1.5mm,loosely dotted,->] (8,0.3) -- (8,4);

\path (4,-4) node(a) {} (4,0) node(b) {};
        \draw[line width=1.5mm,loosely dotted,->] (a) -- (4,-0.3);

\path (8,-8) node(a) {} (8,-4) node(b) {};
        \draw[line width=1.5mm,loosely dotted,->] (a) -- (8,-4.3);

\path (4,-12) node(a) {} (4,-8) node(b) {};
        \draw[line width=1.5mm,loosely dotted,->] (a) -- (4,-8.3);

\path (8,-16) node(a) {} (8,-12) node(b) {};
        \draw[line width=1.5mm,loosely dotted,->] (a) -- (8,-12.3);

\end{tikzpicture}
}
\caption{Let $H$ be a $W_M$-component $M$ such that at least two shared paths of
$M$ intersect with $H$. By Claim~\ref{clm:2paths}, $H$~contains edges of two
essential paths $P$ and $Q$ of $A_{\inn}$ and $A_{\out}$, respectively.
Claim~\ref{clm:order} shows that $H$ looks roughly as shown in the figure: let
$P, Q$ be denoted by light+black paths and dotted+black paths respectively. Then the
black paths are exactly the shared paths. Note though that a shared path may
have length~$0$.
}
\label{fig:tw-2}
\end{figure}

Claim~\ref{clm:order} implies that the structure of $H$ is roughly as shown in
Figure~\ref{fig:tw-2} (the black edges shown in Figure~\ref{fig:tw-2} correspond
to paths of length~$0$ or more, while the light and dotted edges correspond to paths
of length at least~$1$). 
If we contract each path of length at least $1$ to a path of length $1$, then
the resulting graph is planar and all vertices belong to the outer face. Such
graphs are called outerplanar graphs. In other words, $H$ is a subdivision of
an outerplanar graph. Lemma~\ref{lem:tw-outerplanar} shows that treewidth of
subdivisions of outerplanar graphs is at most $2$, 
which proves Lemma~\ref{lem:W}.

\section{W[1]-hardness for SCSS in planar graphs}

The goal of this section is to prove Theorem~\ref{thm:scss-main-hardness-planar-graphs}. We reduce from the \gt problem\footnote{The \gt problem has been defined in two (symmetrical) ways in the literature: either the first coordinate or the second coordinate remains the same in a row. Here, we follow the notation of~\cite{daniel-grid-tiling}, but the other definition also appears in some places (e.g.~\cite{fpt-book}).} introduced by Marx~\cite{daniel-grid-tiling}:

\begin{center}
\noindent\framebox{\begin{minipage}{6.00in}
\textbf{$k\times k$~\textsc{Grid Tiling}}\\
\emph{Input }: Integers $k, n$, and $k^2$ non-empty sets $S_{i,j}\subseteq [n]\times [n]$ where $1\leq i, j\leq k$\\
\emph{Question}: For each $1\leq i, j\leq k$ does there exist an entry $\gamma_{i,j}\in S_{i,j}$ such that
\begin{itemize}
\item If $\gamma_{i,j}=(x,y)$ and $\gamma_{i,j+1}=(x',y')$ then $x=x'$.
\item If $\gamma_{i,j}=(x,y)$ and $\gamma_{i+1,j}=(x',y')$ then $y=y'$.
\end{itemize}
\end{minipage}}
\end{center}

Under ETH~\cite{eth,eth-2}, it was shown by Chen et
al.~\cite{chen-hardness} that $k$-\textsc{Clique}\footnote{The
$k$-\textsc{Clique} problem asks whether there is a clique of size $\geq k$.}
does not admit an algorithm running in time $f(k)\cdot n^{o(k)}$ for any computable
function $f$. There is a simple reduction~\cite[Theorem 14.28]{fpt-book} from
$k$-\textsc{Clique} to $k\times k$~\gt implying the same runtime lower bound
for the latter problem.
%
%
To prove Theorem~\ref{thm:scss-main-hardness-planar-graphs}, we give a reduction which transforms the problem of $k\times k$
\gt into a planar instance of SCSS with $O(k^2)$ terminals.
We design two types of gadgets: the \emph{connector gadget} and the \emph{main gadget}. The reduction from \gt represents each cell of the grid with a copy of the main gadget, with a connector gadget between main gadgets that are adjacent either horizontally or vertically (see Figure~\ref{fig:big-picture}).

The proof of Theorem~\ref{thm:scss-main-hardness-planar-graphs} is divided into the following steps: In
Section~\ref{sec:connector:exist}, we first introduce the connector gadget and Lemma~\ref{lem:connector-gadget} states the
existence of a particular type of connector gadget. In Section~\ref{sec:main:exists}, we introduce the main gadget and
Lemma~\ref{lem:main-gadget} states the existence of a particular type of main gadget. Section~\ref{sec:construction-of-g^*-scss-planar} describes the construction of the planar instance $(G^*, T^*)$ of SCSS. The two directions implying the reduction from \gt to SCSS are proved in Section~\ref{proof:scss-planar-easy} and Section~\ref{proof:scss-planar-hard} respectively. Using Lemmas~\ref{lem:connector-gadget}
and ~\ref{lem:main-gadget} as a blackbox, we prove Theorem~\ref{thm:scss-main-hardness-planar-graphs} in
Section~\ref{sec:main-scss-planar-proof-subsec}. The proofs of Lemmas~\ref{lem:connector-gadget} and Lemma~\ref{lem:main-gadget} are given in
Sections~\ref{connector:proof} and \ref{main:proof} respectively.

\subsection{Existence of connector gadgets}
\label{sec:connector:exist}

A connector gadget $CG_{n}$ is a directed (embedded) planar graph with $O(n^2)$ vertices and positive integer weights\footnote{Weights are polynomial in $n$.} on its edges. It has a total of
$2n+2$ distinguished vertices divided into the following 3 types:
\begin{itemize}
\item The vertices $p, q$ are called \emph{internal-distinguished} vertices
\item The vertices $p_1, p_2, \ldots, p_n$ are called \emph{source-distinguished} vertices
\item The vertices $q_1, q_2, \ldots, q_n$ are called \emph{sink-distinguished} vertices
\end{itemize}
Let $P=\{p_1, p_2,\ldots, p_n\}$ and $Q=\{q_1, q_2,\ldots, q_n\}$. The vertices $P\cup Q$ appear in the clockwise order $p_1$, $\dots$, $p_n$, $q_n$, $\dots$, $q_1$ on the boundary of the gadget. In the connector gadget $CG_n$, every vertex in $P$ is a source
and has exactly one outgoing edge. Also every vertex in $Q$ is a sink and has exactly one incoming edge.

\begin{definition}
An edge set $E'\subseteq E(CG_{n})$ satisfies the \textbf{\emph{connectedness}} property if
each of the following four conditions hold for the graph $CG_{n}[E']$:
\begin{enumerate}
\item $p$ can be reached from some vertex in $P$
\item $q$ can be reached from some vertex in $P$
\item $p$ can reach some vertex in $Q$
\item $q$ can reach some vertex in $Q$
\end{enumerate}
\label{defn:connectedness}
\end{definition}

\begin{definition}
An edge set $E'$ satisfying the connectedness property \textbf{represents} an integer $i\in [n]$ if in $E'$ the only
outgoing edge from $P$ is the one incident to $p_i$ and the only incoming edge into $Q$ is the one incident to $q_i$.
\label{defn:represents-connector}
\end{definition}

The next lemma shows we can construct a particular type of connector gadget:

\begin{lemma}
Given an integer $n$ one can construct in polynomial time a connector gadget $CG_{n}$ and an integer $C^*_{n}$ such that the following two properties hold \footnote{We use the notation $C^*_n$ to emphasize that $C^*$ depends only on $n$.}:
\begin{enumerate}
\item For every $i\in [n]$, there is an edge set $E_i \subseteq E(CG_{n})$ of weight $C^*_n$ such that $E_i$ satisfies the
    connectedness property and represents $i$. Note that, in particular, $E_i$ contains a $p_i \leadsto q_i$ path (via $p$ or $q$).
\item If there is an edge set $E'\subseteq E(CG_{n})$ such that $E'$ has weight at most $C^*_n$ and $E'$ satisfies the connectedness
    property, then $E'$ has weight exactly $C^*_n$ and it represents some $i\in [n]$.
\end{enumerate}
\label{lem:connector-gadget}
\end{lemma}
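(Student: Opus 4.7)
The plan is to construct $CG_n$ as a planar digraph whose outer face carries the $p_i$'s on one arc and the $q_i$'s on the opposite arc, with $p$ and $q$ placed as two interior vertices near the two ``poles'' of the embedding. For each $i \in [n]$ I will equip the gadget with a canonical ``$i$-th track'' shaped like a theta: starting from the mandated unique outgoing edge of $p_i$, the track splits into two internally vertex-disjoint directed paths, one through $p$ and one through $q$, that meet again just before the mandated unique incoming edge of $q_i$. Edge weights will be calibrated so that all $n$ canonical tracks have the same total weight, which becomes the constant $C^*_n$ promised by the lemma.

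For property~(1), each canonical track $E_i$ trivially represents $i$ (by construction $E_i$ contains only the boundary edges at $p_i$ and $q_i$ on the $P$- and $Q$-sides) and satisfies all four clauses of the connectedness property, since the two halves of the theta witness $p_i \leadsto p \leadsto q_i$ and $p_i \leadsto q \leadsto q_i$. Its weight is $C^*_n$ by the calibration.

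For property~(2), let $E'$ be any edge set of weight at most $C^*_n$ satisfying the connectedness property. Let $A \subseteq [n]$ collect the indices whose $p_a$-outgoing edge lies in $E'$, and let $D \subseteq [n]$ be the analogous set for the $q_d$-incoming edges. My plan is a two-stage argument. First I would show $|A| = |D| = 1$: including the outgoing edge of two distinct sources means paying for two disjoint boundary edges, and the weight calibration will be tight enough that no saving in the rest of $E'$ can absorb this extra cost; a symmetric argument handles $D$. Second, with $A = \{a\}$ and $D = \{d\}$ fixed, I would show $a = d$. Here planarity is crucial: since $p_a$ is the sole usable source and $q_d$ the sole usable sink, the four required connectedness subpaths are confined to the planar subregion delimited by the $a$-th and $d$-th tracks. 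By placing positive-weight ``rung'' edges between consecutive tracks, I will ensure that any configuration realizing the four subpaths with $a \neq d$ must traverse at least $|a-d|$ rungs, paying a cost strictly greater than $C^*_n$.

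The main obstacle I anticipate is this second stage. Connectedness demands only four \emph{separate} subpaths, not a combined $p_a \leadsto q_d$ path, so ruling out that ``$p_a$ reaches $p$ via the $a$-track while $q_d$ is reached from $p$ via the $d$-track'' requires a careful use of the planar embedding: one must exclude creative edge subsets that approach $p$ from one side and approach $q_d$ from another without ever committing to a single track. My strategy is to arrange the rung-and-track structure so that each of the four individual subpaths is forced onto the corresponding canonical track (the $a$-track for the two $P$-side paths and the $d$-track for the two $Q$-side paths), and to choose weights so that the union of any two different canonical tracks already exceeds $C^*_n$. Making this calibration uniform in $n$ and robust against arbitrary edge subsets, rather than just unions of shortest paths, is where I expect most of the technical effort to go.
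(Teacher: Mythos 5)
Your high-level architecture (canonical per-$i$ solutions, tiered weight calibration, and a pruning argument showing non-canonical configurations overshoot the budget) matches the paper's strategy, but your theta-shaped track design has a gap that the rung idea does not repair. The problem is precisely the one you half-diagnose: connectedness asks for four \emph{separate} subpaths, and in your design the four quarter-paths meet only at the vertices $p$ and $q$. Because $p$ and $q$ are single hub vertices through which \emph{every} track passes, a cheater can enter $p$ on the first half of the $a$-track and leave $p$ on the second half of the $d$-track \emph{without ever touching a rung}: the two subpaths simply meet at $p$. If the weight of each track is evenly distributed, the union of ``first halves of track $a$'' plus ``second halves of track $d$'' costs essentially the same $C^*_n$ as a single full track, so your calibration admits feasible edge sets of weight $\leq C^*_n$ that do not represent any $i \in [n]$. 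Rungs between consecutive tracks guard against traversing across tracks in the body of the gadget, but they are irrelevant to track-switching at the hubs $p$ and $q$, which is free.

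The paper closes exactly this hole by a different mechanism: it designs the gadget so that the $p_{i'}\leadsto p$ path (going leftward across the gadget) and the $p\leadsto q_{j'}$ path (going rightward) are forced to \emph{share} the same set of vertical ``inrow up'' edges between a designated pair of rows, and the budget only allows one such edge per column. Choosing $i' = j'$ lets these two oppositely-directed paths reuse the same $2n+1$ up-edges; choosing $i' \neq j'$ forces an extra up-edge somewhere, which the tiered weights cannot absorb. On top of that, the paper's source/sink edges carry complementary linear offsets ($B^5+(n-i'+1)$ and $B^5+j'$) and the two shortcut families $e_i, f_j$ carry offsets $n\cdot i$ and $n(n-j+1)$, and a monotonicity lemma on the columns where interrow edges are used (Claim~\ref{claim:interrow-monotone}) squeezes the indices into $i' \leq i''$, $j'' \leq j'$, $i'' \geq j''$, which together with the tight budget forces $i' = j'$. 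In short, the crucial ingredient is that the forward and backward paths must \emph{overlap on shared edges}, not that they must traverse many rungs. A theta, by construction, has internally disjoint branches and thus no forward/backward sharing, so it has no mechanism to penalize mismatched indices. To fix your proposal you would need to redesign the gadget so that the two $P$-side quarter-paths and the two $Q$-side quarter-paths are forced onto overlapping edges whenever the indices agree, and to reproduce the tiered-weight bookkeeping that makes this overlap the only way to stay inside budget.
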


\subsection{Existence of main gadgets}
\label{sec:main:exists}

A main gadget $MG$ is a directed (embedded) planar graph with $O(n^3)$ vertices and positive integer weights on its edges. It has $4n$ distinguished
vertices given by the following four sets:
\begin{itemize}
\item The set $L=\{\ell_1, \ell_2, \ldots, \ell_n\}$ of \emph{left-distinguished} vertices.
\item The set $R=\{r_1, r_2, \ldots, r_n\}$ of \emph{right-distinguished} vertices.
\item The set $T=\{t_1, t_2, \ldots, t_n\}$ of \emph{top-distinguished} vertices.
\item The set $B=\{b_1, b_2, \ldots, b_n\}$ of \emph{bottom-distinguished} vertices.
\end{itemize}

The distinguished vertices appear in the (clockwise) order $t_1$, $\dots$, $t_n$,
$r_1$, $\dots$, $r_n$, $b_n$, $\dots$, $b_1$, $\ell_n$, $\dots$,
$\ell_1$ on the boundary of the gadget.  In the main gadget $MG$, every
vertex in $L\cup T$ is a source and has exactly one outgoing
edge. Also each vertex in $R\cup B$ is a sink and has exactly one
incoming edge.

\begin{definition}
An edge set $E'\subseteq E(MG)$ satisfies the \textbf{\emph{connectedness}} property if
 each of the following four conditions hold for the graph $MG[E']$:
\begin{enumerate}
\item There is a directed path from some vertex in $L$ to $R\cup B$
\item There is a directed path from some vertex in $T$ to $R\cup B$
\item Some vertex in $R$ can be reached from $L\cup T$
\item Some vertex in $B$ can be reached from $L\cup T$
\end{enumerate}
\label{defn:source-sink-connectivity}
\end{definition}

\begin{definition}
An edge set $E'\subseteq E(MG)$ \textbf{represents} a pair $(i,j)\in
[n]\times [n]$ if each of the following five conditions holds:
\begin{itemize}
\item The only edge of $E'$ leaving $L$ is the one incident to $\ell_i$
\item The only edge of $E'$ entering $R$ is the one incident to $r_i$
\item The only edge of $E'$ leaving  $T$ is the one incident to $t_j$
\item The only edge of $E'$ entering $B$ is the one incident to $b_j$
\item $E'$ contains an $\ell_i \leadsto r_i$ path and an $t_j \leadsto b_j$ path
\end{itemize}
\label{defn:represents-main}
\end{definition}

The next lemma shows we can construct a particular type of main gadget:

\begin{lemma}
Given a subset $S\subseteq [n]\times [n]$, one can construct in polynomial time a main gadget $MG_{S}$ and an integer
$M^*_n$ such that the following two properties hold \footnote{We use the notation $M^*_n$ to emphasize that $M^*$ depends only on $n$, and not on the set $S$.}:
\begin{enumerate}
\item For every $(i,j)\in S$ there is an edge set $E_{i,j} \subseteq E(MG_{S})$ of weight $M^*_n$ such that $E_{i,j}$
    represents $(i,j)$. Note that the last condition of Definition~\ref{defn:represents-main} implies that $E_{i,j}$ satisfies the connectedness property.
\item If there is an edge set $E'\subseteq E(MG_{S})$ such that $E'$ has weight at most $M^*_n$ and satisfies the connectedness
property, then $E'$ has weight exactly $M^*_n$ and represents some $(i,j)\in S$.
\end{enumerate}
\label{lem:main-gadget}
\end{lemma}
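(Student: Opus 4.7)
The plan is to build $MG_S$ as an $n\times n$ grid of ``cell gadgets'', where the cell at position $(i,j)$ is of one type if $(i,j)\in S$ and of another if not. Row $i$ of the grid carries a horizontal ``track'' running from $\ell_i$ on the left boundary to $r_i$ on the right, and column $j$ carries a vertical track from $t_j$ at the top to $b_j$ at the bottom. At each cell $(i,j)$ the two tracks meet: if $(i,j)\in S$ the cell permits the two tracks to share a single interior vertex (saving some weight), while if $(i,j)\notin S$ the two tracks are forced through separate vertices and no such saving occurs. The constant $M^*_n$ is defined as the total weight of one complete horizontal track plus one complete vertical track assuming exactly one cell along the route is shared; this quantity depends only on $n$, since all ``sharable'' cell gadgets have the same cost.

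For Property~1, given $(i,j)\in S$, I let $E_{i,j}$ be all the horizontal track edges of row $i$ together with all vertical track edges of column $j$, sharing the interior vertex at cell $(i,j)$. By construction this has weight exactly $M^*_n$, uses only the outgoing edge of $\ell_i$ among $L$, the incoming edge of $r_i$ among $R$, the outgoing edge of $t_j$ among $T$, and the incoming edge of $b_j$ among $B$, and contains an $\ell_i\leadsto r_i$ path and a $t_j\leadsto b_j$ path, so Definition~\ref{defn:represents-main} is satisfied. For Property~2, I take any $E'$ with weight at most $M^*_n$ satisfying the four connectedness conditions of Definition~\ref{defn:source-sink-connectivity}, and argue by a weight-exchange/minimality argument that $E'$ must contain essentially one full horizontal track $\ell_i\leadsto r_i$ and one full vertical track $t_j\leadsto b_j$, and that these must share a vertex at some cell $(i,j)$; the existence of such a sharing vertex then forces $(i,j)\in S$ by the construction.

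The main obstacle is that the connectedness property of Definition~\ref{defn:source-sink-connectivity} is strictly weaker than ``$L$ reaches $R$ and $T$ reaches $B$'': it only requires $L$ and $T$ to each reach $R\cup B$, and $R, B$ to each be reached from $L\cup T$. In principle $E'$ could thus use a bent routing such as an $\ell_i\leadsto b_j$ path together with a $t_{j'}\leadsto r_{i'}$ path for different indices, or mix several partial tracks. To rule such routings out, I need to calibrate the cell-gadget weights so that every bent or split routing strictly exceeds $M^*_n$, forcing the intended axis-aligned structure with a single common sharing vertex. Achieving this calibration while simultaneously preserving planarity, giving the correct common total for every $(i,j)\in S$, and keeping the gadget size $O(n^3)$, is the delicate technical heart of the argument and will likely require introducing high-weight ``wall'' edges between adjacent cell gadgets to prevent unintended shortcuts across the grid.
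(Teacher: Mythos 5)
Your $n\times n$ cell-grid sketch is a genuinely different plan from the paper's, and the part you defer is exactly the part that is hard, so as written the proposal has a real gap rather than being a proof. The paper does \emph{not} build an $n\times n$ grid of cells; its main gadget is a single $n^2\times(2n+1)$ grid in which the intended pair $(x,y)$ is encoded as a \emph{row index} $z=n(x-1)+y$. The source-internal edges fan out from $\ell_i'$ to rows $n(i-1)+1,\dots,ni$ with weight $B^2(n^2-j)$ on the $j$-th such edge, and the sink-internal edges have weight $B^2 j$ on the $j$-th, so their sum is minimized (equal to $B^2 n^2$) exactly when the chosen source row equals the chosen sink row. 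Crucially, the $n^2$ ``bridge'' edges between columns $C_n$ and $C_{n+1}$ are the unique cut separating $T$ from $R\cup B$, and they each cost $B^3$, so exactly one bridge can be afforded; this single bottleneck forces both the $\ell$--$r$ path and the $t$--$b$ path to coincide on the same row $\gamma$, and the weights then force $\gamma=n(\alpha-1)+\beta$ with $(\alpha,\beta)\in S$. This rigid single-bottleneck design is what eliminates bent routings.

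Your construction has no such bottleneck, and that is not a detail you can patch with ``high-weight wall edges.'' In a planar $n\times n$ cell grid, horizontal and vertical tracks share vertices at every crossing, so turn-taking paths such as $\ell_i\leadsto b_{j}$ via a corner are geometrically available; the connectedness property (Definition~\ref{defn:source-sink-connectivity}) only requires reachability from $L$ or $T$ into $R\cup B$, so nothing forces a genuine $\ell_i\leadsto r_i$ path together with a genuine $t_j\leadsto b_j$ path. Walls block straight-line shortcuts, not turns; to rule out turning you would need some replacement for the source/sink-internal weight-matching trick and the single-bridge cut, which is essentially re-deriving the paper's construction. You are right that this calibration is ``the delicate technical heart of the argument,'' but since you leave it entirely open and the $n\times n$ cell grid lacks the structural bottleneck the paper relies on, the proposal does not yet constitute a proof of Lemma~\ref{lem:main-gadget}.
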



%
%

\subsection{Construction of the SCSS instance}
\label{sec:construction-of-g^*-scss-planar}

In order to prove Theorem~\ref{thm:scss-main-hardness-planar-graphs}, we reduce from the \textsc{Grid Tiling} problem. The following assumption will be helpful in handling some of the
border cases of the gadget construction. We may assume that $1 < x,y< n$ holds for every $(x,y)\in S_{i,j}$: indeed, we can increase $n$ by two
and replace every $(x,y)$ by $(x+1,y+1)$ without changing the problem. This is just a minor technical modification\footnote{For the interested reader, what this modification does is to ensure no shortcut edge added in Section~\ref{subsec:description-of-edges-in-main-gadget} has either endpoint on the unbounded face of the planar embedding of the main gadget provided in Figure~\ref{fig:main-gadget}. This helps to streamline the proofs by avoiding the need to have to consider any special cases.} which is introduced to make some of the arguments easier in Section~\ref{main:proof} cleaner.


Given an instance $(k,n, \{S_{i,j}\ :\ i,j\in [k]\})$ of \gt, we construct an instance $(G^*, T^*)$ of SCSS the following way (see Figure~\ref{fig:big-picture}):

 \begin{figure}[t]
 \centering
 \includegraphics[height=5in]{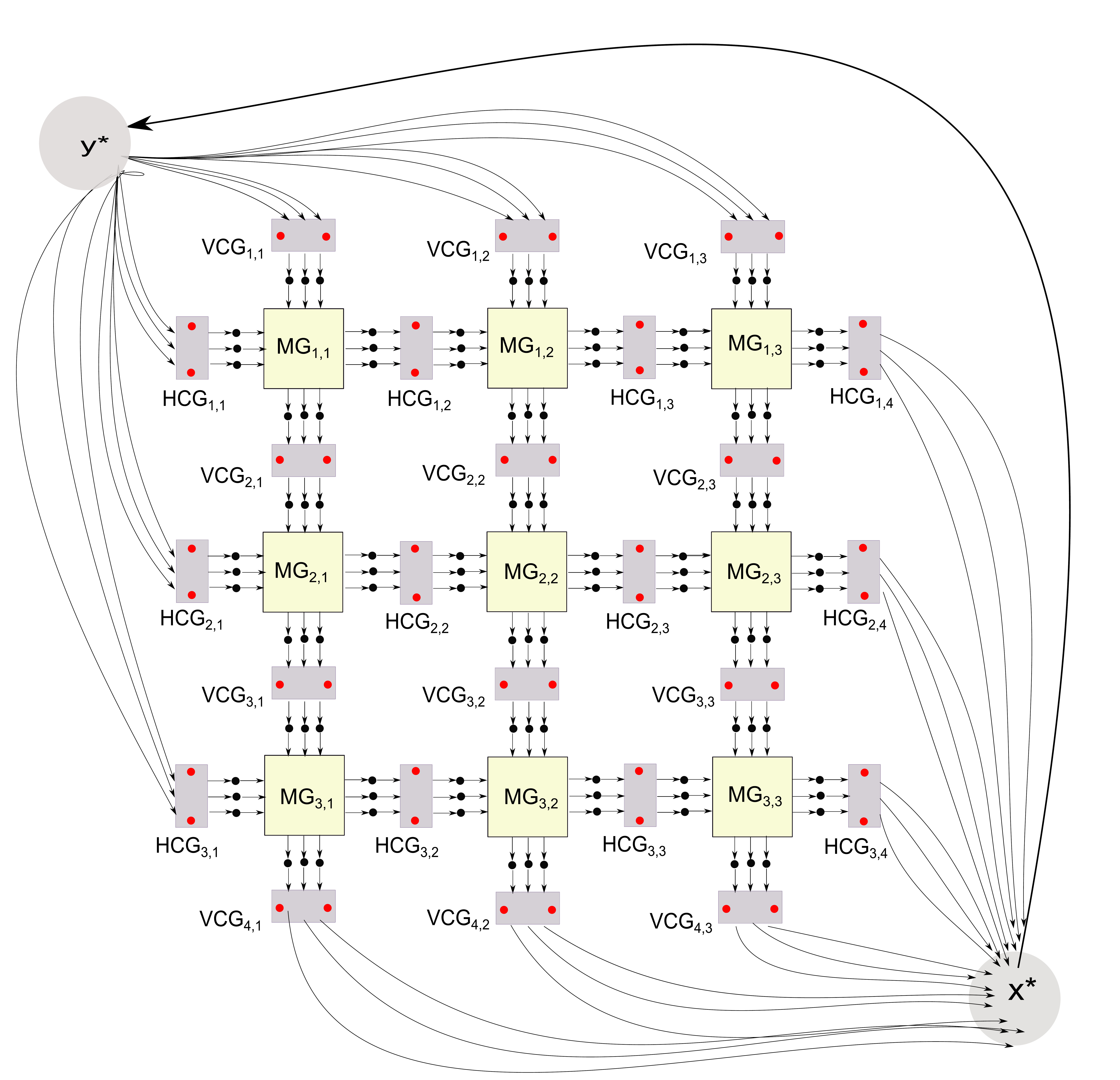}
 \caption{An illustration of the reduction from \textsc{Grid Tiling} to SCSS on planar graphs.
 \label{fig:big-picture}}
 \end{figure}

\begin{itemize}
\item We introduce a total of $k^2$ main gadgets and $2k(k+1)$ connector gadgets.
\item For every set $S_{i,j}$ in the \gt instance, we construct a main gadget $MG_{i,j}$ using
    Lemma~\ref{lem:main-gadget} for the subset $S_{i,j}$.
\item Half of the connector gadgets have the same orientation as in Figure~\ref{fig:connector-gadget} (with the $p_i$ vertices on the top side and the $q_i$ vertices on the bottom side), and we call them
    $HCG$ to denote \emph{horizontal connector gadgets}\footnote{The horizontal connector gadgets are so called because they connect things horizontally as seen by the reader.}. The other half of the connector gadgets are rotated anti-clockwise by 90 degrees
    with respect to the orientation of Figure~\ref{fig:connector-gadget}, and we call them $VCG$ to denote \emph{vertical
    connector gadgets}. The internal-distinguished vertices of the connector gadgets are shown in
    Figure~\ref{fig:big-picture}.
\item For each $1\leq i,j\leq k$, the main gadget $MG_{i,j}$ is surrounded by the following four connector gadgets:
    \begin{enumerate}
    \item The \emph{vertical connector gadgets} $VCG_{i,j}$ is on the top and $VCG_{i+1,j}$ is on the bottom.
        Identify (or glue together) each sink-distinguished vertex of $VCG_{i,j}$ with the top-distinguished vertex of
        $MCG_{i,j}$ of the same index. Similarly identify each source-distinguished vertex of $VCG_{i+1,j}$ with the
        bottom-distinguished vertex of $MCG_{i,j}$ of the same index.
    \item The \emph{horizontal connector gadgets} $HCG_{i,j}$ is on the left and $HCG_{i,j+1}$ is on the right.
        Identify (or glue together) each sink-distinguished vertex of $HCG_{i,j}$ with the left-distinguished vertex
        of $MCG_{i,j}$ of the same index. Similarly, identify each source-distinguished vertex of $HCG_{i,j+1}$ with
        the right-distinguished vertex of $MCG_{i,j}$ of the same index.
    \end{enumerate}
\item We introduce two special vertices $x^*, y^*$ and add an edge $(x^*, y^*)$ of weight 0.
\item For each $1\leq i\leq k$, add an edge of weight 0 from $y^*$ to each source-distinguished vertex of the vertical connector gadget $VCG_{1,i}$.
\item For each $1\leq j\leq k$, add an edge of weight 0 from $y^*$ to each source-distinguished vertex of the horizontal connector gadget $HCG_{j,1}$.
\item For each $1\leq i\leq k$, add an edge of weight 0 from each sink-distinguished vertex of the vertical connector gadget $VCG_{k+1,i}$ to $x^*$.
\item For each $1\leq j\leq k$, add an edge of weight 0 from each sink-distinguished vertex of the horizontal connector gadget $HCG_{j,k+1}$ to $x^*$.

\item For each $i\in [k], j\in [k+1]$, denote the two internal-distinguished vertices of $HCG_{i,j}$ by $\{p^{h}_{i,j}, q^{h}_{i,j}\}$
\item For each $i\in [k+1], j\in [k]$, denote the two internal-distinguished vertices of $VCG_{i,j}$ by $\{p^{v}_{i,j},
    q^{v}_{i,j}\}$

\item The set of terminals $T^*$ for the SCSS instance on $G^*$ is $\{x^*, y^*\}\cup \{p^{h}_{i,j}, q^{h}_{i,j}\ |\ 1\leq i\leq k+1,
    1\leq j\leq k\}\cup \{p^{v}_{i,j}, q^{v}_{i,j}\ |\ 1\leq i\leq k, 1\leq j\leq k+1\}$.
\item We note that the total number of terminals  is $|T^*|=4k(k+1)+2 = O(k^2)$
\item The edge set of $G^*$ is a disjoint union of
    \begin{itemize}
      \item Edges of main gadgets
      \item Edges of horizontal connector gadgets
      \item Edges of vertical connector gadgets

      \item Edges from $y^*$ to source-distinguished vertices of the vertical connector gadgets $VCG_{1,i}$ (for each $i\in [k]$), and from $y^*$ to source-distinguished vertices of horizontal connector gadgets $HCG_{j,1}$ (for each $j\in [k]$)

      \item Edges from sink-distinguished vertices of the vertical connector gadgets $VCG_{k+1,i}$ (for each $i\in [k]$) to $x^*$, and from sink-distinguished vertices of horizontal connector gadgets $HCG_{j,k+1}$ (for each $j\in [k]$) to $x^*$
      \item The edge $(x^*, y^*)$
    \end{itemize}
\end{itemize}

Define the following quantity: 
\begin{equation}
\label{eqn:w-star} W^*_n = k^{2}\cdot M^*_n + 2k(k+1)\cdot C^*_n.
\end{equation}
In the next two sections, we show that \gt has a solution if and only if the SCSS instance $(G^*,T^*)$ has a solution of weight at
most $W^{*}_n$.


\subsection{\gt has a solution $\Rightarrow$ SCSS has a solution of weight $\leq W^*_n$}
\label{proof:scss-planar-easy}

\begin{lemma}
If the \gt instance $(k,n, \{S_{i,j}\ :\ i,j\in [k]\})$ has a solution, then the SCSS instance $(G^*, T^*)$ has a solution of weight at most $W^*_n$.
\label{lem:scss-planar-easy}
\end{lemma}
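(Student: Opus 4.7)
The plan is to lift any Grid Tiling solution $\{\gamma_{i,j} = (x_{i,j}, y_{i,j}) : i,j \in [k]\}$ into an SCSS solution $E^* \subseteq E(G^*)$ of weight exactly $W^*_n$. For each main gadget $MG_{i,j}$ we invoke the first part of Lemma~\ref{lem:main-gadget} to pick an edge set $E_{x_{i,j}, y_{i,j}}$ of weight $M^*_n$ that represents $(x_{i,j}, y_{i,j}) \in S_{i,j}$. For each horizontal connector gadget $HCG_{i,j}$ (with $i \in [k]$, $j \in [k+1]$) we invoke the first part of Lemma~\ref{lem:connector-gadget} to pick an edge set of weight $C^*_n$ representing the index
\[
h_{i,j} = \begin{cases} x_{i,1} & \text{if } j = 1, \\ x_{i,j-1} = x_{i,j} & \text{if } 2 \le j \le k, \\ x_{i,k} & \text{if } j = k+1, \end{cases}
\]
where the middle case is well defined by the row constraint of the Grid Tiling solution. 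We handle the vertical connector gadgets symmetrically, with each $VCG_{i,j}$ receiving an edge set of weight $C^*_n$ representing an index $v_{i,j}$ defined from the $y_{i,\cdot}$ via the column constraint. Finally we add the weight-$0$ edge $(x^*, y^*)$, together with, for each border connector gadget, the weight-$0$ edge from $y^*$ to its active source-distinguished vertex (for $HCG_{i,1}$ and $VCG_{1,j}$) and the weight-$0$ edge from its active sink-distinguished vertex to $x^*$ (for $HCG_{i,k+1}$ and $VCG_{k+1,j}$). Summing contributions, the total weight is exactly $k^2 \cdot M^*_n + 2k(k+1) \cdot C^*_n = W^*_n$, as required by~\eqref{eqn:w-star}.

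It then remains to verify that $G^*[E^*]$ is strongly connected on $T^*$. Since the edge $(x^*,y^*)$ is included, it suffices to show that $y^*$ reaches every terminal and every terminal reaches $x^*$. We trace a horizontal flow along each row $i$: by the connectedness property of $HCG_{i,1}$ with index $x_{i,1}$ (Definition~\ref{defn:connectedness}), $y^*$ reaches the active source $p_{x_{i,1}}$, both internal-distinguished terminals $p^h_{i,1}$ and $q^h_{i,1}$, and the active sink $q_{x_{i,1}}$, which is identified with $\ell_{x_{i,1}}$ of $MG_{i,1}$; the representation property of $E_{x_{i,1}, y_{i,1}}$ provides an $\ell_{x_{i,1}} \leadsto r_{x_{i,1}}$ path, whose endpoint is identified with the active source of $HCG_{i,2}$ --- whose index $h_{i,2} = x_{i,1}$ matches by the row constraint of Grid Tiling --- so the flow continues through $HCG_{i,2}, MG_{i,2}, \ldots, HCG_{i,k+1}$ and finally into $x^*$. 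A symmetric vertical flow through the $VCG$s and $MG$s handles each column.

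These two families of directed paths together cover every connector gadget in $G^*$, so each internal-distinguished terminal $p^h_{i,j}, q^h_{i,j}, p^v_{i,j}, q^v_{i,j}$ lies on a directed path from $y^*$ to $x^*$, and the edge $(x^*, y^*)$ closes the loop, establishing strong connectivity on~$T^*$. The only subtle point is index-matching at each interior connector gadget, whose index is forced simultaneously from both sides by the active left- and right-distinguished (respectively top- and bottom-distinguished) vertices of its two neighbouring main gadgets; the row and column constraints of the Grid Tiling solution are precisely what guarantee that these two forced indices agree, making the construction internally consistent. Beyond this observation the argument is a straightforward summation of gadget weights and a concatenation of paths promised by Lemmas~\ref{lem:connector-gadget} and~\ref{lem:main-gadget}.
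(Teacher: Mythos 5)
Your proposal is correct and follows essentially the same approach as the paper: take the weight-$0$ edge $(x^*,y^*)$ plus the appropriate boundary edges, apply part~(1) of Lemmas~\ref{lem:connector-gadget} and~\ref{lem:main-gadget} with the indices dictated by the Grid Tiling solution, observe the weight sums to exactly $W^*_n$, and then verify that every terminal both reaches $x^*$ and is reachable from $y^*$ by chaining the $p_i\leadsto q_i$ through-paths of the connector gadgets with the $\ell_i\leadsto r_i$ (and $t_j\leadsto b_j$) paths of the main gadgets, using the row/column constraints of the Grid Tiling solution to ensure the indices match at each gluing. The one phrasing to be careful with is that $p^h_{i,j}$ and $q^h_{i,j}$ need not lie on a single common $y^*\to x^*$ path — conditions 1--4 of the connectedness property give separate $y^*\leadsto p^h_{i,j}\leadsto x^*$ and $y^*\leadsto q^h_{i,j}\leadsto x^*$ paths — but since your text says each terminal lies on \emph{a} directed path from $y^*$ to $x^*$, the argument is sound and matches the paper's.
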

\begin{proof}
Since \gt has a solution, for each $1\leq i,j\leq k$ there is an entry $(x_{i,j}, y_{i,j})=\gamma_{i,j}\in S_{i,j}$ such that
\begin{itemize}
\item For every $i\in [k]$, we have $x_{i,1}=x_{i,2}=x_{i,3}=\ldots=x_{i,k} = \alpha_i$
\item For every $j\in [k]$, we have $y_{1,j}=y_{2,j}=y_{3,j}=\ldots=y_{k,j} =  \beta_j$
\end{itemize}
We build a solution $E^*$ for the SCSS instance $(G^*, T^*)$ and show that it has weight at most $W^*_n$. In the edge set $E^*$, we
take the following edges:
\begin{enumerate}
\item The edge $(x^*, y^*)$ which has weight 0.

\item For each $j\in [k]$ the edge of weight 0 from $y^*$ to the source-distinguished vertex of $VCG_{1,j}$ of index $\beta_j$, and the edge of weight 0 from the sink-distinguished vertex of $VCG_{k+1,j}$ of index $\beta_j$ to $x^*$.

\item For each $i\in [k]$ the edge of weight 0 from $y^*$ to the source-distinguished vertex of $HCG_{i,1}$ of index $\alpha_i$, and the edge of weight 0 from the sink-distinguished vertex of $HCG_{i,k+1}$ of index $\alpha_i$ to $x^*$

\item For each $1\leq i,j\leq k$ for the main gadget $MG_{i,j}$, use Lemma~\ref{lem:main-gadget}(1) to generate a solution
    $E^{M}_{i,j}$ which has weight $M^*_n$ and represents $(\alpha_i, \beta_j)$.

\item For each $1\leq i\leq k$ and $1\leq j\leq k+1$ for the horizontal connector gadget $HCG_{i,j}$, use
    Lemma~\ref{lem:connector-gadget}(1) to generate a solution $E^{HC}_{i,j}$ of weight $C^*_n$ which represents $\alpha_i$.

\item For each $1\leq j\leq k$ and $1\leq i\leq k+1$ for the vertical connector gadget $VCG_{i,j}$, use
    Lemma~\ref{lem:connector-gadget}(1) to generate a solution $E^{VC}_{i,j}$ of weight $C^*_n$ which represents $\beta_j$.
\end{enumerate}

The weight of $E^*$ is $k^2\cdot M^*_n + k(k+1)\cdot C^*_n+ k(k+1)\cdot
C^*_n = W^*_n$. It remains to show that $E^*$ is a solution for the SCSS
instance $(G^*, T^*)$. Since we have already picked up the edge $(x^*,
y^*)$, it is enough to show that for any terminal $t\in T^{*}\setminus
\{x^*, y^*\}$, both $t\leadsto x^*$ and $y^*\leadsto t$ paths exist in
$E^*$. Then for any two terminals $t_1, t_2$, there is a $t_1 \leadsto
t_2$ path given by $t_1\leadsto x^*\rightarrow y^*\leadsto t_2$.

We now show the existence of both a $t\leadsto x^*$ path and a $y^*\leadsto t$ path in $E^*$ when $t$ is a terminal in a vertical connector gadget. Without loss of generality, let $t$ be the terminal $p^{v}_{i,j}$ for some $1\leq i\leq k, 1\leq j\leq k+1$.
\begin{itemize}
\item \underline{Existence of $p^{v}_{i,j}\leadsto x^*$ path in $E^*$}: By Lemma~\ref{lem:connector-gadget}(1), the terminal $p^{v}_{i,j}$ can
    reach the sink-distinguished vertex of $VCG_{i,j}$ which has the index $\beta_{j}$. This vertex is the
    top-distinguished vertex of the index $\beta_j$ of the main gadget $MG_{i,j}$. By Definition~\ref{defn:represents-main}, there
    is a path from this vertex to the bottom-distinguished vertex of the index $\beta_j$ of the main gadget $MG_{i,j}$.
    However this vertex is exactly the same as the source-distinguished vertex of the index $\beta_j$ of $VCG_{i+1,j}$. By
    Lemma~\ref{lem:connector-gadget}(1), the source-distinguished vertex of the index $\beta_j$ of $VCG_{i+1,j}$ can reach
    the sink-distinguished vertex of the index $\beta_j$ of $VCG_{i+1,j}$. This vertex is exactly the top-distinguished
    vertex of $MG_{i+1,j}$. Continuing in this way we can reach the source-distinguished vertex of the index $\beta_j$ of
    $VCG_{k+1,j}$. By Lemma~\ref{lem:connector-gadget}(1), this vertex can reach the sink-distinguished vertex of the
    index $\beta_j$ of $VCG_{k+1,j}$. But $E^*$ also contains an edge (of weight 0) from this sink-distinguished vertex to $x^*$, and hence there is a
    $p^{v}_{i,j}\leadsto x^*$ path in $E^*$.

\item \underline{Existence of $y^*\leadsto p^{v}_{i,j}$ path in $E^*$}: Recall that $E^*$ contains an edge (of weight 0) from $y^*$ to the source-distinguished
    vertex of the index $\beta_j$ of $VCG_{1,j}$. If $i=1$, then by Lemma~\ref{lem:connector-gadget}(1), there is a path
    from this vertex to $p^{v}_{1,j}$. If $i\geq 2$, then by Lemma~\ref{lem:connector-gadget}(1), there is a path from
    source-distinguished vertex of the index $\beta_j$ of $VCG_{1,j}$ to the sink-distinguished vertex of the index
    $\beta_j$ of $VCG_{1,j}$. But this is the top-distinguished vertex of $MG_{1,j}$ of the index $\beta_j$. By
    Definition~\ref{defn:represents-main}, from this vertex we can reach the bottom-distinguished vertex of the index $\beta_{j}$
    of $MG_{1,j}$. However, this vertex is exactly the source-distinguished vertex of index $\beta_j$ of $VCG_{2,j}$. Continuing this way we can reach the source-distinguished vertex of the index $\beta_j$ of $VCG_{i,j}$.
    By Lemma~\ref{lem:connector-gadget}(1), from this vertex we can reach $p^{v}_{i,j}$. Hence there is a $y^*\leadsto p^{v}_{i,j}$
    path in $E^*$.
\end{itemize}
The arguments when $t$ is a terminal in a horizontal connector gadget are similar, and we omit the details here.
\end{proof}

\subsection{SCSS has a solution of weight $\leq W^*_n$ $\Rightarrow$ \gt has a solution }
\label{proof:scss-planar-hard}

First we show that the following preliminary claim:
\begin{claim}
Let $E'$  be any solution to the SCSS instance $(G^*, T^*)$. Then
\begin{itemize}
\item $E'$ restricted to each connector gadget satisfies the connectedness property (see Definition~\ref{defn:connectedness}).
\item $E'$ restricted to each main gadget satisfies the connectedness property (see Definition~\ref{defn:source-sink-connectivity}).
\end{itemize}
\label{claim:scss-planar-gadgets-are-connected}
\end{claim}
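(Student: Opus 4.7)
The plan is to trace, for each terminal $t \in T^*$, the $y^* \leadsto t$ and $t \leadsto x^*$ paths that must exist in any SCSS solution $E'$ (since $\{x^*, y^*\} \subseteq T^*$), and to exploit the rigidly controlled in/out-degrees of the boundary vertices of each gadget. The key structural observation I will establish first is that the only vertices of a connector gadget $CG$ having an incoming edge from outside $V(CG)$ are the source-distinguished vertices $P$, and the only ones with an outgoing edge to outside $V(CG)$ are the sink-distinguished vertices $Q$; symmetrically, for a main gadget $MG_{i,j}$ the only vertices with incoming edges from outside are $L \cup T$ and the only ones with outgoing edges to outside are $R \cup B$. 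This follows directly from the identifications in the construction: each source-distinguished vertex of a connector is identified either with a sink of an adjacent main gadget (which has no outgoing edge there) or is attached to $y^*$ by an incoming $0$-weight glue edge, and analogously for the other cases.

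For the connector-gadget part of the claim, let $CG$ be a connector gadget with internal terminals $p, q$. To establish condition~1 of Definition~\ref{defn:connectedness}, I will look at the $y^* \leadsto p$ path in $E'$ and pick the first vertex it visits in $V(CG)$; by the structural observation, this vertex must be a source-distinguished $p_i \in P$, and the remaining walk to $p$ stays inside $V(CG)$. Every edge of this remaining walk lies in $E(CG)$, since the only edges of $G^*$ with both endpoints in $V(CG)$ are those of $E(CG)$ (the only glue edges incident to $V(CG)$ have $x^*$ or $y^*$ as the other endpoint). Conditions~2, 3, 4 follow identically by tracing the $y^* \leadsto q$, $p \leadsto x^*$, and $q \leadsto x^*$ paths.

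For the main-gadget part, I cannot use internal terminals of $MG_{i,j}$ (there are none), but I will use the internal terminals of the four surrounding connector gadgets, all of which exist for $1 \le i, j \le k$. For condition~1 of Definition~\ref{defn:source-sink-connectivity} I use $p^h_{i,j}$ of the left neighbour $HCG_{i,j}$: its $p^h_{i,j} \leadsto x^*$ path leaves $HCG_{i,j}$ through the outgoing edge of some sink-distinguished vertex of $HCG_{i,j}$, and that edge lands on some $\ell_m \in L$ of $MG_{i,j}$; thereafter the path must leave $MG_{i,j}$ through $R \cup B$ before reaching $x^*$. Condition~2 uses $p^v_{i,j}$ of the top neighbour analogously. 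For condition~3 I use $p^h_{i,j+1}$ of the right neighbour: its $y^* \leadsto p^h_{i,j+1}$ path must enter $HCG_{i,j+1}$ through one of its source-distinguished vertices, which coincides with a vertex in $R$ of $MG_{i,j}$, and that path had to enter $MG_{i,j}$ through some vertex in $L \cup T$ immediately beforehand. Condition~4 uses $p^v_{i+1,j}$ symmetrically.

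The only subtle point I foresee is making precise that the traced subpaths inside $V(CG)$ or $V(MG_{i,j})$ use only edges of the corresponding gadget, rather than, say, a glue edge that happens to have both endpoints there. This is immediate since the only glue edges in $G^*$ are $(x^*, y^*)$ together with edges between $\{x^*, y^*\}$ and source/sink-distinguished vertices of the boundary connector gadgets, and none of these have both endpoints inside a single $V(CG)$ or $V(MG_{i,j})$. I do not anticipate any deeper obstacle; the proof is essentially careful bookkeeping of entry and exit points using the rigidly controlled in/out-degrees at the identified boundary vertices.
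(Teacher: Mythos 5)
Your proposal is correct and follows essentially the same route as the paper's proof: both rest on the observation that the only entry points into a gadget are its source-distinguished (resp.\ $L \cup T$) vertices and the only exit points are its sink-distinguished (resp.\ $R \cup B$) vertices, and then trace the mandatory $y^*\leadsto t$ and $t\leadsto x^*$ paths for the internal connector terminals to verify each condition of the connectedness definitions. One small tightening worth making: for the connector argument you should take the \emph{last} entry of the $y^*\leadsto p$ path into $V(CG)$ rather than the first vertex it visits in $V(CG)$ (a path could in principle exit via $Q$ and re-enter via $P$), so that the remaining subwalk to $p$ is guaranteed to stay inside the gadget.
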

\begin{proof}
First we show that the edge set $E'$ restricted to each connector gadget satisfies the connectedness property.
Consider a horizontal connector gadget $HCG_{i,j}$ for some $1\leq j\leq k+1, 1\leq i\leq k$. This gadget contains two terminals: $p_{i,j}^{h}$ and $q_{i,j}^{h}$. The only incoming edges from $G^*\setminus HCG_{i,j}$ into $HCG_{i,j}$ are incident onto the source-distinguished vertices of $HCG_{i,j}$, and the only outgoing edges from $HCG_{i,j}$ into $G^*\setminus HCG_{i,j}$ are incident on the sink-distinguished vertices of $HCG_{i,j}$. Since $E'$ is a solution of the SCSS instance $(G^*, T^*)$ it follows that $E^*$ contains a path from $p_{i,j}^{h}$ to the terminals in $T^* \setminus \{p_{i,j}^{h}\cup q_{i,j}^{h}\}$. Since the only outgoing edges from $HCG_{i,j}$ into $G^*\setminus HCG_{i,j}$ are incident on the sink-distinguished vertices of $HCG_{i,j}$, it follows that $p_{i,j}^{h}$ can reach some sink-distinguished vertex of $HCG_{i,j}$ in the solution $E'$. The other three conditions of Definition~\ref{defn:connectedness} can be verified similarly, and hence $E'$ restricted to each main connector satisfies the
connectedness property.

Next we argue that $E'$ restricted to each main gadget satisfies the connectedness property. Consider a main gadget
$MG_{i,j}$. Since $E'$ is a solution for the SCSS instance $(G^*, T^*)$ it follows that the terminal $p_{i,j}^{h}$ from $HCG_{i,j}$ is able to reach other terminals of $T^*$. However, the only outgoing edges from $HCG_{i,j}$ into $G^*\setminus HCG_{i,j}$ are incident on the sink-distinguished vertices of $HCG_{i,j}$. Moreover, each sink-distinguished vertex of $HCG_{i,j}$ is identified with a left-distinguished vertex of $MG_{i,j}$ of the same index. Hence, these outward paths from $p_{i,j}^{h}$ to other terminals of $T^*$ must continue through the left-distinguished vertices of $MG_{i,j}$. However, the only outgoing edges from $MG_{i,j}$ into $G^*\setminus MG_{i,j}$ are incident on the right-distinguished vertices or bottom-distinguished vertices of $HCG_{i,j}$. Hence, some left-distinguished vertex of $MG_{i,j}$ can reach some vertex in the set given by the union of right-distinguished and bottom-distinguished vertices of $MG_{i,j}$.
Hence the first condition of Definition~\ref{defn:source-sink-connectivity} is satisfied. Similarly it can be shown the other three conditions
of Definition~\ref{defn:source-sink-connectivity} also hold, and hence $E'$ restricted to each main gadget satisfies the
connectedness property.
\end{proof}

Now we are ready to prove the following lemma:

\begin{lemma}
If the SCSS instance $(G^*, T^*)$ has a solution $E''$ of weight at most $W^*_n$, then the \gt instance $(k,n, \{S_{i,j}\ :\ i,j\in [k]\})$ has a solution.
\label{lem:scss-planar-hard}
\end{lemma}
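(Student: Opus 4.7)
The plan is to combine a tight weight-budget argument with a path-tracing argument through the gadget interfaces. First, I would use Claim~\ref{claim:scss-planar-gadgets-are-connected} to conclude that $E''$ restricted to each connector gadget satisfies Definition~\ref{defn:connectedness}, and restricted to each main gadget satisfies Definition~\ref{defn:source-sink-connectivity}. By the contrapositives of Lemma~\ref{lem:connector-gadget}(2) and Lemma~\ref{lem:main-gadget}(2), each such restriction then has weight at least $C^*_n$ or $M^*_n$, respectively. Because the gadget edge sets are pairwise disjoint and the remaining edges incident to $x^*$ and $y^*$ all have weight $0$, summing yields $\mathrm{weight}(E'')\ge k^2 M^*_n + 2k(k+1) C^*_n = W^*_n$. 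Combined with the hypothesis $\mathrm{weight}(E'')\le W^*_n$, the budget is tight on every single gadget, so Lemma~\ref{lem:connector-gadget}(2) and Lemma~\ref{lem:main-gadget}(2) give that each $MG_{i,j}$ represents some pair $(\alpha_{i,j},\beta_{i,j})\in S_{i,j}$, each $HCG_{i,j}$ represents some $h_{i,j}\in[n]$, and each $VCG_{i,j}$ represents some $v_{i,j}\in[n]$.

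Next I would use SCSS feasibility to force consistency across interfaces. Fix an inner horizontal connector $HCG_{i,j}$ with $2\le j\le k$. Since $E''$ is an SCSS solution, the terminal $p^h_{i,j}$ must have a path to $x^*$ in $E''$. The only edges in $E''|_{HCG_{i,j}}$ that reach a vertex outside $HCG_{i,j}$ are incident to sink-distinguished vertices, and by Definition~\ref{defn:represents-connector} the only such vertex with its incoming edge in $E''$ is $q_{h_{i,j}}=\ell_{h_{i,j}}$ of $MG_{i,j}$. Hence the path exits through this shared vertex, and continuing in $MG_{i,j}$ requires that $\ell_{h_{i,j}}$'s unique outgoing edge lies in $E''|_{MG_{i,j}}$; by Definition~\ref{defn:represents-main} this forces $h_{i,j}=\alpha_{i,j}$. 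A symmetric trace of the required $y^*\leadsto p^h_{i,j}$ path shows it must enter $HCG_{i,j}$ through the unique source $p_{h_{i,j}}=r_{h_{i,j}}$ of $MG_{i,j-1}$, and reaching $r_{h_{i,j}}$ requires the incoming edge to it in $E''|_{MG_{i,j-1}}$, which forces $h_{i,j}=\alpha_{i,j-1}$. The boundary connectors $HCG_{i,1}$ and $HCG_{i,k+1}$ each impose only one of the two constraints, because the missing side is a weight-$0$ edge to $y^*$ or $x^*$; this still suffices to anchor the chain.

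Chaining these equalities along row $i$ gives $\alpha_{i,1}=h_{i,2}=\alpha_{i,2}=\cdots=h_{i,k}=\alpha_{i,k}$, so $\alpha_{i,j}$ is constant in $j$; call it $\alpha_i$. Applying the analogous path-trace to the vertical connectors $VCG_{i,j}$ and the terminals $p^v_{i,j}, q^v_{i,j}$ yields $\beta_{i,j}=\beta_j$ constant in $i$. Since $(\alpha_i,\beta_j)=(\alpha_{i,j},\beta_{i,j})\in S_{i,j}$ by the representation of $MG_{i,j}$, setting $\gamma_{i,j}:=(\alpha_i,\beta_j)$ gives a solution to the \gt instance: horizontal neighbors $\gamma_{i,j}$ and $\gamma_{i,j+1}$ share first coordinate $\alpha_i$, and vertical neighbors $\gamma_{i,j}$ and $\gamma_{i+1,j}$ share second coordinate $\beta_j$.

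The main obstacle is the path-tracing step of paragraph two. It leverages the ``bottleneck'' structure built into the gadgets: after the budget is tight, the representation properties pin down a single active source-distinguished vertex and a single active sink-distinguished vertex per connector, and similarly a single active vertex on each side of every main gadget, so the only way a required $t\leadsto t'$ path can traverse a glued boundary is if the indices advertised by the two adjacent gadgets agree. Care is needed to verify that no alternative detour (e.g.\ through another connector or back through $y^*\to x^*$) can bypass these forced transitions, but this follows because $y^*$ has no incoming edges other than from $x^*$, and every edge leaving a gadget is incident to a distinguished vertex whose activation is controlled entirely by that gadget's representation.
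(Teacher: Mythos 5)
Your proof is correct and takes essentially the same approach as the paper: after using Claim~\ref{claim:scss-planar-gadgets-are-connected} and a tight-budget argument to force every gadget restriction to have weight exactly $C^*_n$ or $M^*_n$ and hence represent an index (or pair), you trace the required terminal-to-$x^*$ and $y^*$-to-terminal paths across glued distinguished vertices to force the represented indices of adjacent gadgets to agree. This is exactly the paper's chain of equalities $\alpha_{i,j}=\alpha'_{i,j}=\alpha_{i,j+1}$ and $\beta_{i,j}=\beta'_{i,j}=\beta_{i+1,j}$, so no substantive difference.
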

\begin{proof}
By Claim~\ref{claim:scss-planar-gadgets-are-connected}, the edge set $E''$ restricted to any connector gadget satisfies the
connectedness property and the edge set $E''$ restricted to any main gadget satisfies the connectedness property. Let $\mathcal{C}$ and $\mathcal{M}$ be the sets of connector and main gadgets respectively. Recall that $|\mathcal{C}|=2k(k+1)$ and $|\mathcal{M}|=k^2$. Recall that we have defined $W^*_n$ as $k^{2}\cdot M^*_n + 2k(k+1)C^*_n$. Let $\mathcal{C}'\subseteq \mathcal{C}$ be the set of connector gadgets that have weight at most $C^*_n$ in $E''$. By Lemma~\ref{lem:connector-gadget}(2), each connector gadget from the set $\mathcal{C}'$ has weight exactly $C^*_n$. Since all edge-weights in connector gadgets are positive integers, each connector gadget from the set $\mathcal{C}\setminus \mathcal{C}'$ has weight at least $C^{*}_n+1$. Similarly, let $\mathcal{M}'\subseteq \mathcal{M}$ be the set of main gadgets which have weight at most $M^*_n$ in~$E''$. By Lemma~\ref{lem:main-gadget}(2), each main gadget from the set $\mathcal{M}'$ has weight exactly $M^*_n$. Since all edge-weights in main gadgets are positive integers, each main gadget from the set $\mathcal{M}\setminus \mathcal{M}'$ has weight at least $M^{*}_n+1$. As any two gadgets are pairwise edge-disjoint, we have
\begin{align*}
W^*_n &= k^{2}\cdot M^*_n + 2k(k+1)C^*_n \\
&\geq |\mathcal{M}\setminus \mathcal{M}'|\cdot (M^{*}_n+1) + |\mathcal{M}'|\cdot M^{*}_n + |\mathcal{C}\setminus \mathcal{C}'|\cdot (C^{*}_n+1) + |\mathcal{C}'|\cdot C^{*}_n  \\
&= |\mathcal{M}|\cdot M^*_n + |\mathcal{C}|\cdot C^*_n + |\mathcal{M}\setminus \mathcal{M}'| + |\mathcal{C}\setminus \mathcal{C}'| \\
&= k^{2}\cdot M^*_n + 2k(k+1)\cdot C^*_n + |\mathcal{M}\setminus \mathcal{M}'| + |\mathcal{C}\setminus \mathcal{C}'|\\
&= W^*_n + |\mathcal{M}\setminus \mathcal{M}'| + |\mathcal{C}\setminus \mathcal{C}'|.
\end{align*}
This implies $|\mathcal{M}\setminus \mathcal{M}'| = 0 = |\mathcal{C}\setminus \mathcal{C}'|$. However, we had $\mathcal{M'}\subseteq \mathcal{M}$ and $\mathcal{C'}\subseteq \mathcal{C}$. Therefore, $\mathcal{M'}= \mathcal{M}$ and $\mathcal{C'}= \mathcal{C}$. Hence in $E''$, each connector gadget has weight $C^*_n$ and each main gadget has weight $M^*_n$.
%
%
From
Lemma~\ref{lem:connector-gadget}(2) and Lemma~\ref{lem:main-gadget}(2), we have
\begin{itemize}
\item For each vertical connector gadget $VCG_{i,j}$, the restriction of the edge set $E''$ to $VCG_{i,j}$ represents an
    integer $\beta_{i,j}\in [n]$ where $i\in [k+1], j\in [k]$.

\item For each horizontal connector gadget $HCG_{i,j}$, the restriction of the edge set $E''$ to $HCG_{i,j}$ represents an
    integer $\alpha_{i,j}$ where $i\in [k], j\in [k+1]$.

\item For each main gadget $MG_{i,j}$, the restriction of the edge set $E''$ to $MG_{i,j}$ represents an ordered pair $(\alpha'_{i,j}, \beta'_{i,j})\in S_{i,j}$ where $i, j\in [k]$.

\end{itemize}
Consider the main gadget $MG_{i,j}$ for any $1\leq i,j\leq k$. We can make the following observations:
\begin{itemize}
\item \underline{$\beta_{i,j}=\beta'_{i,j}$}: By Lemma~\ref{lem:connector-gadget}(2) and
    Definition~\ref{defn:represents-connector}, the terminal vertices in $VCG_{i,j}$ can exit the vertical connector
    gadget only via the unique edge entering the sink-distinguished vertex of index $\beta_{i,j}$. By
    Lemma~\ref{lem:main-gadget}(2) and Definition~\ref{defn:represents-main}, the only edge in $E''$ incident to any
    top-distinguished vertex of $MG_{i,j}$ is the unique edge leaving the top-distinguished vertex of the index
    $\beta'_{i,j}$. Hence if $\beta_{i,j}\neq \beta'_{i,j}$ then the terminals in $VCG_{i,j}$ will not be able to exit $VCG_{i,j}$ and reach
    other terminals.

\item \underline{$\beta'_{i,j}=\beta_{i+1,j}$}: By Lemma~\ref{lem:connector-gadget}(2) and
    Definition~\ref{defn:represents-connector}, the unique edge entering $VCG_{i+1,j}$ is the edge entering the
    source-distinguished vertex of the index $\beta_{i+1,j}$. By Lemma~\ref{lem:main-gadget}(2) and
    Definition~\ref{defn:represents-main}, the only edge in $E''$ incident to any bottom-distinguished vertex of
    $MG_{i,j}$ is the unique edge entering the bottom-distinguished vertex of index $\beta'_{i,j}$. Hence if
    $\beta'_{i,j}\neq \beta_{i+1,j}$, then the terminals in $VCG_{i+1,j}$ cannot be reached from the other terminals.

\item \underline{$\alpha_{i,j}=\alpha'_{i,j}$}: By Lemma~\ref{lem:connector-gadget}(2) and
    Definition~\ref{defn:represents-connector}, the paths starting at the terminal vertices in $HCG_{i,j}$ can leave the horizontal connector
    gadget only via the unique edge entering the sink-distinguished vertex of index $\alpha_{i,j}$. By
    Lemma~\ref{lem:main-gadget}(2) and Definition~\ref{defn:represents-main}, the only edge in $E''$ incident to any
    left-distinguished vertex of $MG_{i,j}$ is the unique edge leaving the left-distinguished vertex of the index
    $\alpha'_{i,j}$. Hence if $\alpha_{i,j}\neq \alpha'_{i,j}$ then the terminals in $HCG_{i,j}$ will not be able to reach
    other terminals.

\item \underline{$\alpha'_{i,j}=\alpha_{i,j+1}$}: By Lemma~\ref{lem:connector-gadget}(2) and
    Definition~\ref{defn:represents-connector}, the unique edge entering $HCG_{i,j+1}$ is the edge entering the
    source-distinguished vertex of index $\alpha_{i,j+1}$. By Lemma~\ref{lem:main-gadget}(2) and
    Definition~\ref{defn:represents-main}, the only edge in $E''$ incident to any right-distinguished vertex of $MG_{i,j}$
    is the unique edge entering the right-distinguished vertex of index $\alpha'_{i,j}$. Hence if
    $\alpha'_{i,j}\neq \alpha_{i,j+1}$, then the terminals in $HCG_{i,j+1}$ cannot be reached from the other terminals.
\end{itemize}
We claim that for $1\leq i,j\leq k$, the entries $(\alpha'_{i,j}, \beta'_{i,j})\in S_{i,j}$ form a solution for the \gt instance.
For this we need to check two conditions:
\begin{itemize}
\item \underline{$\alpha'_{i,j} = \alpha'_{i,j+1}$}: This holds because $\alpha_{i,j} = \alpha'_{i,j} = \alpha_{i,j+1} =
    \alpha'_{i,j+1}$.
\item \underline{$\beta'_{i,j} = \beta'_{i+1,j}$}: This holds because $\beta_{i,j} = \beta'_{i,j} = \beta_{i+1,j} =
    \beta'_{i+1,j}$.
\end{itemize}
This completes the proof of the lemma.
\end{proof}

\subsection{Proof of Theorem~\ref{thm:scss-main-hardness-planar-graphs}}
\label{sec:main-scss-planar-proof-subsec}

Finally we are ready to prove Theorem~\ref{thm:scss-main-hardness-planar-graphs} which is restated below:

\begin{reptheorem}{thm:scss-main-hardness-planar-graphs}
The edge-unweighted version of the SCSS problem is W[1]-hard parameterized by the number of terminals $k$, even when the underlying undirected graph is planar. Moreover, under the ETH, the SCSS problem on planar graphs cannot be solved in $f(k)\cdot n^{o(\sqrt{k})}$ time where $f$ is any computable function, $k$ is the number of terminals
  and $n$ is the number of vertices in the instance.
\end{reptheorem}
%
%
\begin{proof}
Each connector gadget has $O(n^2)$ vertices and $G^*$ has $O(k^2)$ connector gadgets. Each main gadget has $O(n^3)$ vertices and $G^*$ has $O(k^2)$ main gadgets. It is easy to see that the graph $G^*$ has $O(n^{3}k^{2})=\poly(n,k)$ vertices. Moreover, the graph $G^*$ can be constructed in $\poly(n+k)$ time: recall that each connector gadget (Lemma~\ref{lem:connector-gadget}) and main gadget (Lemma~\ref{lem:main-gadget}) can be constructed in polynomial time. Each main gadget and connector gadget is planar, and any two gadgets are pairwise edge-disjoint. Moreover, the 0-weight edges incident on $x^*$ or $y^*$ do not affect planarity (see Figure~\ref{fig:big-picture} for a planar embedding). Hence, $G^*$ is planar.

It is known~\cite[Theorem 14.28]{fpt-book} that $k\times k$ \gt is W[1]-hard parameterized by $k$, and under ETH cannot be solved in $f(k)\cdot n^{o(k)}$ for any computable function $f$. Combining the two directions from Section~\ref{proof:scss-planar-easy} and Section~\ref{proof:scss-planar-hard}, we get a parameterized reduction from $k\times k$ \gt to a planar instance of SCSS with $O(k^2)$ terminals.  Hence, it follows that SCSS on planar graphs is W[1]-hard and under ETH cannot be solved in $f(k)\cdot n^{o(\sqrt{k})}$ time for any computable function $f$.
\end{proof}

This shows that the $2^{O(k)}\cdot n^{O(\sqrt{k})}$ algorithm for SCSS on planar graphs given in
Theorem~\ref{thm:algo-sqrt-k-h-minor-free} is asymptotically optimal.

\section{Proof of Lemma~\ref{lem:connector-gadget}: constructing connector gadgets} \label{connector:proof}


We prove Lemma~\ref{lem:connector-gadget} in this section, by constructing a connector gadget satisfying the specifications of Section~\ref{sec:connector:exist}.

\begin{figure}[H]
\centering
\def\svgwidth{\linewidth}%
\executeiffilenewer{connector-new.svg}{connector-new.pdf}%
{inkscape -z -D --file=connector-new.svg %
--export-pdf=connector-new.pdf --export-latex}%
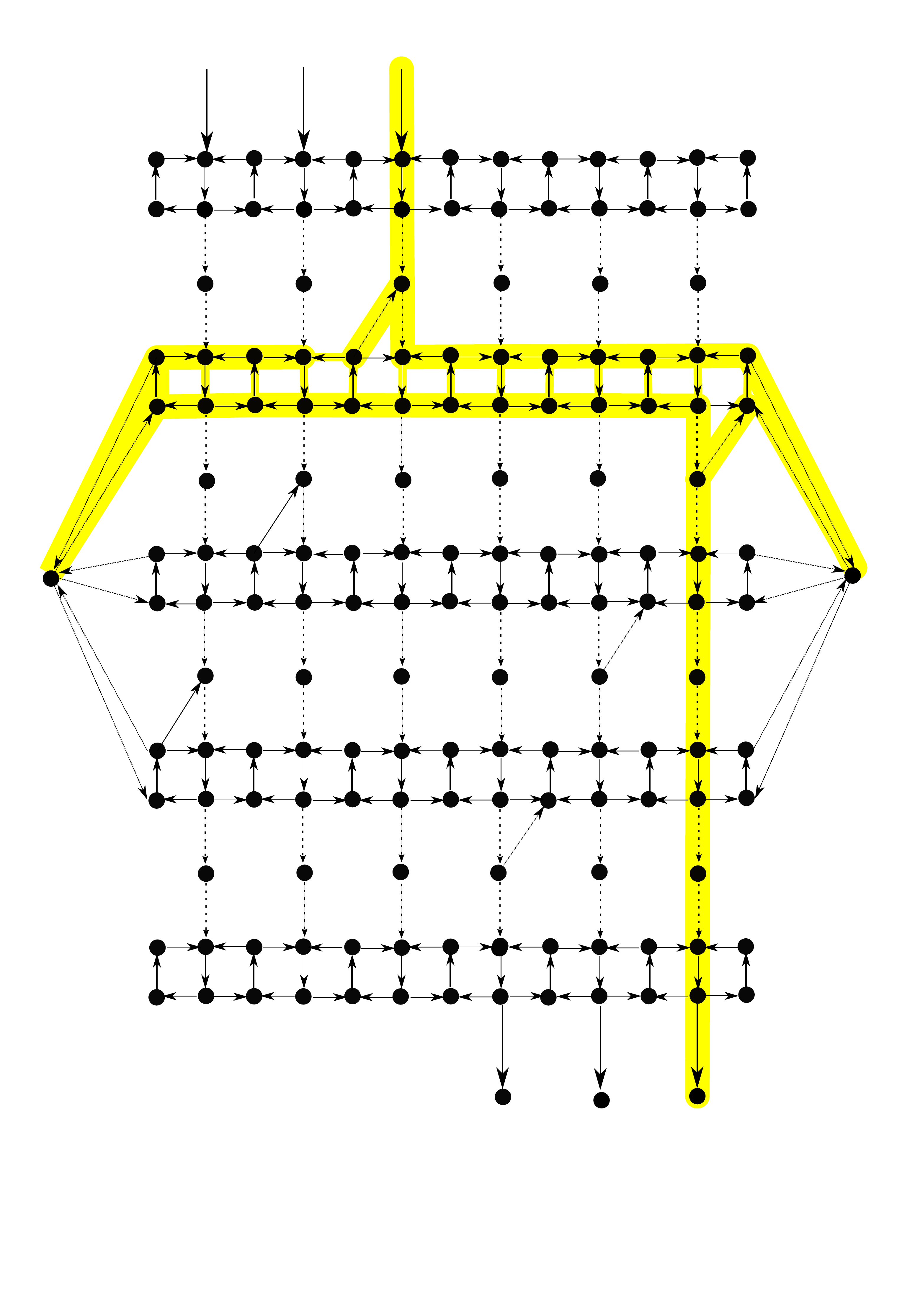%

\vspace{-35mm}
\caption{The connector gadget \label{fig:connector-gadget} for $n=3$. A set of edges representing 3 is highlighted in the figure.}
\end{figure}

\subsection{Different types of edges in connector gadget}

Before proving Lemma~\ref{lem:connector-gadget}, we first describe the construction of the connector gadget in more detail (see
Figure~\ref{fig:connector-gadget}). The connector gadget has $2n+4$ rows denoted by $R_0, R_1, R_2, \ldots, R_{2n+3}$ and $4n+1$ columns
denoted by $C_0, C_1, \ldots, C_{4n}$. Let us denote the vertex at the intersection of row $R_i$ and column $C_j$ by
$v_{i}^j$. We now describe the different kinds of edges present in the connector gadget.
\begin{enumerate}
\item \textbf{Source Edges}: For each $i\in [n]$, there is an edge $(p_i, v_{0}^{2i-1})$. These edges are together called source edges.

\item \textbf{Sink Edges}: For each $i\in [n]$, there is an edge $(v_{2n+3}^{2n+2i-1}, q_i)$. These edges are together called sink edges.

\item \textbf{Terminal Edges}: The union of the sets of edges incident to the terminals $p$ or $q$ are called terminal edges. The set of edges incident on $p$ is $\{(p, v^{0}_{2i+1}\ :\ i\in [n])\} \cup \{(v^{0}_{2i}, p\ :\ i\in [n])\}$. The set of edges incident on $q$ is $\{(q, v^{4n}_{2i+1}\ :\ i\in [n])\} \cup \{(v^{4n}_{2i}, q\ :\ i\in [n])\}$.

\item \textbf{Inrow Edges}:
\begin{itemize}
\item \underline{Inrow Up Edges}: For each $0\leq i\leq n+1$, we call the $\uparrow$ edges connecting vertices of row $R_{2i+1}$ to $R_{2i}$ as
inrow up edges. Explicitly, this set of edges is given by $\{(v_{2i+1}^{2j}, v_{2i}^{2j})\ :\ 0\leq j\leq 2n \}$.

\item \underline{Inrow Down Edges}: For each $0\leq i\leq n+1$, we call the $\downarrow$ edges connecting vertices of row $R_{2i}$ to
$R_{2i+1}$ as inrow down edges. Explicitly, this set of edges is given by $\{(v_{2i}^{2j-1}, v_{2i+1}^{2j-1})\ :\ 1\leq j\leq 2n \}$.

\item \underline{Inrow Left Edges}: For each $0\leq i\leq 2n+3$, we call the $\leftarrow$ edges connecting vertices of row $R_{i}$ as inrow
left edges. We explicitly list the set of inrow left edges for even-numbered and odd-numbered rows below:
        \begin{itemize}
          \item For each $0\leq i\leq n+1$, the set of inrow left edges for the row $R_{2i}$ is given by $\{(v_{2i}^{2j}, v_{2i}^{2j-1})\ :\ j\in [2n] \}$
          \item For each $0\leq i\leq n+1$, the set of inrow left edges for the row $R_{2i+1}$ is given by $\{(v_{2i+1}^{2j-1}, v_{2i+1}^{2j-2})\ :\ j\in [2n] \}$
        \end{itemize}

\item \underline{Inrow Right Edges}: For each $0\leq i\leq 2n+3$, we call the $\rightarrow$ edges connecting vertices of row $R_{i}$ as inrow
right edges. We explicitly list the set of inrow right edges for even-numbered and odd-numbered rows below:
        \begin{itemize}
          \item For each $0\leq i\leq n+1$, the set of inrow right edges for the row $R_{2i}$ is given by $\{(v_{2i}^{2j-2}, v_{2i}^{2j-1})\ :\ j\in [2n] \}$
          \item For each $0\leq i\leq n+1$, the set of inrow right edges for the row $R_{2i+1}$ is given by $\{(v_{2i+1}^{2j-1}, v_{2i+1}^{2j})\ :\ j\in [2n] \}$
        \end{itemize}


\end{itemize}

\item \textbf{Interrow Edges}: For each $i\in [n+1]$ and each $j\in [2n]$, we subdivide the edge $(v_{2i-1}^{2j-1}, v_{2i}^{2j-1})$ by introducing a new vertex $w_{i}^{j}$ and adding the edges $(v_{2i-1}^{2j-1}, w_{i}^j)$ and $(w^{j}_i, v_{2i}^{2j-1})$. All these edges are together called interrow edges. Note that there is a total of $4n(n+1)$ interrow edges.

\item \textbf{Shortcuts}: There are $2n$ shortcut edges, namely $e_1, e_2, \ldots, e_n$ and $f_1, f_2, \ldots,f_n$. They are drawn
as follows:
\begin{itemize}
\item The edge $e_i$ is given by $(v_{2n-2i+2}^{2i-2}, w_{n-i+1}^i)$.
\item The edge $f_i$ is given by $(w_{n-i+2}^{n+i }, v_{2n-2i+3}^{2n+2i})$.
\end{itemize}

\end{enumerate}

\subsection{Assigning weights in the connector gadget}
Fix the quantity $B=18n^2$. We assign weights to the edges as follows
\begin{enumerate}
\item For $i\in [n]$, the source edge $(p_i, v_{0}^{2i-1})$ has weight $B^{5}+(n-i+1)$.
\item For $i\in [n]$, the sink edge $(v_{2n+3}^{2n+2i-1}, q_i)$ has weight $B^{5}+i$.

\item Each terminal edge has weight $B^4$.
\item Each inrow up edge has weight $B^3$.

\item Each interrow edge has weight $\dfrac{B^2}{2}$ each.

\item Each inrow right edge has weight $B$.

\item For each $i\in [n]$, the shortcut edge $e_i$ has weight $n\cdot i$.
\item For each $j\in [n]$, the shortcut edge $f_j$ has weight $n(n-j+1)$.


\item Each inrow left edge and inrow down edge has weight 0.

\end{enumerate}

Now we define the quantity $C^*_n$ stated in statement of Lemma~\ref{lem:connector-gadget}:
\begin{equation}
C^*_n = 2B^{5} + 4B^{4} + (2n+1)B^{3} + (n+1)B^{2} + (4n-2)B + (n+1)^2.
\label{eqn:c-star}
\end{equation}

In the next two sections, we prove the two statements of Lemma~\ref{lem:connector-gadget}.

\subsection{For every $i\in [n]$, there is a solution $E_i$ of weight $C^*_n$ that satisfies the connectedness property and represents $i$}
\label{subsection:connector-first-statement}

Let $E_i$ be the union of the following sets of edges:
\begin{itemize}
\item
    Select the edges
    $(p_i, v_{0}^{2i-1})$ and $(v_{2n+3}^{2n+2i-1}, q_i)$. This incurs a weight of $B^{5}+ (n-i+1) +  B^{5}+
    i = 2B^{5}+(n+1)$.

\item The two terminal edges $(p, v_{2n-2i+3}^{0})$ and $(v_{2n-2i+2}^{0}, p)$. This incurs a weight of $2B^4$.
\item The two terminal edges $(q, v_{2n-2i+3}^{4n})$ and $(v_{2n-2i+2}^{4n}, q )$. This incurs a weight of $2B^4$.

\item All $2n$ inrow right edges and $2n$ inrow left edges which occur between vertices of $R_{2n-2i+2}$. This incurs a
    weight of $2n\cdot B$ since each inrow left edge has weight 0 and each inrow right edge has weight~$B$.
\item All $2n$ inrow right edges and $2n$ inrow left edges which occur between vertices of $R_{2n-2i+3}$. This incurs a weight of
    $2n \cdot B$ since each inrow left edge has weight 0 and each inrow right edge has weight~$B$.

\item All the $2n+1$ inrow up edges that are between vertices of $R_{2n-2i+2}$ and $R_{2n-2i+3}$. These edges are given
    by $(v_{2n-2i+3}^{2j}, v_{2n-2i+2}^{2j})$ for $0\leq j\leq 2n$. This incurs a weight of $(2n+1)B^3$.
\item All $2n$ inrow down edges that occur between vertices of row $R_{2n-2i+2}$ and row $R_{2n-2i+3}$. This incurs a weight of 0, since each inrow down edge has weight 0.

\item The vertically downward $v_{0}^{2i-1}\leadsto v_{2n-2i+3}^{2i-1}$ path $P_1$ formed by interrow edges and inrow down edges, and the vertically downward $v_{2n-2i+2}^{2n+2i-1}\leadsto v_{2n+3}^{2n+2i-1}$ path $P_2$ formed by interrow edges and inrow down edges.
    These two paths together incur a total weight of $(n+1)B^2$, since the inrow down edges have weight 0.

\item The edges $e_i$ and $f_i$. This incurs a weight of $n\cdot i+ n(n-i+1) = n(n+1)$.


\end{itemize}

Finally, \textbf{remove} the two inrow right edges $(v_{2n-2i+2}^{2i-2},v_{2n-2i+2}^{2i-1})$ and $(v_{2n-2i+3}^{2n+2i-1}, v_{2n-2i+3}^{2n+2i})$ from
$E_i$. This saves a weight of $2B$. From the above paragraph and Equation~\ref{eqn:c-star} it follows that the total weight of $E_i$
is exactly $C^*_n$. Note that even though we removed the edge $(v_{2n-2i+2}^{2i-2},v_{2n-2i+2}^{2i-1})$ we can still travel from $v_{2n-2i+2}^{2i-2}$ to $v_{2n-2i+2}^{2i-1}$ in $E_i$ using the edge $e_i$ as follows: take the path $v_{2n-2i+2}^{2i-2}\rightarrow w_{n-i+1}^i \rightarrow v_{2n-2i+2}^{2i-1}$. Similarly, even though we removed the edge
$(v_{2n-2i+3}^{2n+2i-1}, v_{2n-2i+3}^{2n+2i})$ we can still travel from $v_{2n-2i+3}^{2n+2i-1}$ to $v_{2n-2i+3}^{2n+2i}$ in
$E_i$ using the edge $f_i$ as follows: take the path $v_{2n-2i+3}^{2n+2i-1} \rightarrow w_{n-i+2}^{n+i} \rightarrow v_{2n-2i+3}^{2n+2i}$.

It remains to show that $E_i$ satisfies the connectedness property and it represents $i$. It is easy to see $E_i$ represents $i$
since the only edge in $E_i$ which is incident to $P$ is the edge leaving $p_i$. Similarly, the only edge in $E_i$
incident to $Q$ is the one entering $q_i$. We show that the connectedness property holds as follows (recall Definition~\ref{defn:connectedness}):
\begin{enumerate}
\item There is a $p_i \leadsto p$ path in $E_i$ by starting with the source edge leaving $p_i$ and then following downward path $P_1$
    from $v_{0}^{2i-1} \leadsto v_{2n-2i+3}^{2i-1}$. Then travel towards the left from $v_{2n-2i+3}^{2i-1}$ to $p$ by
    using inrow left, inrow up and inrow down edges from rows $R_{2n-2i+2}$ and $R_{2n-2i+3}$. Finally, use the edge
    $(v_{2n-2i+2}^{0}, p)$

\item For the existence of a $p_i \leadsto q$ path in $E_i$, we have seen above that there is a $p_i \leadsto
    v_{2n-2i+3}^{2i-1}$ path. Then travel towards the right from $v_{2n-2i+2}^{2i-1}$ to $q$ by using inrow right, inrow
    up and inrow down edges from rows $R_{2n-2i+2}$ and $R_{2n-2i+3}$ to reach the vertex $v_{2n-2i+2}^{4n}$. The only potential issue is that the inrow right edge $(v_{2n-2i+3}^{2n+2i-1}, v_{2n-2i+3}^{2n+2i})$ is missing in $E_i$: however this is not a problem since we have the path $v_{2n-2i+3}^{2n+2i-1}\rightarrow w_{n-i+2}^{n+i} \rightarrow v_{2n-2i+3}^{2n+2i}$ in $E_i$. Finally, use the edge $(v_{2n-2i+2}^{4n}, q)$.

\item For the existence of a $p \leadsto q_i$ path in $E_i$, first use the edge $(p, v_{2n-2i+3}^{0})$. Then travel
    towards the right by using inrow up, inrow right and inrow down edges from rows $R_{2n-2i+2}$ and $R_{2n-2i+3}$ to reach the vertex $v_{2n-2i+2}^{2n+2i-1}$. The only potential issue is that the inrow right edge $(v_{2n-2i+2}^{2i-2},v_{2n-2i+2}^{2i-1})$ is missing in $E_i$: however this is not a problem since we have the path $v_{2n-2i+2}^{2i-2}\rightarrow w_{n-i+1}^{i} \rightarrow v_{2n-2i+2}^{2i-1}$ in $E_i$. Then take the downward path $P_2$ from $v_{2n-2i+2}^{2n+2i-1}$ to
    $v_{2n+3}^{2n+2i-1}$. Finally, use the sink edge $(v_{2n+3}^{2n+2i-1}, q_i)$ incident to $q_i$.

\item For the existence of a $q \leadsto q_i$ path in $E_i$, first use the terminal edge $(q, v_{2n-2i+3}^{4n})$. Then travel
    towards the left by using inrow up, inrow left and inrow down edges from rows $R_{2n-2i+2}$ and $R_{2n-2i+3}$ until you reach
    the vertex $v_{2n-2i+2}^{2n+2i-1}$. Then take the downward path $P_2$ from $v_{2n-2i+2}^{2n+2i-1}$ to
    $v_{2n+3}^{2n+2i-1}$. Finally, use the sink edge $(v_{2n+3}^{2n+2i-1}, q_i)$ incident to $q_i$.
\end{enumerate}

Therefore, $E_i$ indeed satisfies the connectedness property.

\subsection{$E'$ satisfies the connectedness property and has weight at most $C^*_n \Rightarrow E'$ represents some $\beta\in [n]$ and has weight exactly $C^*_n$}
\label{subsection:connector-second-statement}

Next we show that if a set of edges $E'$ satisfies the connectedness property and has weight at most $C^*_n$, then in fact the weight of $E'$ is exactly $C^*_n$ and it represents some
$\beta\in [n]$.
We do this via the following series of claims and observations.

\begin{claim}
\label{claim:one-in-out} $E'$ contains exactly one source edge and one sink edge.
\end{claim}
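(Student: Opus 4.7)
The plan is to establish the two inequalities ``$\geq 1$'' and ``$\leq 1$'' separately for both source edges and sink edges. The ``$\geq 1$'' direction is essentially a structural observation about the gadget, while the ``$\leq 1$'' direction is a crude weight estimate against $C^*_n$.

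For the lower bound, I would invoke the connectedness property directly: condition~(1) of Definition~\ref{defn:connectedness} requires a directed path in $CG_n[E']$ from some $p_i \in P$ to $p$. By construction the only outgoing edge at any $p_i$ is the source edge $(p_i, v_0^{2i-1})$, so $E'$ must contain at least one source edge. A symmetric argument, using condition~(4) of Definition~\ref{defn:connectedness} together with the fact that each $q_i \in Q$ has the sink edge $(v_{2n+3}^{2n+2i-1}, q_i)$ as its unique incoming edge, forces at least one sink edge in $E'$.

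For the upper bound, I would rely on a simple magnitude estimate. Every source edge has weight at least $B^5 + 1$ and every sink edge has weight at least $B^5 + 1$. If $E'$ contained two or more source edges, then together with the required sink edge its weight would be at least $3B^5 + 3$. It then suffices to verify that $C^*_n < 3B^5$, i.e., that the tail
\[
4B^4 + (2n+1)B^3 + (n+1)B^2 + (4n-2)B + (n+1)^2
\]
is strictly less than $B^5$. Since $B = 18n^2$ satisfies $B \geq 2n+1$ for every $n \geq 1$, each of the five tail terms is bounded above by $B^4$, so their sum is at most $8B^4$, which is less than $B^5$ because $B \geq 18 > 8$. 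This gives the required contradiction. An entirely symmetric estimate rules out two or more sink edges. The only potential obstacle is this magnitude check, which is routine given the explicit choice $B = 18n^2$ that makes $B^5$ dominate every lower-order term in $C^*_n$.
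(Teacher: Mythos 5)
Your proof is correct and follows essentially the same strategy as the paper: the lower bound comes from the connectedness property combined with the structural observation that each $p_i$ (resp.\ $q_i$) has exactly one outgoing (resp.\ incoming) edge, and the upper bound is the magnitude estimate showing that two source edges plus one sink edge already cost at least $3B^5$, exceeding $C^*_n$. Your bounding of the tail by $8B^4 < B^5$ is just a mild repackaging of the paper's $17nB^4 < B^5$ bound (note your phrase ``each of the five tail terms is bounded above by $B^4$'' is technically off for the leading $4B^4$ term, but the intended accounting $4B^4 + 4\cdot B^4 = 8B^4$ is sound), so this is the same argument.
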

\begin{proof}
Since $E'$ satisfies the connectedness property it must contain at least one source edge and at least one sink edge. Without
loss of generality, suppose that there are at least two source edges in $E'$. Then the weight of $E'$ is a least the sum of the weights
of these two source edges plus the weight of at least one sink edge.
Thus if $E'$ contains at least two source edges, then its weight is at least $3B^{5}$. However, from Equation~\ref{eqn:c-star} we get that
\begin{align*}
C^*_n &= 2B^{5} + 4B^{4} + (2n+1)B^{3} + (n+1)B^{2} + (4n-2)B + (n+1)^2\\
&\leq 2B^5 + 4n\cdot B^4 + 3n\cdot B^4 + 2n\cdot B^4 + 4n\cdot B^4 +4n\cdot B^4\\
&\leq 2B^5 + 17n\cdot B^4\\
&< 3B^5,
\end{align*}
since $B=18n^2 > 17n$.
\end{proof}

Thus we know that $E'$ contains exactly one source edge and exactly
one sink edge. Let the source edge be incident to $p_{i'}$ and the
sink edge be incident to $q_{j'}$.

\begin{claim}
\label{claim:4-blue-dotted} $E'$ contains exactly four terminal edges.
\end{claim}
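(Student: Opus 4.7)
The plan is to establish matching lower and upper bounds of four on the number of terminal edges in $E'$.

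The lower bound of four can be read directly off of Definition~\ref{defn:connectedness}: conditions~(1) and~(3) force $E'$ to contain, respectively, an incoming edge of the form $(v^0_{2i}, p)$ and an outgoing edge of the form $(p, v^0_{2i'+1})$ at the terminal $p$, and conditions~(2) and~(4) do the same at $q$. These four edges are distinct since each is incident to exactly one of $\{p,q\}$ and the two edges at a single terminal differ in orientation.

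For the upper bound I would argue by contradiction, supposing that $E'$ contains at least five terminal edges. By Claim~\ref{claim:one-in-out}, $E'$ also contains exactly one source edge (weight $\geq B^5$) and one sink edge (weight $\geq B^5$), so its total weight would be at least $2B^5 + 5B^4$. The only calculation that remains is to verify
\[
C^*_n = 2B^5 + 4B^4 + (2n+1)B^3 + (n+1)B^2 + (4n-2)B + (n+1)^2 \;<\; 2B^5 + 5B^4,
\]
which reduces to checking $(2n+1)B^3 + (n+1)B^2 + (4n-2)B + (n+1)^2 < B^4$. This is immediate from $B = 18n^2$, since $B^4$ alone already exceeds $(2n+1)B^3$ by a factor of roughly $9n$, and the remaining terms are of still smaller order. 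Combining the two bounds yields the claim.

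I do not expect any real obstacle here. The edge weights form a positional number system in base $B = 18n^2$, with source/sink edges occupying the $B^5$ digit and terminal edges the $B^4$ digit, and the budget $C^*_n$ has been calibrated so that exceeding the prescribed count at any digit by one unit already breaks the budget — exactly the same mechanism that was used to obtain Claim~\ref{claim:one-in-out} and that will be used repeatedly in the remainder of Section~\ref{connector:proof}.
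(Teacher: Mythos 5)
Your proof is correct and takes essentially the same approach as the paper: the lower bound of four follows from the connectedness property at $p$ and $q$, and the upper bound follows from the weight budget, since $2B^5 + 5B^4$ already exceeds $C^*_n$. The only (immaterial) difference is that you invoke the full strength of Claim~\ref{claim:one-in-out} (exactly one source and one sink edge), whereas the paper only needs that at least one of each is present to obtain the $2B^5$ term.
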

\begin{proof}
  Since $E'$ satisfies the connectedness property, it must contain at
  least one incoming and one outgoing edge for both $p$ and
  $q$. Therefore, we need at least four terminal edges. Suppose we
  have at least five terminal edges in $E'$. We already know that the
  source and sink edges contribute at least $2B^{5}$ to weight of
  $E'$, hence the weight of $E'$ is at least $2B^{5}+ 5B^{4}$. However, from
  Equation~\ref{eqn:c-star}, we get that
\begin{align*}
C^*_n &= 2B^{5} + 4B^{4} + (2n+1)B^{3} + (n+1)B^{2} + (4n-2)B + (n+1)^2\\
&\leq 2B^5 + 4B^4 + 3n\cdot B^3 + 2n\cdot B^3 + 4n\cdot B^3 + 4n\cdot B^3\\
&= 2B^5 + 4B^4 + 13n\cdot B^3\\
&< 2B^5 + 5B^4,
\end{align*}
since $B=18n^2 > 13n$.
\end{proof}

Hence we know that $E'$ contains exactly four terminal edges.

\begin{claim}
\label{claim:one-red-inrow} $E'$ contains exactly $2n+1$ inrow up edges, one from each column $C_{2i}$ for $0\leq i\leq 2n$.
\end{claim}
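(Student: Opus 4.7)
The plan is to prove Claim~\ref{claim:one-red-inrow} via matching upper and lower bounds. For the upper bound, I would leverage Claims~\ref{claim:one-in-out} and~\ref{claim:4-blue-dotted}: the mandated source edge, sink edge, and four terminal edges already contribute at least $2B^{5} + 4B^{4} + 2$ to the weight of $E'$, leaving at most $C^{*}_{n} - 2B^{5} - 4B^{4} - 2 = (2n+1)B^{3} + (n+1)B^{2} + (4n-2)B + (n+1)^{2} - 2$ for all remaining edges. Since $B = 18n^{2}$ makes the $B^{2}$, $B$, and constant contributions together strictly smaller than $B^{3}$, and each inrow up edge costs $B^{3}$, $E'$ contains at most $2n+1$ inrow up edges.

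For the matching lower bound of one inrow up edge per even column, the key structural observation is that for every $c \in \{0, 1, \ldots, 2n\}$ and every odd row $R_{2i+1}$, the vertex $v_{2i+1}^{2c}$ has exactly one outgoing edge in the connector gadget, namely the inrow up edge $(v_{2i+1}^{2c}, v_{2i}^{2c})$. This will follow by direct case analysis: inrow right and inrow left edges at odd rows originate only from odd columns; inrow down and interrow edges sit only at odd columns; the shortcut $e_i$ starts at an even-row-even-column vertex; and $f_j$ starts at a subdivision vertex $w$. A companion observation is that every edge entering an even column from outside that column---whether an inrow right, inrow left, or the shortcut $f_j$---lands at an odd-row vertex. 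Since every edge in the gadget changes the column index by at most one, any path in $E'$ traveling between two columns must pass through every intermediate even column and, upon entering it, lands at an odd-row vertex, which forces the corresponding inrow up edge.

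With these observations in hand, I would apply them to the connectedness paths guaranteed by Definition~\ref{defn:connectedness}. By Claim~\ref{claim:4-blue-dotted} the unique outgoing terminal edges from $p$ and $q$ land at some $v_{2b+1}^{0}$ and $v_{2b'+1}^{4n}$, immediately forcing inrow up edges at $C_{0}$ and $C_{4n}$. By Claim~\ref{claim:one-in-out} the single sink edge of $E'$ is incident to a unique terminal $q_{j^{*}}$, so connectedness provides a path in $E'$ from $p$ to $q_{j^{*}}$ crossing every column from $C_{0}$ to $C_{2n+2j^{*}-1}$, and a path from $q$ to the same $q_{j^{*}}$ crossing every column from $C_{4n}$ down to $C_{2n+2j^{*}-1}$. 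Together they enter each of the $2n+1$ even columns $C_{0}, C_{2}, \ldots, C_{4n}$ at an odd-row vertex, forcing at least $2n+1$ inrow up edges in total. Matched with the upper bound, this gives exactly one inrow up edge per even column. The main obstacle will be carefully verifying the structural observations, particularly that the shortcut edges $e_i$ and $f_j$ neither provide an alternate outgoing edge from any odd-row-even-column vertex nor allow paths to jump over an even column.
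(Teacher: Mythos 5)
Your proposal is correct and takes essentially the same approach as the paper: an upper bound of $2n+1$ on the number of inrow up edges from the weight budget, combined with a lower bound of at least one per even column via a cut argument using the connectedness paths. The only cosmetic differences are that the paper uses the $p_{i'}\leadsto p$ and $p_{i'}\leadsto q$ paths while you use $p\leadsto q_{j^*}$ and $q\leadsto q_{j^*}$, and that you spell out the structural verification of the cut (that odd-row, even-column vertices have the inrow up edge as their unique outgoing edge, and that no edge skips a column) more explicitly than the paper does.
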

\begin{proof}
Observe that for each $1\leq j\leq 2n-1$, the inrow up edges in column $C_{2j}$ form a cut between vertices from columns $C_{2j-1}$
and $C_{2j+1}$. Since $E'$ must have a $p_{i'} \leadsto p$ path, we need to use at least one inrow up edge from each of the
columns $C_0, C_2, \ldots, C_{2i'-2}$. Since $E'$ must have a $p_{i'} \leadsto q$, path we need to use at least one inrow up
edge from each of the columns $C_{2i'}, C_{2i'+2}, \ldots, C_{4n}$. Hence $E'$ has at least $2n+1$ inrow up edges, as we
require at least one inrow up edge from each of the columns $C_{0}, C_{2}, \ldots, C_{4n}$.

Suppose $E'$ contains at least $2n+2$ inrow up edges. We already know that $E'$ has a contribution of $2B^{5}+4B^{4}$ from source,
sink, and terminal edges. Hence the weight of $E'$ is at least  $2B^{5}+ 4B^{4}+ (2n+2)B^3$. However, from Equation~\ref{eqn:c-star}, we get that
\begin{align*}
C^*_n &= 2B^{5} + 4B^{4} + (2n+1)B^{3} + (n+1)B^{2} + (4n-2)B + (n+1)^2\\
&\leq 2B^5 + 4B^4 + (2n+1)B^3 + 2n\cdot B^2 + 4n\cdot B^2 + 4n\cdot B^2\\
&= 2B^5 + 4B^4 + (2n+1)B^3 + 10n\cdot B^2\\
&< 2B^5 + 4B^4 + (2n+2)B^3,
\end{align*}
since $B=18n^2 > 10n$.
\end{proof}

Therefore, we know that $E'$ contains exactly one inrow edge per column $C_{2i}$ for every $0\leq i\leq 2n$. By
Claim~\ref{claim:4-blue-dotted}, we know that exactly two terminal edges incident to $p$ are selected in $E'$. Observe that the terminal edge leaving $p$ should be followed by an inrow up edge, and similarly, the terminal edge entering $p$ follows an inrow up edge. Since we select exactly
one inrow up edge from column $C_0$, it follows that the two terminal edges in $E'$ incident to $p$ must be incident to the rows $R_{2\ell+1}$
and $R_{2\ell}$ respectively for some $\ell \in [n]$. Similarly, the two terminal edges in $E'$ incident to $q$ must be
incident to the rows $R_{2\ell'+1}$ and $R_{2\ell'}$ for some $\ell' \in [n]$. We summarize this in the following claim:

\begin{observation}
\label{obs:same-for-s-and-t} There exist integers $\ell, \ell' \in [n]$ such that
\begin{itemize}
\item the only two terminal edges in $E'$ incident to $p$ are $(p, v_{2\ell+1}^{0})$ and $(v_{2\ell}^{0}, p)$, and
\item the only two terminal edges in $E'$ incident to $q$ are $(q, v_{2\ell'+1}^{4n})$ and $(v_{2\ell'}^{4n}, q)$.
\end{itemize}
\end{observation}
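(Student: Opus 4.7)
The plan is to leverage Claim~\ref{claim:4-blue-dotted} and Claim~\ref{claim:one-red-inrow} together, applied to the boundary columns $C_0$ and $C_{4n}$. Since the only edges of $CG_n$ incident to $p$ or $q$ are terminal edges, conditions~(1)--(4) of the connectedness property (Definition~\ref{defn:connectedness}) force $E'$ to contain at least one incoming and one outgoing terminal edge at each of $p$ and $q$; combined with Claim~\ref{claim:4-blue-dotted}, there is exactly one of each. Write the outgoing and incoming terminal edges at $p$ as $(p, v^0_{2\ell+1})$ and $(v^0_{2m}, p)$ for some $\ell, m \in [n]$; it suffices to prove $\ell = m$, and then to obtain the analogous integer $\ell'$ for $q$ symmetrically.

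To prove $\ell = m$, I first examine $v^0_{2\ell+1}$: to extend the forced $p \leadsto Q$ path past this vertex, $E'$ must contain an outgoing edge at $v^0_{2\ell+1}$; I will argue by edge enumeration that in $CG_n$ the only outgoing edge from this vertex is the inrow up edge $(v^0_{2\ell+1}, v^0_{2\ell})$. Symmetrically, the forced $P \leadsto p$ path needs an incoming edge at $v^0_{2m}$, and the only candidate is the inrow up edge $(v^0_{2m+1}, v^0_{2m})$. Both of these are inrow up edges in column $C_0$, so by Claim~\ref{claim:one-red-inrow} (which forces exactly one inrow up edge per even column) they must coincide, giving $m = \ell$. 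Repeating the same argument verbatim in column $C_{4n}$ produces $\ell'$ for the terminal edges incident to $q$.

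The only non-trivial step, and the main obstacle, is the boundary-column edge enumeration. For a vertex $v^0_{2\ell+1}$ at column $C_0$ in an odd row one must verify that source edges (confined to row $R_0$), sink edges (row $R_{2n+3}$), interrow edges and inrow down edges (both living only at odd columns), inrow right edges (which start at odd columns in odd rows, so do not leave $v^0_{2\ell+1}$), shortcut edges (all endpoints in the interior of the gadget), and the inrow left edge incident at $v^0_{2\ell+1}$ (which is oriented \emph{into} rather than out of the vertex) together leave the inrow up edge as the sole outgoing candidate; a dual enumeration at $v^0_{2m}$ rules out every incoming edge other than an inrow up edge. Once this bookkeeping is carried out at both $C_0$ and $C_{4n}$, Claim~\ref{claim:one-red-inrow} closes the argument immediately.
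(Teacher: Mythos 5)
Your proof is correct and follows essentially the same reasoning as the paper's brief argument: both combine Claim~\ref{claim:4-blue-dotted} (exactly four terminal edges) with Claim~\ref{claim:one-red-inrow} (exactly one inrow-up edge per even column) to force the outgoing and incoming terminal edges at $p$ to land on the two endpoints of the single inrow-up edge in column $C_0$, and symmetrically at $q$ with column $C_{4n}$. One small slip in your parenthetical bookkeeping: the shortcuts are \emph{not} confined to the interior --- the tail of $e_1$ is $v_{2n}^{0}$ on column $C_0$ and the head of $f_n$ is $v_{3}^{4n}$ on column $C_{4n}$; however, $e_1$ is \emph{outgoing} from the even-row vertex $v_{2n}^0$ and $f_n$ is \emph{incoming} at the odd-row vertex $v_3^{4n}$, so neither interferes with the direction you actually need to enumerate, and the conclusion stands.
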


\begin{definition}
For $i\in [n+1]$, we call the $4n$ interrow edges which connect vertices from row $R_{2i-1}$ to vertices from $R_{2i}$ as
Type$(i)$ interrow edges. We can divide the Type$(i)$ interrow edges into $2n$ ``pairs" of adjacent interrow edges given by $(v_{2i-1}^{2j-1}, w_{i}^{j})$ and $(w_{i}^{j}, v_{2i}^{2j-1})$ for each $1\leq j\leq 2n$
\end{definition}

Note that there are a total of $n+1$ types of interrow edges. 
%
%
%
%
%

\begin{claim}
\label{claim:one-interrow} $E'$ contains a pair of interrow edges of Type$(r)$ for each $r\in [n+1]$. Moreover, these two edges are the only interrow edges of Type$(r)$ chosen in $E'$.
\end{claim}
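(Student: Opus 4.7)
The plan is to exploit the vertical block structure of $CG_n$. The $2n+4$ rows partition into $n+2$ blocks $B_r = \{R_{2r}, R_{2r+1}\}$ for $r=0,1,\ldots,n+1$. Every edge of the gadget either stays within a single row, stays within a single block (inrow-up/down edges), or crosses from $B_{r-1}$ to $B_r$; the last kind are precisely the Type$(r)$ interrow edges, and each is directed \emph{downward} from $R_{2r-1}$ to $R_{2r}$ through a subdivision vertex $w_r^j$.

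First, I would show that every Type$(r)$ boundary is crossed by $E'$. Claim~\ref{claim:one-in-out} pins down the single source and sink edges of $E'$, placing $v_0^{2i'-1}$ in $B_0$ and $v_{2n+3}^{2n+2j'-1}$ in $B_{n+1}$. Concatenating the $p_{i'}\leadsto q$ and $q\leadsto q_{j'}$ subpaths forced by the connectedness property gives a walk in $E'$ from $B_0$ to $B_{n+1}$. Since all block-crossing edges go downward, this walk must traverse each of the $n+1$ boundaries.

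Next, I would argue that any downward traversal of the Type$(r)$ boundary must consume both halves of a single pair $(v_{2r-1}^{2j-1},w_r^j)$ and $(w_r^j,v_{2r}^{2j-1})$. The only edges incident to a $w$-vertex are its top half (incoming from above), its bottom half (outgoing below), and possibly a shortcut $e_i$ entering $w_{n-i+1}^i$ from $R_{2n-2i+2}$ (i.e.\ from \emph{below}) or a shortcut $f_m$ leaving $w_{n-m+2}^{n+m}$ toward $R_{2n-2m+3}$ (i.e.\ \emph{upward}). A short index check shows that no single $w$-vertex simultaneously receives an $e$ and emits an $f$: the equations $n-i+1=n-m+2$ and $i=n+m$ force $n=-1$. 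Hence no shortcut chain can substitute for either half of an interrow pair, so $E'$ must contain at least one complete pair per type, i.e.\ at least $2(n+1)$ interrow edges.

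Finally, I would close the gap by a weight comparison. Adding the contributions already charged in Claims~\ref{claim:one-in-out}, \ref{claim:4-blue-dotted}, \ref{claim:one-red-inrow} ($2B^5 + 4B^4 + (2n+1)B^3$) to the $(n+1)B^2$ cost of $2(n+1)$ interrow edges of weight $B^2/2$ each leaves a slack of only $(4n-2)B+(n+1)^2$ against $C^*_n$ in Equation~\ref{eqn:c-star}. Since a single additional interrow edge costs $B^2/2 = 162n^4$, which exceeds this slack for every $n\geq 1$, $E'$ has exactly two Type$(r)$ interrow edges per $r$; the crossing requirement from the previous step then forces these two edges to share a common $w_r^j$, completing the claim. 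The main obstacle is the shortcut analysis in the third paragraph — ruling out ``orphan-half'' crossings patched together by $e_i$/$f_m$ — but once the index conventions are unpacked, the remainder is routine bookkeeping against the budget.
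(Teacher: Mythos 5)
Your proposal takes essentially the same route as the paper's proof: partition the rows into blocks, argue that the unique source/sink edges force a walk from the top block to the bottom block that must cross every Type$(r)$ boundary via a full pair, and then close with a weight comparison showing that any extra interrow edge blows the budget. Two places where you deviate from the paper are worth commenting on.

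On the positive side, your third paragraph treats the shortcut edges more carefully than the paper does. The paper simply asserts that the only way to travel from $R_{2r-1}$ to $R_{2r}$ is via a pair of Type$(r)$ interrow edges; you actually check the index arithmetic and verify that no $w$-vertex simultaneously receives an $e$-shortcut and emits an $f$-shortcut, so no ``shortcut chain'' can fake a crossing. That is a legitimate refinement of the argument.

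The genuine gap is in the second paragraph, where you claim that ``since all block-crossing edges go downward, this walk must traverse each of the $n+1$ boundaries.'' The internal-distinguished vertices $p$ and $q$ live outside the block structure, and in the full gadget $p$ has edges to and from rows scattered across all blocks $B_1,\ldots,B_n$ (and likewise $q$). A walk that enters $p$ from some $R_{2i}$ and leaves $p$ to some $R_{2j+1}$ with $j>i$ teleports from $B_i$ to $B_j$ without ever crossing the boundaries in between, which breaks your traversal argument. The paper closes exactly this loophole with Observation~\ref{obs:same-for-s-and-t}, which is derived from Claims~\ref{claim:4-blue-dotted} and \ref{claim:one-red-inrow} and says the two terminal edges at $p$ retained in $E'$ must go to $R_{2\ell}$ and $R_{2\ell+1}$ for the same $\ell$ (and similarly at $q$), so passing through $p$ or $q$ keeps you in a single block. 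You already invoke those claims for the weight bound, so the fix is cheap: state and use Observation~\ref{obs:same-for-s-and-t} before asserting that the walk from $B_0$ to $B_{n+1}$ cannot skip a boundary.

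Once that is patched, the rest of the bookkeeping — $2(n+1)$ interrow edges exactly, with each type contributing one pair because each boundary is crossed at a single $w$-vertex and $B^2/2=162n^4$ strictly exceeds the residual slack $(4n-2)B+(n+1)^2$ for all $n\geq 1$ — is sound and matches the paper's conclusion.
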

\begin{proof}
First we show that $E'$ contains at least one pair of interrow edges of each type. Observation~\ref{obs:same-for-s-and-t} implies that we cannot avoid using interrow edges of any type by, for example, going into $p$ via an edge from some $R_{2i}$ and then exiting $p$ via an edge to some $R_{2j+1}$ for any $j> i$ (similarly for $q$). By the
connectedness property, set $E'$ contains a $p_{i'}\leadsto p$ path $P_1$. By Observation~\ref{obs:same-for-s-and-t}, the only edge entering $p$ is $(v_{2\ell}^{0}, p)$. Hence $E'$ must contain at least one pair of interrow edges of Type$(r)$ for $1\leq r\leq \ell$ since the only way to travel from row $R_{2r-1}$ to $R_{2r}$  (for each $r\in [\ell]$) is by using a pair of interrow edges of Type$(r)$ . Similarly $E'$ contains a $p\leadsto q_i$ path and the only outgoing edge from $p$ is $(p, v_{2\ell+1}^{0})$. Hence
$E'$ must contain at least one pair of interrow edges of Type$(r)$ for each $\ell+1 \leq r\leq n+1$ since the only way to travel from row $R_{2r-1}$ to $R_{2r}$ is by using a pair of interrow edges of Type$(r)$. Therefore, the edge set $E'$ contains at least one pair of interrow edges of each Type$(r)$ for $1\leq r\leq n+1$.

Next we show that $E'$ contains exactly two interrow edges of Type$(r)$ for each $r\in [n+1]$. Suppose $E'$ contains at least three interrow edges of some Type$(r)$ for some $r\in [n+1]$. Since weight of each interrow edge is $B^2/2$, this implies $E'$ gets a weight of at least $(n+1 +\frac{1}{2})\cdot B^2$ from the interrow edges. We have already seen $E'$ has contribution of $2B^{5}+4B^{4}+(2n+1)B^3$ from source, sink, terminal, and inrow up
edges. Hence the weight of $E'$ is at least $2B^{5}+ 4B^{4}+ (2n+1)B^3+ (n+1 + \frac{1}{2})\cdot B^2$. However, from Equation~\ref{eqn:c-star}, we get that
\begin{align*}
C^*_n &= 2B^{5} + 4B^{4} + (2n+1)B^{3} + (n+1)B^{2} + (4n-2)B + (n+1)^2\\
&\leq 2B^5 + 4B^4 + (2n+1)B^3 + (n+1)B^2 + 4n\cdot B + 4n\cdot B\\
&= 2B^5 + 4B^4 + (2n+1)B^3 + (n+1)B^2 + 8n\cdot B\\
&< 2B^5 + 4B^4 + (2n+1)B^3 + (n+1 + \frac{1}{2})B^2,
\end{align*}
since $\frac{B}{2}=9n^2 > 8n$. Hence, $E'$ contains exactly two interrow edges of Type$(r)$ for each $r\in [n+1]$.
%
\end{proof}

\begin{claim}
For each $r\in [n+1]$, let the unique pair of interrow edges in $E'$ (guaranteed by Claim~\ref{claim:one-interrow}) of Type$(r)$ belong
to column $C_{2\ell_{r}-1}$. If the unique source and sink edges in $E'$ (guaranteed by Claim~\ref{claim:one-in-out}) are incident to $p_{i'}$ and
$q_{j'}$, respectively, then we have $i'\leq \ell_1 \leq \ell_2 \leq \ldots \leq \ell_{n+1} \leq n+j'.$
\label{claim:interrow-monotone}
\end{claim}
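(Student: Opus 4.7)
The plan is to use the tight weight budget on the inrow right edges in $E'$, and argue that monotonicity follows from a telescoping identity on the column shifts between consecutive interrow types.

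From Claims~\ref{claim:one-in-out}--\ref{claim:one-interrow}, the fixed contributions to $w(E')$ from the one source, one sink, four terminal, $2n+1$ inrow up, and $2(n+1)$ interrow edges total $2B^5 + (n - i' + 1 + j') + 4B^4 + (2n+1)B^3 + (n+1)B^2$. The remaining budget available for inrow right edges (weight $B$) and shortcuts (weight $O(n^2)$) is at most $(4n-2)B + (n+1)^2 + i' - j' - n - 1$. Since $B = 18n^2$ dominates both the shortcut weights and the polynomial correction, this forces the total number of inrow right edges in $E'$ to be at most $4n - 2$, with each shortcut used further reducing this count.

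Next I would analyze horizontal movement within each tunnel, i.e., within the row pair $R_{2r}, R_{2r+1}$. The directional pattern of the inrow edges makes odd columns sinks and even columns sources in row $R_{2r}$, with the pattern reversed in row $R_{2r+1}$. Consequently, a unit rightward column-step within a tunnel requires two inrow right edges (of cost $B$ each) plus one inrow up edge (of cost $B^3$), while a unit leftward step is free but still consumes one inrow up edge. It follows that the total inrow right edges used in tunnel $r$ (for $r = 1, \ldots, n$) is at least $2 \cdot \max(0, \ell_{r+1} - \ell_r)$, with the analogous boundary expressions $2 \cdot \max(0, \ell_1 - i')$ for tunnel $0$ and $2 \cdot \max(0, n + j' - \ell_{n+1})$ for tunnel $n+1$. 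Additional rightward traversal is required in the tunnels containing the Steiner terminals $p$ (at column $0$) and $q$ (at column $4n$) to cover the full horizontal span dictated by Observation~\ref{obs:same-for-s-and-t}.

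The signed shifts telescope to $(\ell_1 - i') + \sum_{r=1}^{n} (\ell_{r+1} - \ell_r) + (n + j' - \ell_{n+1}) = n + j' - i'$. In the monotone non-decreasing case satisfying the claimed boundary conditions, every shift is non-negative and $\sum \max(0, \cdot) = n + j' - i'$; otherwise $\sum \max(0, \cdot) > n + j' - i'$ because each negative signed shift must be counterbalanced by an equally large positive shift elsewhere. Plugging the resulting lower bound on the inrow right edge count into the upper bound $4n - 2$ (and accounting for the two shortcut savings coming from $e_{i'}$ and $f_{j'}$-like terms) shows the budget is met with equality precisely in the monotone case; any violation of monotonicity or of the boundary inequalities forces a budget overflow. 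The main technical obstacle is the careful handling of the shortcut edges $e_i, f_j$, whose one-inrow-right-edge savings are only available for specific column alignments of the $\ell_r$, together with accounting for the contributions coming from the paths through $p$ and $q$, which must slot neatly into the tight budget without wasting a single factor of $B$.
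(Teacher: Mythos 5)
Your approach is genuinely different from the paper's, but it has a gap that is not easy to close. The paper does not use any budget bound on inrow right edges to prove this claim (indeed, Claim~\ref{claim:exactly-4n-2-inrow-right} is proved \emph{after} and \emph{using} Claim~\ref{claim:interrow-monotone}). Instead it uses the per-column cut structure given by Claim~\ref{claim:one-red-inrow}: $E'$ contains exactly one inrow up edge at each even column $C_{2j}$, and that edge lies in a single tunnel $R_{2r},R_{2r+1}$. A walk $p_{i'}\leadsto p\leadsto q_{j'}$ can only cross the cut at column $C_{2j}$ in the one tunnel where that inrow up edge lives. If $\ell_1<i'$ (or if $\ell_r>\ell_{r+1}$), the walk is forced to traverse some even column horizontally in two different tunnels, requiring two inrow up edges at that column and contradicting Claim~\ref{claim:one-red-inrow} directly. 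This is a local, per-column argument that needs no careful accounting of detours through $p$ and $q$.

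Your argument replaces this with an aggregate budget on inrow right edges plus a telescoping identity, and this is where it breaks down. The budget gives at most $4n-2$ inrow right edges, which is correct. But your lower bound from telescoping is $2\sum\max(0,\cdot)\geq 2(n+j'-i')$, and $2(n+j'-i')$ is generically much smaller than $4n-2$ (it is at most $4n-2$ with equality only when $i'=1,j'=n$). The slack must be filled by the extra rightward traversal caused by the forced visits to $p$ (column $0$) and $q$ (column $4n$); you acknowledge this but give no explicit count, and your per-tunnel formula $2\max(0,\ell_{r+1}-\ell_r)$ is simply wrong for the $p$-tunnel, where the walk enters at $C_{2\ell_\ell-1}$, drops to column $0$, and climbs back to $C_{2\ell_{\ell+1}-1}$, costing roughly $2\ell_{\ell+1}-1$ right edges, not a max of shifts. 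Whether the $p$- and $q$-detours exactly absorb the remaining budget for \emph{all} monotone $(\ell_r)$ profiles, and overflow it for \emph{all} non-monotone ones, is precisely the claim you need but never establish; it also depends delicately on which shortcuts $e_{i''},f_{j''}$ are usable (each requires $\ell_{n-i''+1}=i''$, etc.). So while a budget-based proof may be salvageable, as written the crux of the monotonicity argument is missing, and the cleaner and more robust route is the one the paper takes via the inrow up edge cut.
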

\begin{proof}
Observation~\ref{obs:same-for-s-and-t} implies the only way to get from row $R_{2i-1}$ to $R_{2i}$ is to use a pair of interrow edges
of Type$(i)$. By Claim~\ref{claim:one-interrow}, we use exactly one pair of interrow edges of each type. Recall that there is a
walk $P=p_{i'}\leadsto p\leadsto q_{j'}$ in $E'$, and this walk must use each of these interrow edges.

First we show that $\ell_1 \geq i'$. Suppose $\ell_1 <i'\leq n$. Since we use the source edge incident to $p_{i'}$, we must reach
vertex $v_{0}^{2i'-1}$. Since $i'>\ell_1$, to use the pair of interrow edges to travel from $v_{1}^{2\ell_{1}-1}$ to $v_{2}^{2\ell_{1}-1}$, the walk $P$ must contain a
$v_{0}^{2i'-1}\leadsto v_{1}^{2\ell_{1}-1}$ subwalk $P'$. By the construction of the connector gadget this subwalk $P'$ must
use the inrow up edge $(v_{1}^{2i'-2}, v_{0}^{2i'-2})$. Now the walk $P$ has to reach column $C_{2n+2j'-1}$ from column
$C_{2\ell_{1}-1}$, and so it must use another inrow edge from column $C_{2i'-2}$ (between rows $R_{2i}$ and $R_{2i+1}$ for some $i\geq 1$), which contradicts
Claim~\ref{claim:one-red-inrow}.

Now we prove $\ell_{n+1}\leq n+j'$. Suppose to the contrary that $\ell_{n+1}>n+j'$. Then by reasoning similar to that of above one
can show that at least two inrow up edges from column $C_{2n+2j'}$ are used, which contradicts Claim~\ref{claim:one-red-inrow}.

Finally suppose there exists $r\in [n]$ such that $\ell_{r} > \ell_{r+1}$. We consider the following three cases:
\begin{itemize}
\item \underline{$\ell_{r+1} < \ell_{r}\leq n$}: Then by using the fact that there is a $p_{i'}\leadsto q_{j'}$ walk in $E'$ we
    get at least two inrow up edges are used from column $C_{2\ell_{r}-2}$, which contradicts
    Claim~\ref{claim:one-red-inrow}.
\item \underline{$n< \ell_{r} \leq n+j'$}: Then we need to use at least two inrow up edges from column $C_{2\ell_{r}-2}$, which
    contradicts Claim~\ref{claim:one-red-inrow}.
\item \underline{$\ell_{r} > n+j'$}: Then we need to use at least two inrow up edges from column $C_{2n+2j'}$, which
    contradicts Claim~\ref{claim:one-red-inrow}.
\end{itemize}
\end{proof}

\begin{claim}
$E'$ contains at most two shortcut edges. \label{claim:at-most-2-shortcuts}
\end{claim}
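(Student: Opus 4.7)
The plan is another weight-counting argument on top of the structural facts already established. From Claims~\ref{claim:one-in-out}--\ref{claim:interrow-monotone}, the unavoidable contribution to the weight of $E'$ from the source, sink, four terminal, $2n+1$ inrow up, and $2(n+1)$ interrow edges is $2B^5 + (n-i'+1) + j' + 4B^4 + (2n+1)B^3 + (n+1)B^2$, so the residual budget available to inrow right edges and shortcut edges (the only remaining positive-weight edge types) is exactly $(4n-2)B + n^2 + n + (i'-j')$.

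I would next introduce a notion of a \emph{useful} shortcut. Because the head $w_{n-i+1}^{i}$ of the shortcut $e_i$ has, in $G^*$, exactly one outgoing edge, namely the interrow edge $(w_{n-i+1}^{i},v_{2n-2i+2}^{2i-1})$ of Type$(n-i+1)$ in column $C_{2i-1}$, Claim~\ref{claim:one-interrow} says that this outgoing edge belongs to $E'$ if and only if $\ell_{n-i+1}=i$; call $e_i$ useful in that case, and useless otherwise (when its head becomes a dead end in $E'$). Symmetrically, $f_j$ is useful if and only if $\ell_{n-j+2}=n+j$. Two distinct $e$-shortcuts cannot both be useful, since $\ell_{n-i_1+1}=i_1$ and $\ell_{n-i_2+1}=i_2$ with $i_1\neq i_2$ contradicts the monotonicity of Claim~\ref{claim:interrow-monotone}; the same holds for $f$-shortcuts. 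Hence the number $s_u$ of useful shortcuts is at most $2$, and when $s_u=2$ the mixed pair $(e_i, f_j)$ additionally forces $j\le i$, $i'\le i$, and $j\le j'$ through the same monotonicity.

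I would then observe that the $p\leadsto q$ path required by the connectedness property traverses all $4n$ columns of the gadget from left to right, and each such rightward column transition is realized by exactly one inrow right edge or by one useful shortcut (the latter followed by its already-paid interrow edge). Consequently $E'$ contains at least $4n-s_u$ inrow right edges, contributing weight at least $(4n-s_u)B$.

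Finally the proof splits by $s_u$. For $s_u\le 1$, the $\ge 4n-1$ inrow right edges alone weigh at least $(4n-1)B$, exceeding the residual budget by $B-n^2-n-(i'-j')\ge B-n^2-2n+1>0$ since $B=18n^2$. For $s_u=2$, the useful pair has weight $ni+n(n-j+1)=n^2+n+n(i-j)$, while the shortcut budget is at most $n^2+n+(i-j)$ after using $i'\le i$ and $j\le j'$; the residual room for any third shortcut is therefore at most $(1-n)(i-j)\le 0$, yet any extra shortcut costs at least $n$. In every case the weight of $E'$ strictly exceeds $C^*_n$, contradicting the hypothesis. The main obstacle in this plan is the structural link between the notion of a useful shortcut and the monotone sequence $(\ell_r)$, which is what pins down the key inequalities $j\le i$, $i'\le i$, $j\le j'$ and tightens the budget computation enough to rule out a third shortcut.
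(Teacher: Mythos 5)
Your proof is correct, and the engine driving it is the same as the paper's: the monotonicity of the sequence $(\ell_r)$ from Claim~\ref{claim:interrow-monotone}, applied to the type index $n-i+1$ (resp.\ $n-j+2$) of the interrow pair that an $e$-shortcut (resp.\ $f$-shortcut) feeds into, to conclude that at most one $e$-shortcut and at most one $f$-shortcut can have its interrow continuation present in $E'$. However, you take a substantially longer route around it. The paper's proof is three sentences: it picks two $e$-shortcuts $e_x, e_y$ with $x>y$, notes that ``it makes sense to include a shortcut only if we use the interrow edge that continues it'' (forcing $\ell_{n-x+1}=x$ and $\ell_{n-y+1}=y$), and immediately derives a contradiction with monotonicity since $n-x+1 < n-y+1$; the $f$-shortcut case is symmetric. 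The phrase ``it makes sense'' is carrying an implicit assumption that no \emph{useless} shortcut (one whose head is a dead end in $E'$) can be present. You noticed this and chose to make it explicit rather than gloss over it, which is what makes your argument long.

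What you buy with the extra work is a self-contained, airtight treatment of the useless-shortcut scenario: you set up the residual budget $(4n-2)B + n^2 + n + (i'-j')$, argue that every one of the $4n$ column transitions in the $p\leadsto q_{j'}$ and $p_{i'}\leadsto q$ walks costs either an inrow right edge or a useful shortcut, and then split by $s_u\in\{0,1,2\}$ to show a third (necessarily useless) shortcut blows the budget. This is correct -- your bound $i'-j' - n(i-j) \le (1-n)(i-j)\le 0$ for the $s_u=2$ residual, together with the minimum shortcut weight $n$, closes every case. What it costs is that you effectively re-derive pieces of the paper's downstream machinery (the column-transition count is exactly the proof of Claim~\ref{claim:exactly-4n-2-inrow-right}, and the final budget accounting reappears in Theorem~\ref{thm:connector-represents}), so the argument is far from minimal for this claim in isolation. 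One small imprecision: you speak of ``the $p\leadsto q$ path required by the connectedness property,'' but connectedness gives a $p\leadsto q_{j'}$ path and a $p_{i'}\leadsto q$ path, not a $p\leadsto q$ path; the union of those two covers all $4n$ column gaps (as the paper notes, using $2n+2j'-2\ge 2n\ge 2i'-1$), which is what you actually need.

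In short: the proposal is correct and shares the key structural lemma with the paper, but it trades the paper's brevity for an explicit weight-counting argument that eliminates useless shortcuts rather than dismissing them with a WLOG, at the price of duplicating work that the paper places in the later claims of Section~\ref{subsection:connector-second-statement}.
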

\begin{proof}
For the proof we will use Claim~\ref{claim:interrow-monotone}. We will show that we can use at most one $e$-shortcut. The proof for $f$-shortcut is similar.

Suppose we use two $e$-shortcuts viz. $e_{x}$ and $e_{y}$ such that $x> y$. Note that it makes sense to include a shortcut into $E'$ only if
we use the interrow edge that continues it. Hence $\ell_x = x$ and $\ell_y = y$. By Claim~\ref{claim:interrow-monotone}, we have $y=\ell_y\geq \ell_x = x$, which is a contradiction.
\end{proof}

\begin{claim}
\label{claim:exactly-4n-2-inrow-right} $E'$ contains exactly $4n-2$ inrow right edges.
\end{claim}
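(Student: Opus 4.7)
The plan is to squeeze the number of inrow right edges in $E'$ from both sides, showing it is at least $4n-2$ (by connectivity demands at column cuts) and at most $4n-2$ (by exhausting the weight budget), and hence exactly $4n-2$.

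For the lower bound, I will introduce, for each $c \in \{0,1,\ldots,4n-1\}$, the vertical cut $K_c$ separating columns $C_0,\ldots,C_c$ from $C_{c+1},\ldots,C_{4n}$. By inspecting the construction, the only edges of the gadget crossing $K_c$ are inrow right edges, inrow left edges, and shortcut edges; source, sink, terminal, interrow, inrow up, and inrow down edges each stay within a single column or connect only to the external terminals $p,q$. Since $E'$ contains exactly one source edge (incident to $p_{i'}$) and one sink edge (incident to $q_{j'}$) by Claim~\ref{claim:one-in-out}, the connectedness property forces both a $p_{i'}\leadsto q$ walk and a $p\leadsto q_{j'}$ walk inside $E'$. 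The first walk goes from column $C_{2i'-1}$ to column $C_{4n}$, so it must have net rightward displacement across every $K_c$ with $c\in[2i'-1,4n-1]$, which requires at least one rightward edge of $E'$ across each such cut. The second walk analogously forces a rightward edge across every $K_c$ with $c\in[0,2n+2j'-2]$. Invoking Claim~\ref{claim:interrow-monotone} we have $i'\leq n+j'$, so $2i'-1\leq 2n+2j'-1$ and the two intervals merge to cover the full range $[0,4n-1]$. Because each individual edge crosses exactly one cut, $E'$ must contain at least $4n$ rightward edges; by Claim~\ref{claim:at-most-2-shortcuts} at most two of these are shortcuts, so at least $4n-2$ must be inrow right edges.

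For the upper bound, I will tally the weight contributions already pinned down by the earlier claims. The source and sink edges contribute $2B^5+(n+1)+(j'-i')$, and here $(n+1)+(j'-i')\geq 2$ since $j'\geq 1$ and $i'\leq n$. The four terminal edges (Claim~\ref{claim:4-blue-dotted}) add $4B^4$, the $2n+1$ inrow up edges (Claim~\ref{claim:one-red-inrow}) add $(2n+1)B^3$, and the $2(n+1)$ interrow edges (Claim~\ref{claim:one-interrow}) add $(n+1)B^2$. Supposing for contradiction that $E'$ has at least $4n-1$ inrow right edges, these contribute at least $(4n-1)B$, so the total weight of $E'$ is at least
\[ 2B^5+4B^4+(2n+1)B^3+(n+1)B^2+(4n-1)B+2. \]
Comparing against the budget $C^*_n$, the excess is at least $B+2-(n+1)^2 = 18n^2+2-(n+1)^2 = 17n^2-2n+1 > 0$ for $n\geq 1$, contradicting the hypothesis that $E'$ has weight at most $C^*_n$. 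Hence $E'$ has at most $4n-2$ inrow right edges, matching the lower bound and giving equality.

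The main obstacle is making the cut argument airtight: I must verify that the two rightward-demanding walks together hit every cut $K_c$ with $c\in[0,4n-1]$, leaving no slack. This is precisely where the monotonicity inequality $i'\leq n+j'$ from Claim~\ref{claim:interrow-monotone} enters, gluing the rightward ranges of the $p_{i'}\leadsto q$ and $p\leadsto q_{j'}$ walks into a complete cover of $[0,4n-1]$.
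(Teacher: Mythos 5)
Your proof is correct and follows essentially the same two-sided argument the paper uses: a column-cut lower bound of $4n$ rightward crossings (minus at most two shortcuts from Claim~\ref{claim:at-most-2-shortcuts}) and a weight-budget contradiction for the upper bound. The only cosmetic differences are (a) your appeal to Claim~\ref{claim:interrow-monotone} for $i'\leq n+j'$ is unnecessary overhead — this already follows from $i'\leq n$ and $j'\geq 1$, which is how the paper glues the two intervals $[0,2n+2j'-2]$ and $[2i'-1,4n-1]$ together — and (b) you carry the extra $+2$ from the source/sink constant terms, whereas the paper drops them and still gets the contradiction from $B>(n+1)^2$; both calculations close the gap.
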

\begin{proof}
%
%
Since $E'$ contains a $p\leadsto q_{j'}$ path, it follows that $E'$ has a path connecting some vertex from the column $C_{i}$ to some vertex from column $C_{i+1}$ for each $0\leq
i\leq 2n+2j'-2$. Since $E'$ contains a $p_{i'}\leadsto q$ path, it follows that $E'$ has a path connecting some vertex from the column $C_{j}$ to some vertex from the column
$C_{j+1}$ for each $2i'-1\leq j\leq 4n-1$.

Since $2n+2j'-2\geq 2n$ and $2i'-1\leq 2n$, it follows that for each $0\leq i\leq 4n-1$ the solution $E'$ must contain a path connecting some vertex from column $C_i$ to some vertex from column $C_{i+1}$. Each such path has to either be a path of one which must be an inrow right edge, or a path of two edges consisting of a shorcut and an interrow edge.
But Claim~\ref{claim:at-most-2-shortcuts} implies $E'$ contains at most two shortcuts. Therefore, $E'$ contains at least $4n-2$ inrow right edges. Suppose $E'$ contains at
least $4n-1$ inrow right edges. We have already seen the contribution of source, sink, terminal, inrow up and interrow edges is
$2B^{5} + 4B^{4} + (2n+1)B^{3} + (n+1)B^{2}$. If $E'$ contains at least $4n-1$ inrow right edges, then the weight of
$E'$ is at least $2B^{5} + 4B^{4} + (2n+1)B^{3} + (n+1)B^{2} + (4n-1)B$.

However, from Equation~\ref{eqn:c-star}, we get that
\begin{align*}
C^*_n &= 2B^{5} + 4B^{4} + (2n+1)B^{3} + (n+1)B^{2} + (4n-2)B + (n+1)^2\\
&= 2B^5 + 4B^4 + (2n+1)B^3 + (n+1)B^2 + (4n-2)B + 4n^2\\
&< 2B^5 + 4B^4 + (2n+1)B^3 + (n+1)B^2 + (4n-1)B,
\end{align*}
since $B=18n^2 > 4n^2$.
\end{proof}

From Claim~\ref{claim:at-most-2-shortcuts} and the proof of Claim~\ref{claim:exactly-4n-2-inrow-right}, we can make the
following observation:
\begin{observation}
\label{claim:exactly-2-shortcuts} $E'$ contains exactly two shortcuts.
\end{observation}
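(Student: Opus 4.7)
The plan is to derive the lower bound of two shortcuts from the column-traversal analysis already carried out in the proof of Claim~\ref{claim:exactly-4n-2-inrow-right}, and then combine it with the upper bound from Claim~\ref{claim:at-most-2-shortcuts}. Since the upper bound is already in hand, the only thing left is to show that at least two shortcuts must appear in $E'$.

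First I would recall the column-traversal setup used in the proof of Claim~\ref{claim:exactly-4n-2-inrow-right}: the paths $p \leadsto q_{j'}$ and $p_{i'} \leadsto q$ together force $E'$ to contain a rightward crossing from column $C_i$ to column $C_{i+1}$ for every $0 \le i \le 4n-1$, so there are $4n$ column boundaries that must be crossed. It is also noted there that every such crossing is realized either by a single inrow right edge or by a shortcut together with the half of an interrow edge that continues it; no other edge of the gadget strictly increases the column index, so these two options are exhaustive.

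Next I would plug in the exact count of $4n-2$ inrow right edges in $E'$ from Claim~\ref{claim:exactly-4n-2-inrow-right}. These inrow right edges cover at most $4n-2$ of the $4n$ required boundary crossings, so at least two further crossings must be supplied by shortcut-based transitions. A direct inspection of the coordinates of the endpoints, namely $v_{2n-2i+2}^{2i-2}$ and $w_{n-i+1}^{i}$ for $e_i$ and $w_{n-j+2}^{n+j}$ and $v_{2n-2j+3}^{2n+2j}$ for $f_j$, shows that each shortcut (with its companion half-interrow edge) contributes exactly one column crossing. Consequently at least two shortcuts must appear in $E'$.

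Combining this lower bound with the upper bound of two from Claim~\ref{claim:at-most-2-shortcuts} yields the observation. The argument is essentially a bookkeeping step on top of what is already proved, with no new weight inequalities or case analysis required; the one mildly subtle point to verify is that a single shortcut cannot account for two missing column transitions, but that follows immediately from the endpoint coordinates above. So I do not foresee any real obstacle.
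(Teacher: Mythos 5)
Your proposal is correct and matches the paper's intended justification: the paper leaves this as an unproved observation that "follows from Claim~\ref{claim:at-most-2-shortcuts} and the proof of Claim~\ref{claim:exactly-4n-2-inrow-right}", and you have filled in exactly that bookkeeping — the $4n$ forced column crossings, minus the $4n-2$ crossings an inrow right edge can supply, plus the check that each shortcut accounts for only one crossing, yields a lower bound of two, which combines with the upper bound of two.
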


Let the shortcuts used be $e_{i''}$ and $f_{j''}$. Recall that Claim~\ref{claim:one-in-out} implies that at most one edge incident to $P$ and
at most one edge incident to $Q$ is used in $E'$. Therefore, if we show that $i'=j'$, then it follows that $E'$ represents $\beta=i'=j'$.

\begin{claim}
The following inequalities hold:
\begin{itemize}
\item $i''\geq i'$ and $j''\leq j'$
\item $i''\geq j''$
\end{itemize}
\label{claim:properties-of-shortcut-indices}
\end{claim}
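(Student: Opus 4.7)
The plan is to derive all three inequalities as direct consequences of Claim~\ref{claim:interrow-monotone}, after first translating what it means for a shortcut to be present in $E'$ into a statement about the column indices $\ell_r$.

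First I would record the key observation about shortcuts. Recall that $e_i$ is incident to the subdivision vertex $w_{n-i+1}^{i}$, which lies on the Type$(n-i+1)$ interrow edge in column $C_{2i-1}$; and $f_j$ is incident to $w_{n-j+2}^{n+j}$, which lies on the Type$(n-j+2)$ interrow edge in column $C_{2n+2j-1}$. Observation~\ref{claim:exactly-2-shortcuts} combined with Claim~\ref{claim:one-interrow} tells us that $E'$ has exactly one pair of interrow edges of each type; moreover, if a shortcut is in $E'$, then the interrow edge it touches must also be in $E'$ (otherwise the shortcut has no way to rejoin the rest of the solution without contradicting Claim~\ref{claim:one-red-inrow} or Claim~\ref{claim:one-interrow}). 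Consequently,
\[
\ell_{n-i''+1}=i'' \qquad \text{and} \qquad \ell_{n-j''+2}=n+j''.
\]

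Now the first two inequalities are immediate from Claim~\ref{claim:interrow-monotone}, which gives $i'\le \ell_1\le\ell_2\le\cdots\le\ell_{n+1}\le n+j'$. Indeed, $i'\le \ell_{n-i''+1}=i''$ yields $i''\ge i'$, while $n+j''=\ell_{n-j''+2}\le \ell_{n+1}\le n+j'$ yields $j''\le j'$.

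The main obstacle is the third inequality $i''\ge j''$, which I would obtain by ruling out $i''<j''$ via two subcases. If $j''=i''+1$ then $n-j''+2=n-i''+1$, so the same type index would simultaneously satisfy $\ell_{n-i''+1}=i''$ and $\ell_{n-i''+1}=n+j''=n+i''+1$, which is impossible since $i''\le n<n+i''+1$. If $j''\ge i''+2$ then $n-j''+2<n-i''+1$, so monotonicity from Claim~\ref{claim:interrow-monotone} forces $n+j''=\ell_{n-j''+2}\le\ell_{n-i''+1}=i''$, contradicting $n+j''\ge n+1>n\ge i''$. Hence $i''\ge j''$, completing the proof.
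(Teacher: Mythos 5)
Your proof is correct and takes essentially the same route as the paper: both establish $\ell_{n-i''+1}=i''$ and $\ell_{n-j''+2}=n+j''$ from the shortcut definitions and then appeal to the monotone chain $i'\le\ell_1\le\cdots\le\ell_{n+1}\le n+j'$ of Claim~\ref{claim:interrow-monotone}. The only cosmetic difference is that for $i''\ge j''$ the paper directly reads off $n-j''+2>n-i''+1$ from $n+j''>i''$ and monotonicity, while you phrase the same observation as a two-case contradiction; both are valid.
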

\begin{proof}
To use the shortcut $e_{i''}$, we need to use the lower half of a pair of interrow edges from column $C_{2i''-1}$.
Claim~\ref{claim:interrow-monotone} implies $i'\leq \ell_1$ and the pairs of interrow edges used are monotone from left to right. Hence
$i''\geq i'$. Similarly, to use the shortcut $f_{j''}$, we need to use the upper half of an interrow edge from Column
$C_{2n+2j''-1}$. Claim~\ref{claim:interrow-monotone} implies $n+j'\geq \ell_{n+1}\geq n+j''$. Hence $j''\leq j'$.

Since we use the shortcut $e_{i''}$ it follows that $\ell_{n-i''+1} = i''$. Similarly, since we use the shortcut $f_{j''}$ it follows that $\ell_{n-j''+2}=n+j''$. As $1\leq i'',j''\leq n$ it follows that $n+j''>i''$. By mono    tonicity of the $\ell$-sequence shown in Claim~\ref{claim:interrow-monotone}, we have $n-j''+2> n-i''+1$, i.e., $i''\geq j''$.
%
\end{proof}

\begin{theorem}
\label{thm:connector-represents} The weight of $E'$ is exactly $C^*_n$, and $E'$ represents some integer $\beta\in [n]$.
\end{theorem}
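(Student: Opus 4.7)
The plan is to read off the total weight of $E'$ from the exact edge counts established in the preceding claims, isolate a small correction term that depends on $i',j',i'',j''$, and then use Claim~\ref{claim:properties-of-shortcut-indices} together with an integrality argument to force this correction to vanish.

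First I would tabulate $\text{wt}(E')$ using Claims~\ref{claim:one-in-out},~\ref{claim:4-blue-dotted},~\ref{claim:one-red-inrow},~\ref{claim:one-interrow},~\ref{claim:exactly-4n-2-inrow-right} and Observation~\ref{claim:exactly-2-shortcuts}: $E'$ contains exactly one source edge (incident to $p_{i'}$, of weight $B^{5}+(n-i'+1)$), exactly one sink edge (incident to $q_{j'}$, of weight $B^{5}+j'$), exactly four terminal edges (contributing $4B^{4}$), exactly $2n+1$ inrow up edges (contributing $(2n+1)B^{3}$), exactly $2(n+1)$ interrow edges (contributing $(n+1)B^{2}$), exactly $4n-2$ inrow right edges (contributing $(4n-2)B$), and exactly the two shortcuts $e_{i''}$ and $f_{j''}$ (contributing $n\cdot i''+n(n-j''+1)$). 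All other edges possibly in $E'$ are inrow left or inrow down edges, each of weight $0$, and so do not affect the sum. Summing these contributions and comparing with Equation~\eqref{eqn:c-star}, a short algebraic simplification yields
\[
\text{wt}(E')=C^*_n+n(i''-j'')-(i'-j').
\]

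Next, the hypothesis $\text{wt}(E')\le C^*_n$ becomes $n(i''-j'')\le i'-j'$. Since $i',j'\in[n]$, the right-hand side is at most $n-1$; by Claim~\ref{claim:properties-of-shortcut-indices} we have $i''\ge j''$, so the left-hand side is a nonnegative integer multiple of $n$. The only such multiple bounded by $n-1$ is $0$, which forces $i''=j''$ and, in turn, $i'\ge j'$. Substituting $i''=j''$ into the other two inequalities $i''\ge i'$ and $j''\le j'$ of Claim~\ref{claim:properties-of-shortcut-indices} gives the chain $i'\le i''=j''\le j'$, which combined with $i'\ge j'$ forces $i'=j'=i''=j''=:\beta\in[n]$. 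In particular the weight inequality is tight, so $\text{wt}(E')=C^*_n$.

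Finally, by Claim~\ref{claim:one-in-out} the unique edge of $E'$ incident to $P$ is the source edge leaving $p_{i'}=p_\beta$ and the unique edge incident to $Q$ is the sink edge entering $q_{j'}=q_\beta$, which matches Definition~\ref{defn:represents-connector} exactly; hence $E'$ represents $\beta$. The main obstacle is really just the bookkeeping needed to isolate the correction term $n(i''-j'')-(i'-j')$ cleanly; once this is in hand, integrality combined with the three monotonicity inequalities of Claim~\ref{claim:properties-of-shortcut-indices} collapses all four indices to a single value.
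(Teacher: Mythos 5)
Your proof is correct, and it takes essentially the same approach as the paper's. The one organizational improvement worth noting is that you go straight from the exact edge counts established in Claims~\ref{claim:one-in-out} through \ref{claim:exactly-4n-2-inrow-right} and Observation~\ref{claim:exactly-2-shortcuts} to the identity $\text{wt}(E')=C^*_n+(j'-i')+n(i''-j'')$, whereas the paper's proof first argues that any additional nonzero edge would contribute at least $B$ and shows this would push past $C^*_n$ (an unnecessary detour, since those claims already pin down every positive-weight edge type exactly, leaving only the zero-weight inrow left and inrow down edges free). Your integrality argument ("$n(i''-j'')$ is a nonnegative multiple of $n$ bounded above by $n-1$, so it is $0$") packages the paper's two-case analysis on whether $i''>j''$ or $i''=j''$ into a single step; the remaining use of $i'\le i''=j''\le j'$ together with $i'\ge j'$ to force $\beta:=i'=j'$ is identical to the paper's. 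Both proofs rely on the full strength of Claim~\ref{claim:properties-of-shortcut-indices}.
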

\begin{proof}
As argued above it is enough to show that $i'=j'$. We have already seen $E'$ has the following contribution to its weight:
\begin{itemize}
\item The source edge incident to $p_{i'}$ has weight $B^{5}+(n-i'+1)$ by Claim~\ref{claim:one-in-out}.
\item The sink edge incident to $q_{j'}$ has weight $B^{5}+ j'$ by Claim~\ref{claim:one-in-out}.
\item The terminal edges incur weight $4B^4$ by Claim~\ref{claim:4-blue-dotted}.
\item The inrow up edges incur weight $(2n+1)B^{3}$ by Claim~\ref{claim:one-red-inrow}.
\item The interrow edges incur weight $(n+1)B^2$ by Claim~\ref{claim:one-interrow}.
\item The inrow right edges incur weight $(4n-2)B$ by Claim~\ref{claim:exactly-4n-2-inrow-right}.
\item The shortcut $e_{i''}$ incurs weight $n\cdot i''$ and $f_{j''}$ incurs weight $n(n-j''+1)$ by
    Claim~\ref{claim:exactly-2-shortcuts}.
\end{itemize}
Thus we already have a weight of
\begin{equation}
C^{**}=(2B^{5}+(n-i'+j'+1))+ 4B^4 + (2n+1)B^{3} + (n+1)B^2 + (4n-2)B + n(n-j''+i''+1)
\label{eqn:c-star-star}
\end{equation}
Observe that adding any edge of non-zero weight to $E'$ (other than the ones mentioned above) increases the weight $C^{**}$ by
at least $B$, since Claim~\ref{claim:at-most-2-shortcuts} does not allow us to use any more shortcuts.
Equation~\ref{eqn:c-star} and Equation~\ref{eqn:c-star-star} imply $C^{**}+B - C^*_n = B - n(i'-j')-(j''-i'') \geq 20n^3 - n(i'-j')-(j''-i'')\geq 0$, since $i', i'', j', j''\in [n]$. This implies that the weight of $E'$ is exactly $C^{**}$. We now show that in fact $C^{**}-C^*_n \geq 0$, which will imply that $C^{**}=C^*_n$. From Equation~\ref{eqn:c-star} and Equation~\ref{eqn:c-star-star}, we have $C^{**} - C^*_n = (j'-i')+n(i''-j'')$. We now show that this
quantity is non-negative. Recall that from Claim~\ref{claim:properties-of-shortcut-indices}, we have $i''\geq j''$.
\begin{itemize}
\item If $i''>j''$ then $n(i''-j'')\geq n$. Since $j',i'\in [n]$, we have $j'-i'\geq 1-n$. Therefore, $(j'-i')+n(i''-j'')\geq n+(1-n) =1$
\item If $i''=j''$ then by Claim~\ref{claim:properties-of-shortcut-indices} we have $i'\leq i'' = j'' \leq j'$. Hence $(j'-i')\geq 0$ and so $(j'-i')+n(i''-j'')\geq 0$.
\end{itemize}
Therefore $C^{**}= C^{*}_n$, i.e., $E'$ has weight exactly $C^{*}_n$. However $C^*_n=C^{**}$ implies
\begin{equation}
j'-i' + n(i''-j'') = 0
\label{eqn:equality}
\end{equation}
Since $i', j'\in [n]$ we have $n-1\geq j'-i' \geq 1-n$. If $i''\neq j''$ then $n(i''-j'')\geq n$ and hence $j'-i' + n(i''-j'') \geq (1-n)+n \geq 1$. Contradiction. Hence, we have $j''=i''$ and therefore Equation~\ref{eqn:equality} implies $j'=i'$, i.e, $E'$ is represented by $i'=j'\in [n]$.
\end{proof}

\section{Proof of Lemma~\ref{lem:main-gadget}: constructing the main gadget}
\label{main:proof}


\begin{figure}[H]
\centering
{\footnotesize \def\svgwidth{0.7\linewidth}%
\executeiffilenewer{main-new-copy.svg}{main-new-copy.pdf}%
{inkscape -z -D --file=main-new-copy.svg %
--export-pdf=main-new-copy.pdf --export-latex}%
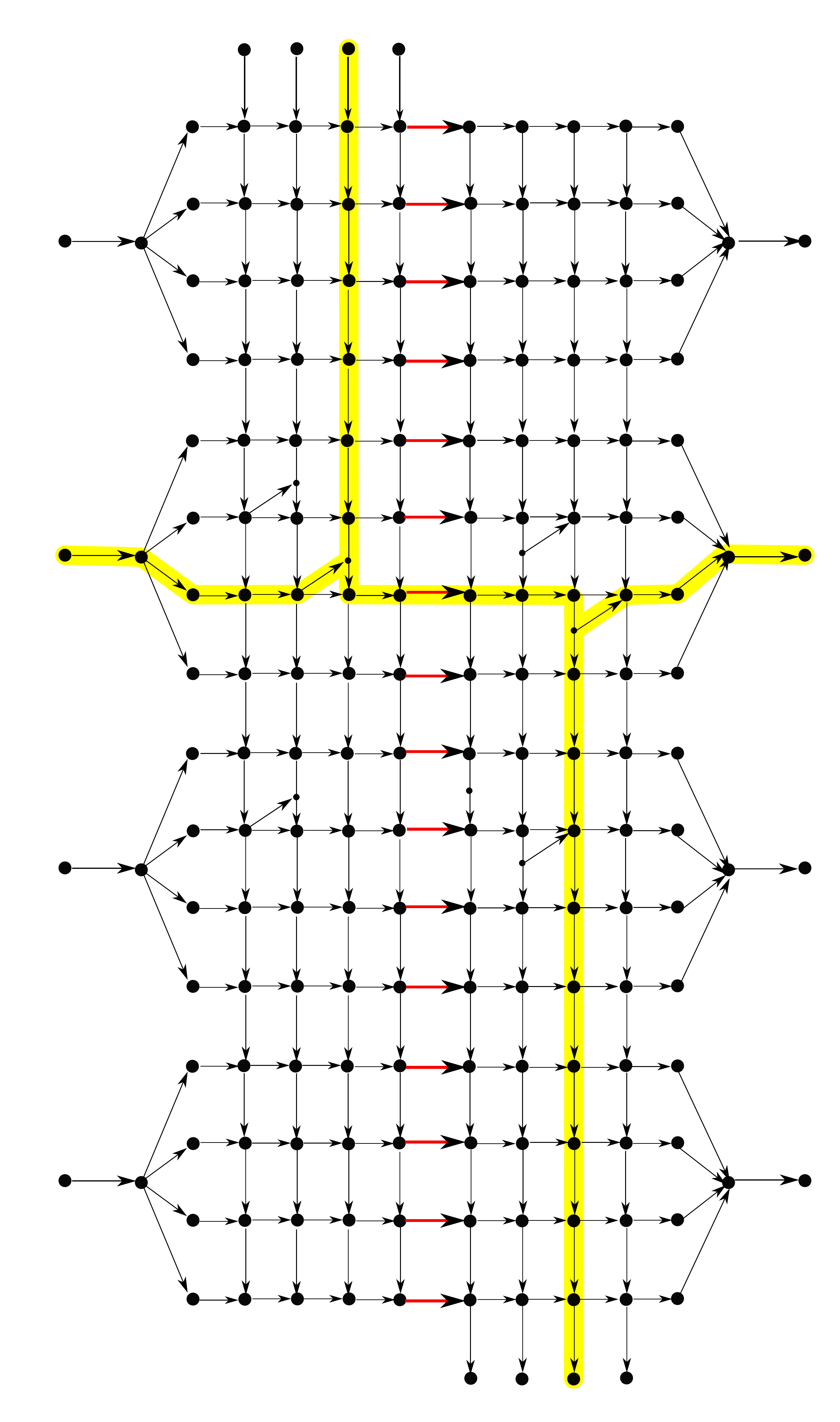%
}
\caption{The main gadget (for $n=4$) representing the set $\{(2,2),(2,3),(3,2)\}$. The highlighted edges represent the pair $(2,3)$. \label{fig:main-gadget}}
\end{figure}

We prove Lemma~\ref{lem:main-gadget} in this section, by constructing a main gadget satisfying the specifications of Section~\ref{sec:main:exists}.
Recall that, as discussed at the start of Section~\ref{sec:construction-of-g^*-scss-planar}, we may assume that $1< x,y< n$ holds for every $(x,y)\in S_{i,j}$.

\subsection{Different types of edges in main gadget}
\label{subsec:description-of-edges-in-main-gadget}

Before proving Lemma~\ref{lem:main-gadget}, we first describe the
construction of the main gadget in more detail (see
Figure~\ref{fig:main-gadget}).  The main gadget has $n^2$ rows denoted
by $R_1, R_2, \ldots, R_{n^2}$ and $2n+1$ columns denoted by $C_0,
C_1, \ldots, C_{2n+1}$. Let us denote the vertex at intersection of
row $R_i$ and column $C_j$ by $v_{i}^j$. We now describe the various
different kinds of edges in the main gadget.
\begin{enumerate}
\item \textbf{Left Source Edges}: For every $i\in [n]$, the edge $(\ell_i, \ell'_{i})$ is a left source edge.

\item \textbf{Right Sink Edges}: For every $i\in [n]$, the edge $(r'_i, r_{i})$ is a right sink edge.

\item \textbf{Top Source Edges}: For every $i\in [n]$, the edge $(t_i, v_{1}^{i})$ is a top source edge.

\item \textbf{Bottom Sink Edges}: For every $i\in [n]$, the edge $(v_{n^2}^{n+i}, b_{i})$ is a bottom sink edge.

\item \textbf{Source Internal Edges}: This is the set of  $n^2$ edges of the form $(\ell'_{i}, v_{j}^{0})$ for $i\in [n]$ and $n(i-1)+1
\leq j\leq n\cdot i$. We number the source internal edges from top to bottom, i.e., the edge $(\ell'_{i}, v_{j}^{0})$ is called the
$j^{th}$ source internal edge, where $i\in [n]$ and $n(i-1)+1 \leq j\leq n\cdot i$.

\item \textbf{Sink Internal Edges}: This is the set of $n^2$ edges of the form $(v_{j}^{2n+1}, r'_i)$ for $i\in [n]$ and $n(i-1)+1
\leq j\leq n\cdot i$. We number the sink internal edges from top to bottom, i.e., the edge $(v_{j}^{2n+1}, r'_i)$ is called the
$j^{th}$ sink internal edge, where $i\in [n]$ and $n(i-1)+1 \leq j\leq n\cdot i$.

\item \textbf{Bridge Edges}: This is the set of $n^2$ edges of the form $(v_{i}^{n}, v_{i}^{n+1})$ for $1\leq i\leq n^2$. We number
the bridge edges from top to bottom, i.e., the edge $(v_{i}^{n}, v_{i}^{n+1})$ is called the $i^{th}$ bridge edge. These edges are shown in red color in Figure~\ref{fig:main-gadget}.

\item \textbf{Inrow Right Edges}: For each $i\in [n^2]$ we call the $\rightarrow$ edges (except the bridge edge $(v_{i}^{n},
v_{i}^{n+1})$) connecting vertices of row $R_{i}$ as inrow right edges. Formally, the set of inrow right edges of row $R_{i}$ are given by $\{(v_{i}^{j}, v_{i}^{j+1})\ :\ 0\leq j\leq n-1 \} \bigcup \{(v_{i}^{j}, v_{i}^{j+1})\ :\ n+1\leq j\leq 2n\}$

\item \textbf{Interrow Down Edges}: For each $i\in [n^2 - 1]$ we call the $2n$ $\downarrow$ edges connecting vertices of row $R_{i}$
to $R_{i+1}$ as interrow down edges. The $2n$ interrow edges between row $R_{i}$ and $R_{i+1}$ are $(v_{i}^{j},
v_{i+1}^{j})$ for each $1\leq j\leq 2n$.

\item \textbf{Shortcut Edges}: There are $2|S|$ shortcut edges, namely $e_1, e_2, \ldots, e_{|S|}$ and $f_1, f_2, \ldots,f_{|S|}$.
The shortcut edge for a $(x,y)\in S$ for some $1 < x,y < n$ is defined the following way:
        \begin{itemize}
        \item Introduce a new vertex $g_{x}^y$ at the middle of the edge $(v_{n(x-1)+y-1}^{y}, v_{n(x-1)+y}^{y})$ to
            create two new edges $(v_{n(x-1)+y-1}^{y}, g_{x}^y)$ and $(g_{x}^y, v_{n(x-1)+y}^{y})$. Then the edge
            $e_{x,y}$ is $(v_{n(x-1)+y}^{y-1}, g_{x}^y)$.
        \item Introduce a new vertex $h_{x}^y$ at the middle of the edge $(v_{n(x-1)+y}^{n+y}, v_{n(x-1)+y+1}^{n+y})$
            to create two new edges $(v_{n(x-1)+y}^{n+y}, h_{x}^y)$ and $(h_{x}^y, v_{n(x-1)+y+1)}^{n+y})$. Then the
            edge $f_{x,y}$ is $(h_{x}^{y}, v_{n(x-1)+y}^{n+y+1})$.
        \end{itemize}
\end{enumerate}

\subsection{Assigning weights in the main gadget}

Define $B=11n^2$. We assign weights to the edges as follows:
\begin{enumerate}
\item Each left source edge has weight $B^4$.
\item Each right sink edge has weight $B^4$.
\item For every $1\leq i\leq n$, the $i^{th}$ top source edge $(t_i, v_{1}^{i})$ has weight $B^4$.
\item For every $1\leq i\leq n$, the $i^{th}$ bottom sink edge $(v_{n^2}^{n+i}, b_i)$ has weight $B^4$.
\item For each $i\in [n^2]$, the $i^{th}$ bridge edge $(v_{i}^{n}, v_{i}^{n+1})$ has weight $B^3$.
\item For each $i\in [n^2]$, the $i^{th}$ source internal edge has weight $B^{2}(n^2 - i)$.
\item For each $j\in [n^2]$, the $j^{th}$ sink internal edge has weight $B^{2}\cdot j$.
\item Each inrow right edge has weight $3 B$.
\item For each $(x,y)\in S$, both the shortcut edges $e_{x,y}$ and $f_{x,y}$ have weight $B$ each.
\item Each interrow down edge that does not have a shortcut incident to it has weight 2. If an interrow edge is split
    into two edges by the shortcut incident to it, then we assign a weight 1 to each of the two parts.


\end{enumerate}

Now we define the quantity $M^*_n$ stated in Lemma~\ref{lem:main-gadget}:
\begin{equation}
M^*_n = 4B^{4} + B^{3} + B^{2}n^{2} + B(6n-4) + 2(n^2-1).
\label{eqn:m-star}
\end{equation}

Next we are ready to prove the statements of Lemma~\ref{lem:main-gadget}.

\subsection{For every $(x,y)\in S$, there is a solution $E_{x,y}$ of weight $M^*_n$ that represents $(x,y)$}
\label{subsection:main-first-statement}

For $(x,y)\in S\subseteq [n]\times [n]$ define $z=n(x-1)+y$. Let $E_{x,y}$ be the union of the following sets of edges:

\begin{itemize}
\item The $x^{th}$ left source edge and $x^{th}$ right sink edge. This incurs a weight of $2B^4$.
\item The $y^{th}$ top source edge and the $y^{th}$ bottom sink edge.
 This incurs a weight of $2B^4$.
\item The $z^{th}$ bridge edge. This incurs a weight of $B^3$.
\item The $z^{th}$ source internal edge and $z^{th}$ sink internal edge. This incurs a weight of $B^{2}n^2$.
\item All inrow right edges from row $R_{z}$ except $(v_{z}^{y-1}, v_{z}^{y})$ and $(v_{z}^{n+y}, v_{z}^{n+y+1})$. This
    incurs a weight of $3B\cdot (2n-2)$.
\item The shortcut edges $e_{x,y}$ and $f_{x,y}$. This incurs a weight of $2B$.
\item The vertically downward path $v_{1}^{y}\rightarrow v_{2}^{y}\rightarrow \ldots \rightarrow v_{z}^{y}$ formed by interrow down edges in column $C_y$. This incurs a
    weight of $2(z-1)$.
\item The vertically downward path $v_{z}^{n+y}\rightarrow v_{z+1}^{n+y}\rightarrow \ldots \rightarrow v_{n^2}^{n+y}$ formed by interrow down edges in column $C_{n+y}$.
    This incurs a weight of $2(n^2 - z)$.
\end{itemize}

From the above paragraph and Equation~\ref{eqn:m-star}, it follows the total weight of $E_{x,y}$ is exactly $M^*_n$. Note that we did not include two inrow right edges, $(v_{z}^{y-1}, v_{z}^{y})$ and $(v_{z}^{n+y}, v_{z}^{n+y+1})$, from row $R_{z}$ in $E_{x,y}$. However, we can mimic the function of both these inrow right edges using shortcut edges and interrow down edges in $E_{x,y}$ as follows:
\begin{itemize}
  \item We can still travel from $v_{z}^{y-1}$ to $v_{z}^{y}$ in $E_{x,y}$ as follows: take the path $(v_{z}^{y-1}\rightarrow g_{x}^y \rightarrow v_{z}^{y})$.
  \item We can still travel from $(v_{z}^{n+y}$ to $v_{z}^{n+y+1})$ in $E_{x,y}$ via the path $(v_{z}^{n+y}\rightarrow h_{x,y} \rightarrow
v_{z}^{n+y+1})$.
\end{itemize}

The following observation follows from the previous paragraph:
\begin{observation}
\label{obs:reachability}
In $E_{x,y}$ we can reach $v_{z}^{j}$ from $v_{z}^{i}$ for any $2n+1\geq j\geq i\geq 0$.
\end{observation}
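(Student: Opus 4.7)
The plan is to derive the claim directly from the explicit list of edges forming $E_{x,y}$ given in Section~\ref{subsection:main-first-statement}. Since row $R_z$ consists of the vertices $v_z^0, v_z^1, \ldots, v_z^{2n+1}$ laid out left to right, it suffices to exhibit, for every $i \in \{0, 1, \ldots, 2n\}$, a directed path from $v_z^i$ to $v_z^{i+1}$ inside $E_{x,y}$. Concatenating these single-step paths then yields a $v_z^i \leadsto v_z^j$ walk for every $0 \le i \le j \le 2n+1$, which is exactly the statement.

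For every index $i \in \{0, 1, \ldots, 2n\}$ with $i \notin \{y-1, n, n+y\}$, the inrow right edge $(v_z^i, v_z^{i+1})$ of row $R_z$ is among those explicitly included when defining $E_{x,y}$, and hence the single edge itself realizes the required step. For $i = n$ the edge $(v_z^n, v_z^{n+1})$ is the $z$-th bridge edge rather than an inrow right edge, and this bridge edge is also listed among the edges of $E_{x,y}$, so this step is likewise immediate.

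The only indices that require a short argument are $i = y-1$ and $i = n+y$, since the inrow right edges $(v_z^{y-1}, v_z^y)$ and $(v_z^{n+y}, v_z^{n+y+1})$ were deliberately omitted from $E_{x,y}$. For these, I would invoke the two-edge detours $v_z^{y-1} \rightarrow g_x^y \rightarrow v_z^y$ and $v_z^{n+y} \rightarrow h_x^y \rightarrow v_z^{n+y+1}$, which are available in $E_{x,y}$ because we included the shortcut edges $e_{x,y}$ and $f_{x,y}$ together with the adjacent halves of the corresponding subdivided interrow edges (on the column-$y$ interrow edge between $R_{z-1}$ and $R_z$ for $e_{x,y}$, and on the column-$(n+y)$ interrow edge between $R_z$ and $R_{z+1}$ for $f_{x,y}$).

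This is essentially a bookkeeping exercise against the explicit construction of $E_{x,y}$, so I do not foresee any real obstacle. The only point that needs a little care is the naming convention: the column-$n$ step is handled by a bridge edge rather than an inrow right edge, so one must remember to treat $i = n$ as a separate (but trivial) case when listing the available single-step edges.
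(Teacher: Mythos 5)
Your proof is correct and takes essentially the same approach as the paper: the paper also observes that the only two missing inrow right edges from row $R_z$ are $(v_z^{y-1},v_z^y)$ and $(v_z^{n+y},v_z^{n+y+1})$ and that the two-edge detours through $g_x^y$ and $h_x^y$ (using the shortcut edges and the adjacent halves of the subdivided interrow edges) bypass them, then derives the observation from this. Your write-up merely makes the case split on $i\notin\{y-1,n,n+y\}$, $i=n$, and $i\in\{y-1,n+y\}$ explicit, which is a faithful expansion of the same argument.
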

It remains to show that $E_{x,y}$ represents $(x,y)\in S$. It is easy to see that the first four conditions of Definition~\ref{defn:represents-main} are satisfied since the definition of $E_{x,y}$ itself gives the following:
\begin{itemize}
\item In $E_{x,y}$ the only outgoing edge from $L$ is the one incident to $\ell_x$
\item In $E_{x,y}$ the only incoming edge to $R$ is the one incident to $r_x$
\item In $E_{x,y}$ the only outgoing edge from $T$ is the one incident to $t_y$
\item In $E_{x,y}$ the only incoming edge to $B$ is the one incident to $b_y$
\end{itemize}
We now show that the last condition of Definition~\ref{defn:represents-main} is also satisfied by $E_{x,y}$:
\begin{enumerate}
\item There is a $\ell_x \leadsto r_x$ path in $E_{x,y}$ obtained by taking the edges in the following order:
        \begin{itemize}
        \item the left source edge $(\ell_x, \ell'_x)$,
        \item the source internal edge $(\ell'_x, v_{z}^{0})$,
        \item the horizontal path $v_{z}^0 \rightarrow v_{z}^1 \rightarrow \ldots v_{z}^n$ given by Observation~\ref{obs:reachability},
        \item the bridge edge $(v_{z}^n, v_{z}^{n+1})$,
        \item the horizontal path $v_{z}^{n+1} \rightarrow v_{z}^{n+2} \rightarrow \ldots v_{z}^{2n+1}$ given by Observation~\ref{obs:reachability},
        \item the sink internal edge $(v_{z}^{2n+1}, r'_x)$, and
        \item the right sink edge $(r'_x, r_x)$.
        \end{itemize}
\item There is a $t_y \leadsto b_y$ path in $E_{x,y}$ obtained by taking the edges in the following order:
        \begin{itemize}
        \item the top source edge $(t_y, v_{1}^{y})$,
        \item the downward path $v_{1}^y \rightarrow v_{2}^y \rightarrow \ldots v_{z}^y$ given by interrow down edges
            in column $C_y$,
        \item the horizontal path $v_{z}^y \rightarrow v_{z}^{y+1} \rightarrow \ldots v_{z}^{n}$ given by Observation~\ref{obs:reachability},
        \item the bridge edge $(v_{z}^n, v_{z}^{n+1})$,
        \item the horizontal path $v_{z}^{n+1} \rightarrow v_{z}^{n+2} \rightarrow \ldots v_{z}^{n+y}$ given by Observation~\ref{obs:reachability},
        \item the downward path $v_{z}^{n+y} \rightarrow v_{z+1}^{n+y} \rightarrow \ldots v_{n^2}^{n+y}$ given by
            interrow down edges in column $C_{n+y}$, and
        \item the bottom sink edge $(v_{n^2}^{n+y}, b_y)$.
        \end{itemize}
\end{enumerate}

Therefore, $E_{x,y}$ has weight $M^*_n$ and represents $(x,y)$.

\subsection{$E'$ satisfies the connectedness property and has weight at most $M^*_n \Rightarrow E'$ represents some $(\alpha,\beta)\in S$ and has weight exactly $M^*_n$}
\label{subsection:main-second-statement}

In this section we show that if a set of edges $E'$ satisfies the connectedness property and has weight $M^*_n$, then it
represents some $(\alpha,\beta)\in S$. We do this via the following series of claims and observations.

\begin{claim}
\label{claim:one-in-out-main} $E'$ contains
\begin{itemize}
\item exactly one left source edge,
\item exactly one right sink edge,
\item exactly one top source edge, and
\item exactly one bottom sink edge.
\end{itemize}
\end{claim}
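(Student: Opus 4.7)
The plan is to prove the four equalities by combining a structural lower bound from the connectedness property with a weight-counting upper bound.

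First I would establish that $E'$ contains \emph{at least} one edge of each of the four types. The key structural observation is that every vertex in $L$ and in $T$ is a source with exactly one outgoing edge (its left source edge or top source edge, respectively), and every vertex in $R$ and in $B$ is a sink with exactly one incoming edge (its right sink edge or bottom sink edge). Consequently the four paths guaranteed by Definition~\ref{defn:source-sink-connectivity} must each begin or end with an edge of the corresponding type: condition~(1) forces a left source edge, condition~(2) forces a top source edge, condition~(3) forces a right sink edge, and condition~(4) forces a bottom sink edge in $E'$.

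Next I would rule out any further copies by a global weight budget argument, mirroring the proofs of Claims~\ref{claim:one-in-out}--\ref{claim:exactly-4n-2-inrow-right} in the connector-gadget analysis. Each of the four types of edges has weight $B^4$, so if $E'$ contained five or more of them in total, its weight would be at least $5B^4$. Recalling from Equation~\ref{eqn:m-star} that $M^*_n = 4B^{4} + B^{3} + B^{2}n^{2} + B(6n-4) + 2(n^2-1)$ and that $B=11n^2$, every term after $4B^4$ is of order at most $B^3$, and a direct comparison yields $M^*_n < 5B^4$ with ample slack. Since the weight of $E'$ is at most $M^*_n$ by hypothesis, $E'$ contains at most four such edges; combined with the lower bound of at least one per type, this forces exactly one of each type.

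The only possible obstacle is the numerical verification $M^*_n < 5B^4$, which is routine for $B=11n^2$; the conceptual content of the claim is carried entirely by the unique-edge structure at the $L$, $T$, $R$, $B$ boundaries of the main gadget.
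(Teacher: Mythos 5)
Your proposal is correct and follows essentially the same route as the paper: derive the lower bound of one edge per type from the connectedness property (using that $L,T$ vertices are sources with a single out-edge and $R,B$ vertices are sinks with a single in-edge), then observe that a fifth such edge would push the weight to at least $5B^4$, exceeding $M^*_n$. The only cosmetic difference is that you count the total number of boundary edges directly, whereas the paper phrases the overshoot as a WLOG case (``suppose at least two left source edges''); the two arguments are equivalent.
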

\begin{proof}
Since $E'$ satisfies the connectedness property, it must contain at least one edge of each of the above types.
Without loss of generality, suppose we have at least two left source edges in $E'$. Then the weight of the edge set $E'$ is
greater than or equal to the sum of weights of these two left source edges and the weight of a right sink edge, the weight
of a top source edge, and the weight of a bottom sink edge. Thus if $E'$ contains at least two left source edges, then its
weight is at least $5B^{4}$.
However, from Equation~\ref{eqn:m-star}, we get that
\begin{align*}
M^*_n &= 4B^{4} + B^{3} + B^{2}n^{2} + B(6n-4) + 2(n^2-1)\\
&\leq 4B^4 + n\cdot B^3 + n\cdot B^3 + 6n\cdot B^3 + 2n\cdot B^3\\
&= 4B^4 + 10n\cdot B^3 \\
&< 5B^4,
\end{align*}
since $B=11n^2 > 10n$.
\end{proof}

Therefore, we can set up the following notation:
\begin{itemize}
\item Let $i_{L}\in [n]$ be the unique index such that the left source edge in $E'$ is incident to $\ell_{i_L}$.
\item Let $i_{R}\in [n]$ be the unique index such that the right sink edge in $E'$ is incident to $r_{i_R}$.
\item Let $i_{T}\in [n]$ be the unique index such that the top source edge in $E'$ is incident to $t_{i_T}$.
\item Let $i_{B}\in [n]$ be the unique index such that the bottom sink edge in $E'$ is incident to $b_{i_B}$.
\end{itemize}

%

\begin{claim}
\label{claim:exactly-1-bridge} The edge set $E'$ contains exactly one bridge edge.
\end{claim}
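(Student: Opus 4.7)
The plan is to prove the claim by establishing both a lower bound (at least one bridge edge is forced by the connectedness property) and an upper bound (more than one bridge edge would make the total weight exceed $M^*_n$).

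First I would observe that the bridge edges form the unique edge cut separating column $C_n$ from column $C_{n+1}$. Looking at the edge list of the main gadget in Section~\ref{subsec:description-of-edges-in-main-gadget}: the inrow right edges explicitly skip the pair of columns $(C_n, C_{n+1})$; the interrow down edges are purely vertical and stay within a single column; and the shortcut edges $e_{x,y}$, $f_{x,y}$ only exist for $(x,y) \in S$ with $1 < x,y < n$, so $e_{x,y}$ lives between $C_{y-1}$ and $C_y$ with $y \le n-1$, while $f_{x,y}$ lives between $C_{n+y}$ and $C_{n+y+1}$ with $n+y \ge n+2$. Hence neither type of shortcut crosses the $C_n \to C_{n+1}$ boundary. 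Now by condition~(1) of the connectedness property (Definition~\ref{defn:source-sink-connectivity}), $E'$ contains a directed path from some $\ell_i \in L$ to $R \cup B$. Such a path must enter column $C_0$ via a source internal edge and reach either $C_{2n+1}$ (to reach $R$) or one of $C_{n+1}, \ldots, C_{2n}$ (to reach $B$ via a bottom sink edge). In either case it must cross from $C_n$ to $C_{n+1}$ and therefore use at least one bridge edge.

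Next I would argue that $E'$ cannot contain two or more bridge edges. By Claim~\ref{claim:one-in-out-main}, the one left source, one right sink, one top source, and one bottom sink edges in $E'$ already contribute exactly $4B^4$ to the total weight. If $E'$ contained at least two bridge edges, then, since each bridge edge has weight $B^3$, the weight of $E'$ would be at least $4B^4 + 2B^3$. It then suffices to check that $M^*_n < 4B^4 + 2B^3$, i.e., that
\[
B^3 + B^2 n^2 + B(6n-4) + 2(n^2-1) < 2B^3,
\]
which in turn reduces to $B^2 n^2 + B(6n-4) + 2(n^2 - 1) < B^3$. Using $B = 11n^2$, the left side is bounded by roughly $121 n^6 + 66 n^3 + 2n^2$, while $B^3 = 1331 n^6$, so the inequality holds comfortably for all $n \ge 1$. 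This contradicts the assumption that $E'$ has weight at most $M^*_n$.

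Combining the two directions yields that $E'$ contains exactly one bridge edge. I do not anticipate any subtle obstacle here: the only thing to be careful about is verifying exhaustively that bridge edges truly are the unique cut between columns $C_n$ and $C_{n+1}$, which is why the assumption $1 < x, y < n$ for $(x,y) \in S$ (made at the start of Section~\ref{sec:construction-of-g^*-scss-planar}) is used to rule out shortcut edges straddling this cut. Once that is verified, the counting argument is routine in the same style as Claims~\ref{claim:one-in-out} and~\ref{claim:one-in-out-main}.
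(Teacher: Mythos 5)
Your proof is correct and takes essentially the same approach as the paper: the bridge edges form a cut between the left and right halves of the gadget (so connectedness forces at least one), and having two would push the weight to at least $4B^4 + 2B^3 > M^*_n$. The only cosmetic difference is that you verify the cut property more explicitly (and via condition~(1), from $L$ rather than from $T$), whereas the paper simply asserts that the bridges form a cut between the top-distinguished and the right/bottom-distinguished vertices.
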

\begin{proof}
To satisfy the connectedness property, we need at least one bridge edge, since the bridge edges form a cut
between the top-distinguished vertices and the right-distinguished vertices as well as between the top-distinguished vertices
and the bottom-distinguished vertices. Suppose that the edge set $E'$ contains at least two bridge edges. This contributes a weight of
$2B^3$. We already have a contribution on $4B^4$ to weight of $E'$ from Claim~\ref{claim:one-in-out-main}. Therefore, the weight of $E'$ is at least $4B^4 + 2B^3$.
However, from Equation~\ref{eqn:m-star}, we get that
\begin{align*}
M^*_n &= 4B^{4} + B^{3} + B^{2}n^{2} + B(6n-4) + 2(n^2-1)\\
&\leq 4B^4 + B^3 + B^{2}n^{2} + 6n\cdot B + 2n^2\\
&\leq 4B^4 + B^3 + B^{2}n^{2} + 6n^{2}B^2 + 2n^{2}B^2 \\
&= 4B^4 + B^3 + 9B^{2}n^{2}\\
&< 4B^8 + 2B^3,
\end{align*}
since $B=11n^2 > 9n^2$.
\end{proof}

Let the index of the unique bridge edge in $E'$ (guaranteed by Claim~\ref{claim:exactly-1-bridge}) be $\gamma\in [n^2]$. The
connectedness property implies that we need to select at least one source internal edge incident to $\ell'_{i_L}$ and
at least one sink internal edge incident to $r'_{i_R}$. Let the index of the source internal edge incident to $\ell'_{i_L}$ be
$j_L$ and the index of the sink internal edge incident to $r'_{i_R}$ be $j_R$.

\begin{claim}
$i_L = i_R$ and $j_L = j_R=\gamma$. \label{claim:il=jl-and-ir=jr}
\end{claim}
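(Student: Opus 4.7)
The plan is to combine the weight bound from Equation~\eqref{eqn:m-star} with the reachability forced by the connectedness property, exploiting the rigid right-and-down structure of the main gadget. By Claims~\ref{claim:one-in-out-main} and~\ref{claim:exactly-1-bridge}, the four distinguished boundary edges and the unique bridge already contribute $4B^4 + B^3$ to the weight of $E'$, leaving a remaining budget of $B^2 n^2 + B(6n-4) + 2(n^2-1)$, which is strictly less than $B^2(n^2+1)$ for $B=11n^2$.

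First I would extract an inequality on $j_L$ and $j_R$ from this budget. The source internal edge at index $j_L$ contributes $B^2(n^2-j_L)$ and the sink internal edge at index $j_R$ contributes $B^2 j_R$, so their combined weight is $B^2 n^2 + B^2(j_R-j_L)$. Since the leftover slack is below $B^2$, this forces $j_R\leq j_L$. The same accounting rules out additional source or sink internal edges of positive weight, so $E'$ essentially contains only one useful source and one useful sink internal edge; any zero-weight source internal edge at index $n^2$ would be irrelevant to the identification in the next step.

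Next I would derive the matching inequality $j_L\leq \gamma\leq j_R$ from reachability. By connectedness condition~1 together with Claim~\ref{claim:one-in-out-main}, $\ell_{i_L}$ must reach $r_{i_R}$ or $b_{i_B}$ in $E'$, so there is a walk starting with the left source edge, continuing via a source internal edge arriving at $v_{j_L}^0$, and eventually crossing into the right half of the grid. By Claim~\ref{claim:exactly-1-bridge}, the only crossing between columns $C_n$ and $C_{n+1}$ is the bridge $(v_\gamma^n, v_\gamma^{n+1})$. Since every non-distinguished edge inside the gadget (inrow right, interrow down, and the shortcut pairs $e_{x,y}, f_{x,y}$, which effectively move one step right) travels weakly right and weakly down, reaching $v_\gamma^n$ from $v_{j_L}^0$ requires $j_L\leq \gamma$. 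Symmetrically, the walk into $r_{i_R}$ guaranteed by condition~3 must travel from $v_\gamma^{n+1}$ to $v_{j_R}^{2n+1}$, forcing $\gamma\leq j_R$.

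Combining the two steps yields $\gamma\leq j_R\leq j_L\leq \gamma$, hence $j_L=j_R=\gamma$. The equality $i_L=i_R$ then follows immediately from the construction: each index $j\in[n^2]$ is assigned to $\ell'_i$ and $r'_i$ for the unique $i=\lceil j/n\rceil$, so $i_L=\lceil \gamma/n\rceil=i_R$. The main obstacle I anticipate is the bookkeeping in the first step, where I must carefully rule out pathological solutions containing multiple source or sink internal edges (the zero-weight source internal edge at index $n^2$ being the delicate case); the reachability argument resolves this by pinning the \emph{active} index down to $\gamma$ in every scenario.
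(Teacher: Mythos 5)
Your proof is correct and follows essentially the same strategy as the paper's: derive $j_L \leq \gamma \leq j_R$ from the monotone right-and-down structure forced by the unique bridge, derive $j_R \leq j_L$ from the weight budget of $M^*_n$ (since the source/sink internal pair contributes $B^2(n^2 + j_R - j_L)$ and the remaining slack is strictly below $B^2$), and combine. The only presentational difference is that you extract the budget inequality first and reachability second, while the paper does the reverse; logically they are the same two ingredients. Your explicit note about the zero-weight source internal edge at index $n^2$ is a careful touch the paper elides, but it is indeed harmless since the argument only uses lower bounds on the internal-edge contribution and the reachability walk pins down the active index.
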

\begin{proof}
By the connectedness property, there is a path from $\ell_{i_L}$ to some vertex in $r_{i_R}\cup b_{i_B}$. The paths starts with $\ell_{i_L}\rightarrow \ell'_{i_L}\rightarrow v_{j_L}^{1}$ and has to use the $\gamma^{th}$ bridge edge. By the construction of
the main gadget (all edges are either downwards or towards the right), this path can never reach any row $R_{\ell}$ for $\ell<
j_L$. Therefore, $\gamma\geq j_L$. By similar logic, we get $j_R\geq \gamma$. Therefore $j_{R}\geq j_{L}$.

If $j_R > j_L$, then the weight of the source internal edge and the sink internal edge is $B^{5}(n^2 - j_L + j_R) \geq
B^{5}(n^2 + 1)$. We already have a contribution of $4B^{4}+B^3$ to the weight of $E'$ from
Claim~\ref{claim:one-in-out-main} and Claim~\ref{claim:exactly-1-bridge}. Therefore, the weight of $E'$ is at least $4B^4 + B^3 + B^{2}(n^{2}+1)$. However, from Equation~\ref{eqn:m-star}, we get that
\begin{align*}
M^*_n &= 4B^{4} + B^{3} + B^{2}n^{2} + B(6n-4) + 2(n^2-1)\\
&\leq 4B^4 + B^3 + B^{2}n^2 + 6n\cdot B + 2n^2\\
&\leq 4B^4 + B^3 + B^{2}n^2 + 6n^{2}\cdot B + 2n^{2}\cdot B \\
&= 4B^4 + B^3 + B^{2}n^{2} + 8n^{2}\cdot B\\
&< 4B^4 + B^3 + B^{2}(n^{2}+1),
\end{align*}
since $B=11n^2 > 8n^2$.
Hence $j_R = j_L=\gamma$. Observing that $i_L = \lceil \frac{j_L}{n} \rceil$ and $i_R = \lceil \frac{j_R}{n} \rceil$, we
obtain $i_L=i_R$.
\end{proof}

Let $i_L=i_R = \alpha$ and $\gamma = n(\alpha-1)+\beta$. We will now show that $E'$ represents the pair $(\alpha,\beta)$. By Definition~\ref{defn:represents-main}, we need to prove the following four conditions:
\begin{enumerate}
\item The only left source edge in $E'$ is the one incident to $\ell_{\alpha}$ and the only right sink edge in $E'$ is the
    one incident to $r_{\alpha}$.

\item The pair $(\alpha,\beta)$ is in $S$.

\item The only top source edge in $E'$ is the one incident to $t_{\beta}$ and the only bottom sink edge in $E'$ is the one
    incident to $b_{\beta}$.

\item $E'$ has an $\ell_{\alpha}\leadsto r_{\alpha}$ path and an $t_{\beta}\leadsto b_{\beta}$ path.
\end{enumerate}

The first statement above follows from Claim~\ref{claim:one-in-out-main} and Claim~\ref{claim:il=jl-and-ir=jr}.
We now continue with the proof of the other three statements mentioned above:

\begin{claim}
\label{claim:all-from-row-gamma} $E'$ contains exactly $2n-2$ inrow right edges, all of them from row $R_{\gamma}$. As a
corollary, we get that there are two shortcuts incident to row $R_{\gamma}$, i.e., $(\alpha,\beta)\in S$ and also that $E'$ uses both these shortcuts.
\end{claim}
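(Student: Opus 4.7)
By Claims~\ref{claim:one-in-out-main}, \ref{claim:exactly-1-bridge}, and~\ref{claim:il=jl-and-ir=jr}, the four terminal edges, the $\gamma$-th bridge edge, and the $\gamma$-th source and sink internal edges already contribute a fixed weight of $4B^{4}+B^{3}+B^{2}n^{2}$ to $E'$. The hypothesis $W(E')\le M^{*}_n$ therefore leaves a budget of $B(6n-4)+2(n^{2}-1)$ for the remaining horizontal (inrow right and shortcut) and vertical (interrow down edges, or their subdivided halves) edges of $E'$.

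Since the main gadget is a DAG in which every edge strictly increases the row or column index, any path in $E'$ from $v_{\gamma}^{0}$ (where the source internal edge lands) to $v_{\gamma}^{n}$ (the tail of the unique bridge edge) must remain in row $R_{\gamma}$ and perform exactly $n$ column transitions; the analogous path from $v_{\gamma}^{n+1}$ to $v_{\gamma}^{2n+1}$ adds another $n$ transitions in $R_{\gamma}$. Each such transition is realized either by an inrow right edge of weight $3B$, or by a shortcut edge together with its forced half-interrow continuation of total weight $B+1$. The only shortcuts in row $R_{\gamma}$ are $e_{\alpha,\beta}$ (on the left half) and $f_{\alpha,\beta}$ (on the right half), and each exists only when $(\alpha,\beta)\in S$.

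Denote by $s\in\{0,1,2\}$ the number of these two shortcuts used in $E'$. Summing over the $2n$ forced transitions in row $R_\gamma$, the horizontal weight of $E'$ is at least $(2n-s)\cdot 3B + s\cdot B = 6Bn-2Bs$. For the vertical weight, connectedness requires $E'$ to contain a descent from row $1$ (via the top source edge) down to at least row $\gamma$, so that $t_{i_{T}}$ can reach $R\cup B$ (which lives only in rows $\gamma$ and $n^{2}$), plus a descent from row $\gamma$ down to row $n^{2}$, needed to reach $v_{n^{2}}^{n+i_{B}}$ before the bottom sink edge. These two descent ranges are row-disjoint, so together they contain at least $(\gamma-1)+(n^{2}-\gamma)=n^{2}-1$ pairwise distinct row crossings; each such crossing consumes vertical weight $2$ (either an unsubdivided interrow edge, or both halves of a subdivided one), so the vertical weight of $E'$ is at least $2(n^{2}-1)$.

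Combining, $W(E')\ge 4B^{4}+B^{3}+B^{2}n^{2}+(6Bn-2Bs)+2(n^{2}-1)=M^{*}_n+(4-2s)B$. The hypothesis $W(E')\le M^{*}_n$ forces $s\ge 2$, hence $s=2$: both $e_{\alpha,\beta}$ and $f_{\alpha,\beta}$ lie in $E'$, and in particular $(\alpha,\beta)\in S$. Moreover, the horizontal estimate must be tight, which leaves no room for any horizontal edge outside these two shortcuts together with the $2(n-1)=2n-2$ inrow right edges of row $R_{\gamma}$ filling in the remaining transitions. This establishes the claim. The main subtlety is the vertical lower bound: one must argue that the descents forced by the two connectedness requirements occupy disjoint row ranges (rows $1$ through $\gamma$, and $\gamma$ through $n^{2}$), so that their weight contributions add rather than overlap, independently of which of $\ell_{\alpha}$ or $t_{i_{T}}$ supplies the path to $r_{\alpha}$ and which supplies the path to $b_{i_{B}}$.
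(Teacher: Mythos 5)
Your proof is correct, and it establishes the claim by a route that is genuinely different from the paper's, though both ultimately rest on the weight hierarchy of the gadget. The paper's proof of this claim never invokes the vertical (interrow) weight: after showing via the forced subpaths $P_1$, $P_2$ in row $R_\gamma$ that at least $2n-2$ inrow right edges are needed (with equality only if both shortcuts incident to $R_\gamma$ are used), it simply observes that $2n-1$ inrow right edges already cost $3B(2n-1) > B(6n-4) + 2(n^2-1)$, exceeding the remaining budget since $B = 11n^2 > 2n^2$; the vertical lower bound of $2(n^2-1)$ is only established later, in the final lemma, to pin down the exact weight $M^*_n$. You instead front-load the vertical lower bound and parametrize by the number $s$ of shortcuts used, obtaining $W(E') \geq M^*_n + (4-2s)B$ and forcing $s=2$. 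This buys a slightly more self-contained account of why the corollary follows, but it is more work than strictly necessary, since the budget gap is so large that the simpler paper argument already suffices without touching the vertical weight.

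Two small imprecisions in your write-up worth flagging, neither of which invalidates the argument. First, the assertion that ``every edge strictly increases the row or column index'' is false for the $e$-shortcuts: $e_{\alpha,\beta}$ goes from $v_{\gamma}^{\beta-1}$ \emph{up} (by half a row) to $g_{\alpha}^{\beta}$. The correct (and sufficient) observation is that $g_{\alpha}^{\beta}$ has a unique outgoing edge, back down to $v_{\gamma}^{\beta}$, so the shortcut detour achieves no net upward progress; the paper phrases this as ``we cannot reach an upper row from a lower row.'' Second, you take $s\in\{0,1,2\}$ to count only the two shortcuts incident to row $R_\gamma$, implicitly assuming $E'$ uses no other shortcut edges. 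Any extra shortcut (from another row) only adds at least $B$ to the weight without reducing the $2n$ forced transitions in $R_\gamma$, so the bound only gets worse; a one-line remark to this effect would close that case explicitly.
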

\begin{proof}
Note that by the construction of the main gadget, there can be at most two shortcut edges incident on the vertices of row $R_{\gamma}$.

Claim~\ref{claim:il=jl-and-ir=jr} implies $j_L = j_R=\gamma$. Hence the $\ell_{\alpha}\leadsto r_{\alpha}\cup b_{i_B}$ path in
$E'$ contains a $v_{\gamma}^{0}\leadsto v_{\gamma}^{n}$ subpath $P_1$. By the construction of the main gadget, we cannot
reach an upper row from a lower row. Hence this subpath $P_1$ must be the path $v_{\gamma}^{0}\rightarrow
v_{\gamma}^{2}\rightarrow \ldots \rightarrow v_{\gamma}^{n}$. This path $P_1$ can at most use the unique shortcut edge
incident to row $R_{\gamma}$ and column $C_{\beta}$ to replace an inrow right edge. Hence $P_1$ uses at least $n-1$ inrow
right edges, with equality only if $R_{\gamma}$ has a shortcut incident to it.

Similarly, the $\ell_{\alpha}\cup t_{i_T}\leadsto r_{\alpha}$ path in $E'$ contains a $v_{\gamma}^{n+1}\leadsto v_{\gamma}^{2n+1}$
subpath $P_2$. By the construction of the main gadget, we cannot reach an upper row from a lower row. Hence this subpath
$P_2$ must be the path $v_{\gamma}^{n+1}\rightarrow v_{\gamma}^{n+2}\rightarrow \ldots \rightarrow v_{\gamma}^{2n+1}$. This
path $P_2$ can at most use the unique shortcut edge incident to row $R_{\gamma}$ and column $C_{n+\beta}$ to replace an inrow
right edge. Hence $P_2$ uses at least $n-1$ inrow right edges, with equality only if $R_{\gamma}$ has a shortcut incident to it.

Clearly, the sets of inrow edges used by $P_1$ and $P_2$ are disjoint, and hence $E'$ contains at least $2n-2$ inrow right
edges from row $R_{\gamma}$. Suppose $E'$ contains at least $2n-1$ inrow right edges. Then it incurs a weight of $3B\cdot
(2n-1)$. From Claim~\ref{claim:one-in-out-main}, Claim~\ref{claim:exactly-1-bridge} and Claim~\ref{claim:il=jl-and-ir=jr} we
already have a contribution of $4B^4 + B^3 + B^{2}n^2$. Therefore the weight of $E'$ is at least $4B^4 + B^3 + B^{2}n^2+
3B\cdot (2n-1)$.

However, from Equation~\ref{eqn:m-star}, we get that
\begin{align*}
M^*_n &= 4B^{4} + B^{3} + B^{2}n^{2} + B(6n-4) + 2(n^2-1)\\
&\leq 4B^4 + B^3 + B^{2}n^2 + B(6n-4) + 2n^2\\
&< 4B^4 + B^3 + B^{2}n^2 + 3B\cdot (2n-1),
\end{align*}
since $B=11n^2 > 2n^2$.
Therefore, $E'$ can only contain at most $2n-2$ inrow right edges. Hence there must be two shortcut edges incident to row $R_{\gamma}$, which are both used by $E'$. Since $\gamma=n(\alpha-1)+\beta$, the fact that row $R_{\gamma}$ has shortcut edges incident to it implies $(\alpha,\beta)\in S$.
\end{proof}

To prove the third claim it is sufficient to show that $i_T = i_B=\beta$, since Claim~\ref{claim:one-in-out-main} implies $E'$ contains exactly one top source edge and exactly one bottom sink edge. Note that the remaining budget left for the weight of $E'$ is at most $2(n^2-1)$.

\begin{claim}
$i_T  = i_B=\beta$ \label{claim:i_T=beta=i_B}
\end{claim}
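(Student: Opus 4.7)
The plan is to enforce $i_T = i_B = \beta$ by a weight accounting argument: first catalogue the edges of $E'$ that are forced by Claims~\ref{claim:one-in-out-main}--\ref{claim:all-from-row-gamma}, and then show that only the case $i_T = i_B = \beta$ can fit the residual weight budget $M^*_n$ minus the committed weight.

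I begin by summing the committed weight. From Claims~\ref{claim:one-in-out-main}, \ref{claim:exactly-1-bridge}, \ref{claim:il=jl-and-ir=jr}, and \ref{claim:all-from-row-gamma}, the set $E'$ must contain: the four terminal source/sink edges (total $4B^4$); the unique bridge at row $\gamma$ (weight $B^3$); the source internal edge of weight $B^2(n^2-\gamma)$ and the sink internal edge of weight $B^2\gamma$ (together $B^2 n^2$); exactly $2n-2$ inrow right edges, all lying in row $R_\gamma$ (total $(6n-6)B$); and both shortcuts $e_{\alpha,\beta}, f_{\alpha,\beta}$ (total $2B$). I will then argue that the two weight-$1$ half-edges $(g_\alpha^\beta, v_\gamma^\beta)$ and $(v_\gamma^{n+\beta}, h_\alpha^\beta)$ must also lie in $E'$: the subpath of the left source path through row $R_\gamma$ makes $2n+1$ column advances from $v_\gamma^0$ to $v_\gamma^{2n+1}$, yet the $2n-2$ inrow right edges combined with the bridge supply only $2n-1$ advances, so both shortcuts must be traversed to provide $v_\gamma^{\beta-1}\leadsto v_\gamma^\beta$ and $v_\gamma^{n+\beta}\leadsto v_\gamma^{n+\beta+1}$. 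Since $g_\alpha^\beta$ has $(g_\alpha^\beta,v_\gamma^\beta)$ as its unique outgoing gadget edge and $h_\alpha^\beta$ has $(v_\gamma^{n+\beta},h_\alpha^\beta)$ as its unique incoming gadget edge, both half-edges are forced into $E'$. The total committed weight is therefore $4B^4 + B^3 + B^2 n^2 + (6n-4)B + 2$, leaving a residual budget of exactly $2(n^2-1) - 2 = 2n^2 - 4$.

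Next, I exploit the absence of inrow right edges outside row $R_\gamma$ (Claim~\ref{claim:all-from-row-gamma}) together with the forward orientation (down or right) of every main-gadget edge to conclude that, starting from $v_1^{i_T}$, the only feasible route in $E'$ into row $R_\gamma$ is straight down column $C_{i_T}$; symmetrically, reaching $v_{n^2}^{n+i_B}$ forces a straight descent down column $C_{n+i_B}$ from row $R_\gamma$. I then compute the additional interrow-down weight contributed by these two descents in each of the four possible cases: if $i_T = \beta$ the top descent costs $2(\gamma-2)+1 = 2\gamma-3$ (it uses $\gamma-2$ full interrow down edges plus the new half-edge $(v_{\gamma-1}^\beta, g_\alpha^\beta)$, while the second half $(g_\alpha^\beta, v_\gamma^\beta)$ is already committed), whereas if $i_T \neq \beta$ it costs $2(\gamma-1)$; analogously $i_B = \beta$ costs $2(n^2-\gamma) - 1$ and $i_B \neq \beta$ costs $2(n^2-\gamma)$.

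Summing these two contributions, the total additional cost is exactly $2n^2-4$ when $i_T = i_B = \beta$, equals $2n^2-3$ when exactly one of $i_T, i_B$ equals $\beta$, and equals $2n^2-2$ when neither does. Only the first case fits the residual budget of $2n^2-4$; in every other case the total weight of $E'$ strictly exceeds $M^*_n$, contradicting the hypothesis that $E'$ has weight at most $M^*_n$. Hence $i_T = i_B = \beta$, as required. The main obstacle will be the careful bookkeeping: one must separate the committed half-edges from those newly required by the top-source and bottom-sink descents, so as not to double-count the weight-$1$ segments when a descent passes through the subdivision vertex $g_\alpha^\beta$ or $h_\alpha^\beta$ and thereby ``saves'' one unit by reusing an already-committed half-edge.
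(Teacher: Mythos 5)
Your proof is correct and takes essentially the same weight-accounting route as the paper: Claims~\ref{claim:one-in-out-main}--\ref{claim:all-from-row-gamma} fix the bulk of the budget, the forward orientation of main-gadget edges forces straight descents in columns $C_{i_T}$ and $C_{n+i_B}$, and the residual interrow-down cost rules out every case except $i_T=i_B=\beta$. The only cosmetic difference is that you fold the two forced weight-$1$ half-edges into the committed weight upfront (residual $2n^2-4$) and then run a clean four-case analysis, whereas the paper keeps the residual at $2(n^2-1)$ and adds the half-edge costs on the fly while arguing ``WLOG $i_T\neq\beta$''; the two accountings are equivalent, though you should also note (as the paper does) that $B>2n^2-4$ is what rules out any further shortcut edge being used on the descents.
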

\begin{proof}
Recall that the only bridge edge used is the one on row $R_{\gamma}$. Moreover, the bridge edges form a cut between $T$ and $R\cup B$. Hence, to satisfy the connectedness property it follows that the $t_{i_T}\leadsto r_{\alpha}\cup b_{i_B}$ path in $E'$
contains a $v_{1}^{i_T}\leadsto v_{\gamma}^{n}$ subpath $P_3$. By Claim~\ref{claim:all-from-row-gamma},
all inrow right edges are only from row $R_{\gamma}$. As the only remaining budget is $2(n^2-1)$, we cannot use any other
shortcuts or inrow right edges since $B=11n^2 > 2(n^2 -1)$. Therefore, $P_3$ contains another $v_{1}^{i_T}\rightarrow v_{\gamma}^{i_T}$ subpath $P'_3$.
If $i_T\neq \beta$, then $P'_3$ incurs weight $2(\gamma-1)$. Note that we also pay a weight of 1 to use half of the interrow
edge when we use the shortcut edge (which we have to use due to Claim~\ref{claim:all-from-row-gamma}) which is incident to row $R_{\gamma}$ and column $C_{\beta}$.

Similarly, the $\ell_{\alpha}\cup t_{i_T}\leadsto b_{i_B}$ path in $E'$ contains a $v_{\gamma}^{n+1}\leadsto v_{n^2}^{n+i_B}$
subpath $P'_4$. By Claim~\ref{claim:all-from-row-gamma}, all inrow horizontal edges are only from row
$R_{\gamma}$. As the only remaining budget is $2(n^2-1)$, we cannot use any other shortcuts or inrow right edges. Therefore,
$P_4$ contains another $v_{\gamma}^{n+i_B}\leadsto v_{n^2}^{n+i_B}$ subpath $P'_4$. If $i_B\neq \beta$, then $P'_4$ incurs
weight $2(n^2 - \gamma)$. Note that we also pay a weight of 1 to use (half of) the interrow edge when we use the shortcut edge (which we have to use due to Claim~\ref{claim:all-from-row-gamma})
which is incident to row $R_{\gamma}$ and column $C_{n+\beta}$.

Suppose without loss of generality that $i_T\neq \beta$. Then $P'_3$ incurs a weight of $2(\gamma-1)$, and the half interrow edge
used incurs an additional weight of 1. In addition, path $P'_4$ incurs a weight of $2(n^2-\gamma)$. Hence the total weight incurred is
$2(\gamma-1)+ 1 + 2(n^2-\gamma) = 2(n^2 - 1)+ 1$ which is greater than our allowed budget. Hence $i_T = \beta$. It can be
shown similarly that $i_B = \beta$.
\end{proof}

\begin{claim}
$E'$ has an $\ell_{\alpha}\leadsto r_{\alpha}$ path and an $t_{\beta}\leadsto b_{\beta}$ path.
\label{claim:hori-vert-connectivity}
\end{claim}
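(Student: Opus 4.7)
The plan is to assemble the two required paths directly from the structural information already established by Claims~\ref{claim:one-in-out-main}--\ref{claim:i_T=beta=i_B}. For the $\ell_\alpha \leadsto r_\alpha$ path, Claim~\ref{claim:one-in-out-main} together with Claim~\ref{claim:il=jl-and-ir=jr} forces the four edges $(\ell_\alpha, \ell'_\alpha)$, $(\ell'_\alpha, v_\gamma^0)$, $(v_\gamma^{2n+1}, r'_\alpha)$, and $(r'_\alpha, r_\alpha)$ into $E'$, and Claim~\ref{claim:exactly-1-bridge} adds the bridge edge $(v_\gamma^n, v_\gamma^{n+1})$. Hence it suffices to exhibit directed subpaths within row $R_\gamma$ realizing $v_\gamma^0 \leadsto v_\gamma^n$ and $v_\gamma^{n+1} \leadsto v_\gamma^{2n+1}$, which then splice together with the above edges to give the required path.

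These row-$R_\gamma$ subpaths are furnished by Claim~\ref{claim:all-from-row-gamma}: all $2n-2$ inrow right edges of $E'$ lie in $R_\gamma$ and both shortcuts $e_{\alpha,\beta}$ and $f_{\alpha,\beta}$ lie in $E'$. The shortcut edge $(v_\gamma^{\beta-1}, g_\alpha^\beta)$ together with the half interrow edge $(g_\alpha^\beta, v_\gamma^\beta)$ yields a two-edge detour substituting for the single missing inrow right edge $(v_\gamma^{\beta-1}, v_\gamma^\beta)$; symmetrically, $(v_\gamma^{n+\beta}, h_\alpha^\beta)$ followed by $(h_\alpha^\beta, v_\gamma^{n+\beta+1})$ covers the gap between columns $C_{n+\beta}$ and $C_{n+\beta+1}$. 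Interleaving these two detours with the $2n-2$ genuine inrow right edges produces both of the required horizontal segments.

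For the $t_\beta \leadsto b_\beta$ path, Claim~\ref{claim:i_T=beta=i_B} places $(t_\beta, v_1^\beta)$ and $(v_{n^2}^{n+\beta}, b_\beta)$ into $E'$. The analysis in the proof of Claim~\ref{claim:i_T=beta=i_B} has already shown, via the tight residual budget of $2(n^2-1)$, that the only affordable subwalks from $v_1^\beta$ to $R_\gamma$ and from $R_\gamma$ to $v_{n^2}^{n+\beta}$ descend straight down columns $C_\beta$ and $C_{n+\beta}$ via interrow down edges (plus the half-edges adjacent to $g_\alpha^\beta$ and $h_\alpha^\beta$); any sideways motion is ruled out because Claim~\ref{claim:all-from-row-gamma} forbids further inrow right edges or additional shortcuts outside $R_\gamma$. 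Splicing these two vertical column traversals with the horizontal piece $v_\gamma^\beta \leadsto v_\gamma^n$ from the first path, the bridge edge, and the horizontal piece $v_\gamma^{n+1} \leadsto v_\gamma^{n+\beta}$, I obtain the required $t_\beta \leadsto b_\beta$ path.

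The main obstacle is essentially bookkeeping rather than a new combinatorial insight: I must verify that the edges used in the two paths above are in fact all present in $E'$ and that no additional edges are needed. Since the weight contributions itemized in Section~\ref{subsection:main-first-statement} for the solution $E_{\alpha,\beta}$ already sum to $M^*_n$ and coincide exactly with the edges identified by Claims~\ref{claim:one-in-out-main}--\ref{claim:i_T=beta=i_B}, a routine accounting will simultaneously confirm that $E'$ has weight exactly $M^*_n$ and represents $(\alpha,\beta)\in S$, completing the proof of the second statement of Lemma~\ref{lem:main-gadget}.
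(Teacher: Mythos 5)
Your proposal is correct and follows essentially the same route as the paper's proof: both splice together the edges pinned down by Claims~\ref{claim:one-in-out-main}, \ref{claim:exactly-1-bridge}, and \ref{claim:il=jl-and-ir=jr} with the row-$R_\gamma$ subpaths $P_1,P_2$ from Claim~\ref{claim:all-from-row-gamma} for the $\ell_\alpha\leadsto r_\alpha$ path, and use the subpaths $P_3,P_4$ from the proof of Claim~\ref{claim:i_T=beta=i_B} (plus the bridge) for the $t_\beta\leadsto b_\beta$ path. You unpack the shortcut detours in $R_\gamma$ more explicitly than the paper (which simply cites $P_1,P_2$), but this is a cosmetic difference, not a different argument.
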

\begin{proof}
First we show that $E'$ has an $\ell_{\alpha}\leadsto r_{\alpha}$ path by taking the following edges (in order)
\begin{itemize}
  \item The path $\ell_\alpha \rightarrow \ell'_{\alpha}\rightarrow v_{\gamma}^{0}$ which exists since $i_L = \alpha$ and $j_L = \gamma$
  \item The $v_{\gamma}^{0}\leadsto v_{\gamma}^{n}$ path $P_1$ guaranteed in proof of Claim~\ref{claim:all-from-row-gamma}
  \item The bridge edge $v_{\gamma}^{n}\rightarrow v_{\gamma}^{n+1}$ guaranteed by Claim~\ref{claim:exactly-1-bridge}
  \item The $v_{\gamma}^{n+1}\leadsto v_{\gamma}^{2n+1}$ path $P_2$ guaranteed in proof of Claim~\ref{claim:all-from-row-gamma}
  \item The path $r_\alpha \leftarrow r'_{\alpha}\leftarrow v_{\gamma}^{2n+1}$ which exists since $i_R = \alpha$ and $j_R = \gamma$
\end{itemize}

Next we show that $E'$ has an $t_{\beta}\leadsto b_{\beta}$ path by taking the following edges (in order)
\begin{itemize}
  \item The edge $t_\beta \rightarrow v_{1}^{\beta}$ which exists since $i_T = \beta$
  \item The $v_{1}^{\beta}\leadsto v_{\gamma}^{n}$ path $P_3$ guaranteed in proof of Claim~\ref{claim:i_T=beta=i_B}
  \item The bridge edge $v_{\gamma}^{n}\rightarrow v_{\gamma}^{n+1}$ guaranteed by Claim~\ref{claim:exactly-1-bridge}
  \item The $v_{\gamma}^{n+1}\leadsto v_{n^2}^{n+\beta}$ path $P_4$ guaranteed in proof of Claim~\ref{claim:i_T=beta=i_B}
  \item The edge $b_\beta \leftarrow v_{n^2}^{n+\beta}$ which exists since $i_B = \beta$
\end{itemize}

\end{proof}

Claim~\ref{claim:one-in-out-main}, Claim~\ref{claim:il=jl-and-ir=jr}, Claim~\ref{claim:all-from-row-gamma}, Claim~\ref{claim:i_T=beta=i_B} and Claim~\ref{claim:hori-vert-connectivity} together imply that $E'$ represents $(\alpha,\beta)\in S$ (see Definition~\ref{defn:represents-main}). We now show that weight of $E'$ is exactly $M^*_n$.

\begin{lemma}
Weight of $E'$ is exactly $M^*_n$
\end{lemma}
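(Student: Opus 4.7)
The plan is to add up the weight contributions from the preceding claims and from the interrow down edges forced to lie in $E'$. By Claim~\ref{claim:one-in-out-main}, the four source/sink edges contribute $4B^4$; by Claim~\ref{claim:exactly-1-bridge}, the unique bridge edge contributes $B^3$; by Claim~\ref{claim:il=jl-and-ir=jr}, the two internal edges contribute $B^2(n^2-\gamma)+B^2\gamma = B^2 n^2$; and by Claim~\ref{claim:all-from-row-gamma}, the $2n-2$ inrow right edges together with the two shortcut edges $e_{\alpha,\beta}$ and $f_{\alpha,\beta}$ contribute $3B(2n-2)+2B = (6n-4)B$. Summing gives exactly $M^*_n - 2(n^2-1)$, so the task reduces to showing that the interrow down edges in $E'$ contribute precisely $2(n^2-1)$.

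For this I would use that Claim~\ref{claim:all-from-row-gamma} restricts every inrow right edge to row $R_\gamma$ and Claim~\ref{claim:exactly-1-bridge} gives only the $\gamma$-th bridge edge. Combined with $i_T=i_B=\beta$ from Claim~\ref{claim:i_T=beta=i_B} and the $t_\beta \leadsto b_\beta$ path guaranteed by Claim~\ref{claim:hori-vert-connectivity}, this forces the path to leave $v_1^\beta$, travel straight down column $C_\beta$ to $v_\gamma^\beta$, cross the bridge, and descend column $C_{n+\beta}$ from $v_\gamma^{n+\beta}$ to $v_{n^2}^{n+\beta}$; there is no alternative, because the only way to switch columns is via an inrow right edge (available only in $R_\gamma$) or via the unique bridge edge. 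Hence $E'$ must contain all $\gamma-1$ interrow edges of column $C_\beta$ between rows $R_1$ and $R_\gamma$ and all $n^2-\gamma$ interrow edges of column $C_{n+\beta}$ between rows $R_\gamma$ and $R_{n^2}$.

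The delicate step is handling the two subdivided interrow edges. At column $C_\beta$, the interrow edge from $R_{\gamma-1}$ to $R_\gamma$ is split by $g_\alpha^\beta$ into a top half $(v_{\gamma-1}^\beta, g_\alpha^\beta)$ and a bottom half $(g_\alpha^\beta, v_\gamma^\beta)$, each of weight $1$: the top half is needed for the descent to continue down, while the bottom half is required both by the descent and by the shortcut-detour $v_\gamma^{\beta-1} \to g_\alpha^\beta \to v_\gamma^\beta$ that replaces the missing inrow right edge $(v_\gamma^{\beta-1},v_\gamma^\beta)$ of row $R_\gamma$, so both halves lie in $E'$. The symmetric statement holds at $h_\alpha^\beta$ in column $C_{n+\beta}$. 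Therefore the total interrow contribution to the weight of $E'$ is
\[
2(\gamma-2) + 2 + 2(n^2-\gamma-1) + 2 \;=\; 2(n^2-1).
\]

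Combining everything yields a lower bound of $4B^4 + B^3 + B^2 n^2 + (6n-4)B + 2(n^2-1) = M^*_n$ on the weight of $E'$, and the hypothesis forces equality; strict positivity of every remaining edge weight rules out additional edges pushing the total beyond $M^*_n$. The main obstacle will be the bookkeeping in the previous paragraph, namely justifying cleanly that both halves of each subdivided interrow edge must appear in $E'$ (using that no inrow right edge outside $R_\gamma$ is available to reroute either the descent or the horizontal traversal).
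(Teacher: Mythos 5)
Your proposal is correct and follows essentially the same approach as the paper: sum the weight contributions forced by Claims~\ref{claim:one-in-out-main}, \ref{claim:exactly-1-bridge}, \ref{claim:il=jl-and-ir=jr}, and \ref{claim:all-from-row-gamma} to get $4B^4 + B^3 + B^2 n^2 + (6n-4)B$, then argue (as in the proof of Claim~\ref{claim:i_T=beta=i_B}) that the forced descent down columns $C_\beta$ and $C_{n+\beta}$ contributes at least $2(n^2-1)$ in interrow-down weight, and close by invoking the upper bound $M^*_n$. Your explicit bookkeeping of the two subdivided interrow edges (two weight-1 halves each, showing both halves are forced by the descent and the shortcut detour) is a somewhat more careful spelling-out of what the paper treats implicitly by counting each interrow segment at weight $2$, but the underlying argument is the same.
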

\begin{proof}
Claim~\ref{claim:one-in-out-main} contributes a weight of $4B^{4}$ to $E'$. Claim~\ref{claim:exactly-1-bridge} contributes a weight of $B^3$ to $E'$. From the proof of Claim~\ref{claim:il=jl-and-ir=jr}, we can see that $E'$ incurs weight $B^{2}n^2$ from the source
    internal edge and sink internal edge. Claim~\ref{claim:all-from-row-gamma} implies that $E'$ contains exactly $2n-2$ inrow right
    edges from row $R_{\gamma}$ and also both shortcuts incident to row $R_{\gamma}$. This incurs a cost of $3B(2n-2)+2B = B(6n-4)$. By arguments similar to that in the proof of Claim~\ref{claim:i_T=beta=i_B}, $E'$ contains at least $(\gamma-1)$ interrow edges from column $C_{\beta}$ and at least $(n^2 -\gamma)$ interrow edges from column $C_{n+\beta}$. Therefore, we have weight of $E'\geq 4B^4 + B^3 + B^{2}n^2 + B\cdot(6n-4)+ 2(\gamma-1) + 2(n^2 -\gamma)= 4B^4 + B^3 + B^{2}n^2 + B\cdot(6n-4)+ 2(n^2 -1) = M^*_n$. Hence the weight of $E'$ is exactly $M^*_n$.
\end{proof}
This completes the proof of the second statement of
Lemma~\ref{lem:main-gadget}.

\section{W[1]-hardness for SCSS in general graphs}

The main goal of this section  is to prove Theorem~\ref{thm:scss-main-hardness-general-graphs}.
We note that the reduction of Guo et al.~\cite{guo-et-al}
gives a reduction from \mcc which builds an equivalent instance of \scss with quadratic blowup in the number of terminals.
Hence using the reduction of Guo et al.~\cite{guo-et-al} only an $f(k)\cdot n^{o(\sqrt{k})}$ algorithm for SCSS can be ruled out under
ETH. We are able to improve upon this hardness by using the \csi (PSI) problem introduced by Marx~\cite{marx-beat-treewidth}. Our
reduction is also slightly simpler than the one given by Guo et al.

\begin{center}
\noindent\framebox{\begin{minipage}{6.00in}
\textbf{\csi} (PSI)\\
\emph{Input }: Undirected graphs $G=(V_G=\{g_1,g_2,\ldots,g_{\ell}\},E_G)$ and $H=(V_H,E_H)$, and a partition of $V_H$ into
disjoint subsets
               $H_1,H_2,\ldots, H_{\ell}$ \\
\emph{Question}: Is there an injection $\phi: V_G\rightarrow V_H$ such that
                \begin{enumerate}
                \item For every $i\in [\ell]$ we have $\phi(g_i)\in H_i$.
                \item For every edge $\{g_i,g_j\}\in E_G$ we have $\{\phi(g_i),\phi(g_j)\}\in E_H$.
                \end{enumerate}
\end{minipage}}
\end{center}

The PSI problem is so-called because the vertices of $H$ are partitioned into parts: one part corresponding to every vertex of $G$. Marx~\cite{marx-beat-treewidth} showed the following hardness result:

\begin{theorem}
\label{thm:marx-csi} 
Unless ETH fails, \csi cannot be solved in time $f(r)\cdot n^{o(r/\log r)}$ where $f$ is any
computable function, $r$ is the number of edges in $G$ and $n$ is the number of vertices in~$H$.
\end{theorem}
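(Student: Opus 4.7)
The plan is to reduce from 3-SAT, leveraging the Exponential Time Hypothesis together with the Sparsification Lemma to obtain a $2^{\Omega(m)}$ lower bound for 3-SAT with $m$ clauses. First I would reformulate 3-SAT as a binary constraint satisfaction problem whose primal graph is a clique $K_k$ on $k = \Theta(m/\log m)$ vertices: partition the $n$ Boolean variables into $k$ blocks of size $\Theta(\log m)$, let each block be a CSP variable with $2^{\Theta(\log m)} = \mathrm{poly}(m)$ possible values, and turn each clause into a binary constraint between the (at most three) blocks it touches. Solving this clique-CSP in time $N^{o(k)}$ on $N = \mathrm{poly}(m)$ values would yield a $2^{o(m)}$-time algorithm for 3-SAT, contradicting ETH.

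Next I would encode the clique-CSP as an instance of \csi. The naive encoding takes $G = K_k$ directly, but then $r = \binom{k}{2}$, so the resulting lower bound is only $f(r)\cdot n^{o(\sqrt{r})}$, which is weaker than the $f(r)\cdot n^{o(r/\log r)}$ we want. The key step is therefore to replace $K_k$ by a sparser pattern graph $G'$ with only $O(k\log k)$ edges that still ``simulates'' all $\binom{k}{2}$ constraints. Following Marx, I would build $G'$ from a high-girth, $O(\log k)$-regular graph on $\Theta(k)$ vertices (e.g.\ a Ramanujan-type expander or an explicit bounded-degree graph of logarithmic girth) and realise each clique-edge $\{i,j\}$ of $K_k$ by a canonical short path between representatives of blocks $i$ and~$j$ in $G'$. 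The host graph $H'$ is built so that its vertex classes correspond to assignments of the blocks together with consistent ``path labels,'' and each edge of $H'$ encodes the joint legal values along one $G'$-edge.

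With $r := |E(G')| = \Theta(k \log k)$ we get $r/\log r = \Theta(k)$, so a hypothetical $f(r)\cdot |V(H')|^{o(r/\log r)}$ algorithm for \csi would solve the clique-CSP in time $\mathrm{poly}(m)^{o(k)} = 2^{o(m)}$, contradicting ETH. The two directions of the reduction's correctness are routine once the embedding is set up: a satisfying assignment of the original 3-SAT instance yields an injection $\phi:V(G')\to V(H')$ by reading off each block's value and the induced path labels, and conversely any valid $\phi$ restricts to a consistent value per block that satisfies every constraint.

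The main obstacle will be the embedding step: the graph $G'$ must simultaneously (i) have only $O(k\log k)$ edges, (ii) allow every pair $\{i,j\}\in\binom{[k]}{2}$ to be ``connected'' by a unique canonical low-length path, and (iii) allow PSI solutions to be uniquely decoded back to block assignments, which forces the path gadgets not to interfere destructively. This is exactly where the logarithmic slack appears and why the final bound is $r/\log r$ rather than $r$: the degree/girth trade-off for sparse graphs that route all vertex-pairs inherently loses a $\log k$ factor. The remaining ingredients (sparsification, block grouping, and the final ETH calculation) are standard.
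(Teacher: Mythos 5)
First, a point of reference: the paper does not prove Theorem~\ref{thm:marx-csi} at all --- it is quoted as a black box from Marx's ``Can you beat treewidth?'' paper \cite{marx-beat-treewidth}. So your proposal is really being measured against that proof. Your high-level instincts (ETH plus sparsification, grouping into blocks, a sparse pattern graph, a logarithmic loss coming from routing) are in the right spirit, but the reduction as you set it up has two genuine gaps.

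The more fundamental one is the computable function $f$, which you silently drop in the final calculation. Your construction yields $r=\Theta(k\log k)=\Theta(m)$, so the hypothetical running time is $f(\Theta(m))\cdot n^{o(r/\log r)}$; since $f$ may be, say, $2^{2^{r}}$, this is nowhere near $2^{o(m)}$ and no contradiction with ETH follows. To absorb an arbitrary computable $f$ one must (as in Chen et al.'s clique bound) be able to take $r$ to be an arbitrarily slowly growing function of $m$, and then each of the $\Theta(r)$ pattern vertices must simulate $\Theta((m/r)\log r)$ Boolean variables, forcing host classes of superpolynomial size $2^{\Theta((m/r)\log r)}$. This is precisely why Marx needs the embedding machinery: the sparse post-sparsification constraint graph of the formula (with $\Theta(m)$ vertices and edges) is embedded into a bounded-degree expander with $r$ edges and treewidth $\Theta(r)$ at depth $O((m/r)\log r)$, and the $\log r$ is the gap between treewidth and multicommodity-flow/well-linkedness --- not a degree--girth trade-off. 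The second gap is your key step itself. A PSI edge between two classes enforces a single binary relation, so routing all $\binom{k}{2}$ clique constraints through $O(k\log k)$ pattern edges forces some class to distinguish $\mathrm{poly}(m)^{\Omega(k/\log k)}$ states (again superpolynomial, breaking the final count), whereas keeping classes polynomial caps each pattern vertex at $O(1)$ constraints and leaves all but $O(k\log k)\ll\binom{k}{2}$ constraints unchecked; high girth also does not give unique short paths between all pairs, and a clause whose three variables lie in three blocks is not expressible as a binary constraint (no pair of blocks witnesses its satisfaction). Indeed, after sparsification your variable-blocked CSP is not a clique-CSP in the first place --- each block meets only $O(\log m)$ others --- so the clique detour is both unnecessary and unsalvageable; but even the resulting sparse pattern has $r=\Theta(m)$ and runs into the $f(r)$ obstacle above.
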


\begin{figure}[t]
\centering
\includegraphics[width=6in]{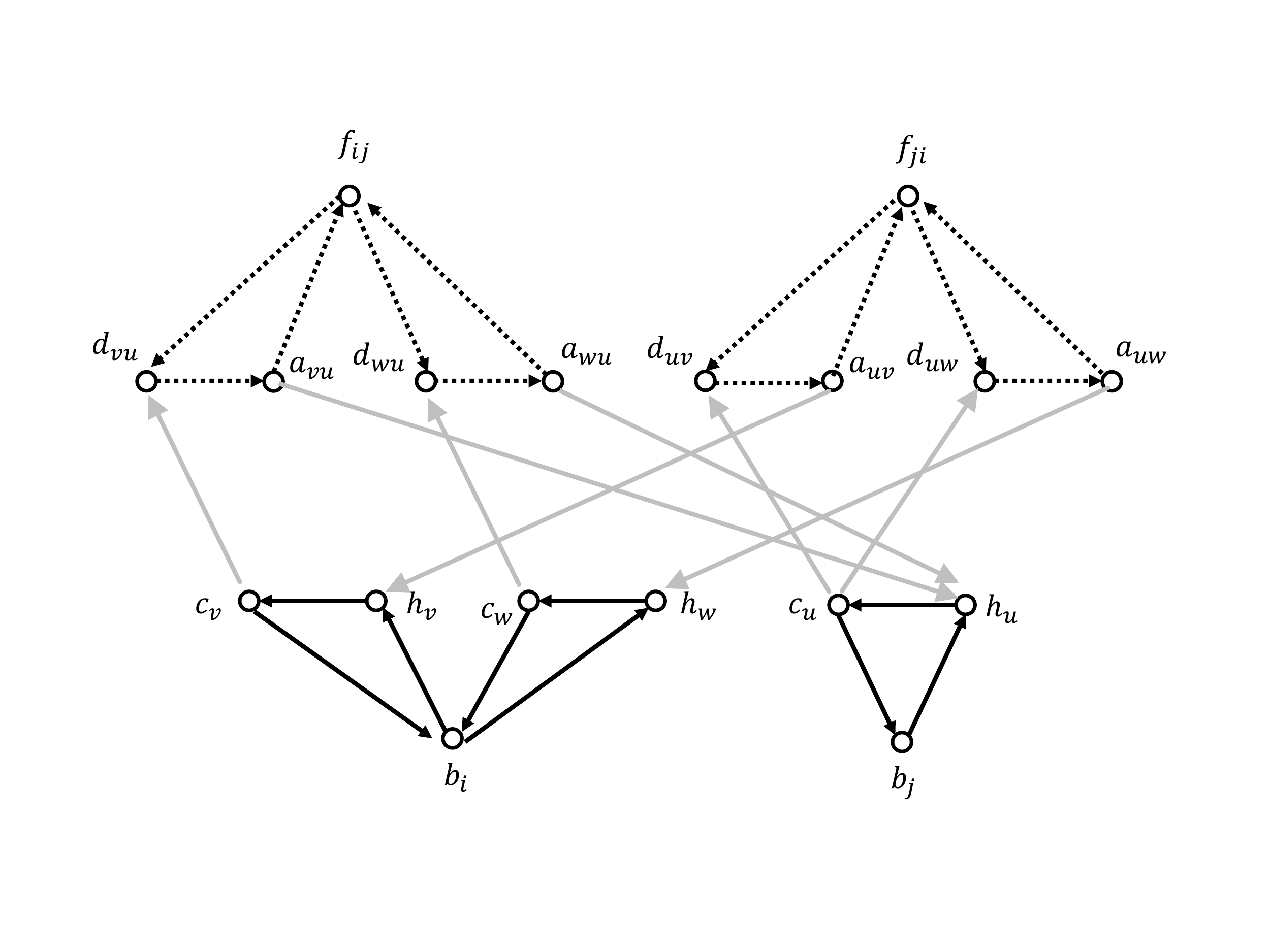}
\vspace{-15mm}
\caption{An illustration of the reduction from PSI to SCSS described in Theorem~\ref{thm:scss-main-hardness-general-graphs} for the special case when $V_G=\{g_1, g_2\}, E_G = g_1 - g_2$ and $H$ is a path on three vertices $v-u-w$ with $H_1 = \{v,w\}$ and $H_2 = \{u\}$.
\label{fig:scss}}
\end{figure}

By giving a reduction from \csi to \scss where $k=O(|E_G|)$ we will obtain a $f(k)\cdot n^{o(k/\log k)}$ hardness for SCSS under the ETH,
where $k$ is the number of terminals. Consider an instance $(G,H)$ of \csi. We now build an instance $(G^*, T^*)$ of \scss as follows:
\begin{itemize}
\item $B = \{b_{i}\ |\ i\in [\ell]\}$
\item $C = \{c_v\ |\ v\in V_H\}$
\item $H = \{h_v\ |\ v\in V_H\}$
\item $D = \{d_{uv}\cup d_{vu}\ |\ \{u,v\}\in E_H\}$
\item $A = \{a_{uv}\cup a_{vu}\ |\ \{u,v\}\in E_H\}$
\item $F = \{ f_{ij}\ |\ 1\leq i,j\leq \ell \ |\ g_{i}g_{j}\in E_G \}$
\item $V^{*}= B\cup C\cup H\cup D\cup A\cup F$
\item $E_1 = \{ (c_{v},b_{i})\ |\ v\in H_i, 1\leq i\leq \ell  \}$
\item $E_2 = \{ (b_{i},h_{v})\ |\ v\in H_i, 1\leq i\leq \ell \}$
\item $E_3 = \{ (h_{v},c_{v})\ |\ v\in V_H \}$
\item $E_4 = \{ (c_{v},d_{vu})\ |\ \{u,v\}\in E_H \}$
\item $E_5 = \{ (a_{vu},h_{u})\ |\ \{u,v\}\in E_H \}$
\item $E_6 = \{ (d_{vu},a_{vu})\ |\ \{u,v\}\in E_H \}$
\item $E_7 = \{ (f_{ij},d_{vu})\cup (a_{vu},f_{ij})\ |\ \{u,v\}\in E_H; v\in H_i; u\in H_j; 1\leq i,j \leq \ell \}$
\item $E^{*} = E_1 \cup E_2 \cup E_3 \cup E_4 \cup E_5\cup E_6\cup E_7$
\item The set of terminals is $T^*=B\cup F$.
\end{itemize}

This completes the construction of the graph $G^* = (V^*, E^*)$.
An illustration of the construction for a small graph is given in Figure~\ref{fig:scss}. In the instance of \csi we can assume the graph $G$ is connected,
otherwise we can solve the problem for each connected component. Therefore, we have that $k=|T| = \ell + 2|E_G| = O(|E_G|)$. For ease of argument, we distinguish the different types of edges of $G^*$ as follows (see Figure~\ref{fig:scss}):
\begin{itemize}
\item Edges of $E_1 \cup E_2\cup E_3$ are denoted using black  edges
\item Edges of $E_4 \cup E_5$ are denoted using light/gray edges
\item Edges of $E_6 \cup E_7$ are denoted using dotted edges
\end{itemize}

We now show two lemmas which complete the reduction from \csi to \scss.

\begin{lemma}
\label{lemma:scss-hardness-general-graphs-reduction-easy} If the instance $(G,H)$ of \csi answers YES then the instance $(G^*,T^*)$ of \scss has a solution of size $\leq 3\ell+10|E_G|$.
\end{lemma}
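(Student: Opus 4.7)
The plan is to exhibit an explicit solution $S^*\subseteq E^*$ of the desired size and verify that it makes $T^*$ strongly connected in $G^*[S^*]$. Set $v_i:=\phi(g_i)\in H_i$ for each $i\in[\ell]$. The solution $S^*$ consists of the following edges:

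\begin{itemize}
\item \emph{Vertex gadget.} For every $i\in[\ell]$ the three edges
$c_{v_i}\to b_i$, $b_i\to h_{v_i}$, $h_{v_i}\to c_{v_i}$
(from $E_1,E_2,E_3$), which form a directed triangle through the terminal $b_i$.
\item \emph{Edge gadget.} For every $\{g_i,g_j\}\in E_G$, note $\{v_i,v_j\}\in E_H$ because $\phi$ preserves edges; include the ten edges
\[
c_{v_i}\!\to\! d_{v_iv_j}\!\to\! a_{v_iv_j}\!\to\! h_{v_j},\qquad
c_{v_j}\!\to\! d_{v_jv_i}\!\to\! a_{v_jv_i}\!\to\! h_{v_i},
\]
\[
f_{ij}\!\to\! d_{v_iv_j},\ \ a_{v_iv_j}\!\to\! f_{ij},\qquad
f_{ji}\!\to\! d_{v_jv_i},\ \ a_{v_jv_i}\!\to\! f_{ji},
\]
taken from $E_4,E_6,E_5$ and $E_7$. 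These are ten distinct edges per edge of $G$.
\end{itemize}

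The edge count is immediately $3\ell+10|E_G|$, matching the claimed bound. It remains to check that $G^*[S^*]$ is strongly connected on $T^*=B\cup F$; I would prove this in three short steps. First, inside each vertex gadget the triangle gives $b_i\leftrightarrow c_{v_i}\leftrightarrow h_{v_i}$. Second, for each $\{g_i,g_j\}\in E_G$ the edge gadget yields the directed cycle
\[
c_{v_i}\to d_{v_iv_j}\to a_{v_iv_j}\to h_{v_j}\to c_{v_j}\to d_{v_jv_i}\to a_{v_jv_i}\to h_{v_i}\to c_{v_i},
\]
using one edge $h_{v_j}\to c_{v_j}$ and one edge $h_{v_i}\to c_{v_i}$ coming from the corresponding vertex gadgets. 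Consequently $c_{v_i}$ and $c_{v_j}$ are mutually reachable whenever $g_ig_j\in E_G$, and since we may assume $G$ is connected, the set $\{c_{v_1},\dots,c_{v_\ell}\}$ lies in a single strongly connected subgraph, which by the first observation extends to all of $B$.

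Finally, each terminal $f_{ij}\in F$ is incorporated into this strongly connected set: the edges $f_{ij}\to d_{v_iv_j}\to a_{v_iv_j}\to f_{ij}$ form a directed cycle through $f_{ij}$, and along that cycle $f_{ij}$ reaches $a_{v_iv_j}\to h_{v_j}\to c_{v_j}$ (reaching the ``backbone'') and is reached from $c_{v_i}\to d_{v_iv_j}\to f_{ij}$. Hence every $f_{ij}$ is strongly connected to the backbone, completing the proof that $S^*$ is a feasible SCSS solution. There is no genuine obstacle here; the only point requiring care is bookkeeping to ensure that each of the ten edges per $E_G$-edge really lies in the gadget graph $G^*$, which is guaranteed by $\{v_i,v_j\}\in E_H$ together with $v_i\in H_i$, $v_j\in H_j$ in the definition of $E_4,E_5,E_6,E_7$.
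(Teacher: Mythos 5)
Your construction is exactly the paper's solution $M'$ (indexed by undirected edges of $G$ rather than ordered pairs), and your verification follows the same route: build a strongly connected ``backbone'' on $B$ via the $c\to b\to h\to c$ triangles joined through the edge gadgets, then attach each $f_{ij}$ to it. One small slip: $d_{v_iv_j}\to f_{ij}$ is not an edge of $G^*$, so the path showing $f_{ij}$ is reached from $c_{v_i}$ should be $c_{v_i}\to d_{v_iv_j}\to a_{v_iv_j}\to f_{ij}$ (using $(a_{v_iv_j},f_{ij})\in E_7$, which is in your $S^*$); with that correction the argument is complete.
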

\begin{proof}

Suppose the instance $(G,H)$ of \csi answers YES and let $\phi$ be the injection from $V_G \rightarrow V_H$. Then we claim the
following set $M'$ of $3\ell+10|E_G|$ edges forms a solution for the \scss instance:
\begin{itemize}
\item $M_1 = \{ (h_{\phi(g_i)},c_{\phi(g_i)})\ |\ i\in [\ell] \}$
\item $M_2 = \{ (b_{i},h_{\phi(g_i)})\ |\ i\in [\ell] \}$
\item $M_3 = \{ (c_{\phi(g_i)},b_{i})\ |\ i\in [\ell] \}$
\item $M_4 = \{ (c_{\phi(g_i)},d_{\phi(g_i)\phi(g_j)})\cup (d_{\phi(g_i)\phi(g_j)},a_{\phi(g_i)\phi(g_j)})\cup
    (a_{\phi(g_i)\phi(g_j)},h_{\phi(g_j)})\ |\ g_{i}g_{j}\in E_G; 1\leq i,j\leq \ell \}$.
\item $M_5 = \{ (f_{ij},d_{\phi(g_i)\phi(g_j)})\cup (a_{\phi(g_i)\phi(g_j)},f_{ij})  \ |\ g_{i}g_{j}\in E_G; 1\leq
    i,j\leq \ell \}$.
\item $M' = M_1 \cup M_2 \cup M_3 \cup M_4\cup M_5$
\end{itemize}
First consider $i\neq j$ such that $g_{i}g_{j}\in E_G$. Then there is a $b_i\leadsto b_j$ path in $M'$, namely $b_i
\rightarrow h_{\phi(g_i)}\rightarrow c_{\phi(g_i)}\rightarrow d_{\phi(g_i)\phi(g_j)}\rightarrow
a_{\phi(g_i)\phi(g_j)}\rightarrow h_{\phi(g_j)}\rightarrow c_{\phi(g_j)}\rightarrow b_j$. Generalizing this and observing
$G$ is connected we can see any two terminals in $B$ are strongly connected. Now consider two terminals $f_{ij}$ and $b_{q}$
such that $1\leq i,j,q\leq \ell$. The existence of the terminal $f_{ij}$ implies $g_{i}g_{j}\in E_G$ and hence
$\phi(g_i)\phi(g_j)\in E_H$. There is a path in $M'$ from $f_{ij}$ to $b_q$: use the path $f_{ij}\leadsto
d_{\phi(g_i)\phi(g_j)}\rightarrow a_{\phi(g_i)\phi(g_j)}\rightarrow h_{\phi(g_j)}\rightarrow c_{\phi(g_j)}\rightarrow b_j$
followed by the $b_{j}\leadsto b_{q}$ path (which was shown to exist above). Similarly there is a path in $M'$ from $b_q$ to
$f_{ij}$: use the $b_q\leadsto b_i$ path (which was shown to exist above) followed by the path $b_i \leadsto
h_{\phi(g_i)}\rightarrow c_{\phi(g_i)}\rightarrow d_{\phi(g_i)\phi(g_j)}\rightarrow a_{\phi(g_i)\phi(g_j)}\rightarrow
f_{ij}$. Hence each terminal in $B$ can reach every terminal in $F$ and vice versa. Finally consider any two terminals $f_{ij}$ and $f_{st}$ in $F$: the terminal $f_{ij}$ can first reach $b_{i}$ and we have seen above that $b_i$ can reach any terminal in $F$.
This shows $M'$ forms a solution for the \scss instance.
\end{proof}

\begin{lemma}
\label{lemma:scss-hardness-general-graphs-reduction-hard}If the instance $(G^*,T^*)$ of \scss has a solution of size $\leq 3\ell+10|E_G|$ then the instance $(G,H)$ of \csi answers YES.
\end{lemma}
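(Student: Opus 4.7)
The plan is a tight edge-accounting argument: I will show that every valid SCSS solution has size at least $3\ell + 10|E_G|$, and that achieving equality forces a rigid combinatorial structure from which a PSI solution can be extracted directly. Given a solution $S$ with $|S| \leq 3\ell + 10|E_G|$, I assign a pairwise disjoint block of at least $3$ edges of $S$ to each terminal $b_i$ and at least $5$ edges to each terminal $f_{ij}$, conclude that every inequality must be tight, and then read off the assignment $\phi(g_i) = v_i \in H_i$ from the forced structure.

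For each $b_i$, strong connectivity requires an incoming $E_1$ edge $(c_{w_i}, b_i)$ with $w_i \in H_i$, an outgoing $E_2$ edge $(b_i, h_{v_i})$ with $v_i \in H_i$, and --- since $h_{v_i}$'s only out-edge is $(h_{v_i}, c_{v_i}) \in E_3$ --- that edge as well. These three edges are distinct and, since the involved $c,h$ vertices lie in $H_i$, they are pairwise disjoint from the analogous triples for $b_{i'}$ with $i' \neq i$. Moreover, if $w_i \neq v_i$, then $c_{w_i}$ is a second vertex in $H_i$ appearing in $S$, which forces $(h_{w_i}, c_{w_i}) \in E_3$ plus an in-edge to $h_{w_i}$, costing at least two extra edges.

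For each $f_{ij}$, let $(a_{v'u'}, f_{ij})$ be an in-edge and $(f_{ij}, d_{vu})$ an out-edge; each forces its matching $E_6$ edge $(d_{\cdot\cdot}, a_{\cdot\cdot})$, because these are respectively the only out-edges of $d_{\cdot\cdot}$ and the only in-edges of $a_{\cdot\cdot}$. For $d_{v'u'}$ to be reachable from some non-$f$-terminal, its only remaining in-option is $(c_{v'}, d_{v'u'}) \in E_4$; symmetrically, for $a_{vu}$ to reach some non-$f$-terminal, the only remaining out-option is $(a_{vu}, h_u) \in E_5$. When $(v,u) = (v', u')$ (the shared case), these yield exactly $5$ distinct edges per $f_{ij}$; otherwise a second $E_6$ edge is forced, giving at least $6$. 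All these edges are indexed by pairs in $H_i \times H_j$ which uniquely determine $(i,j)$, hence are disjoint across different $f_{ij}$; and they are disjoint from the $b_i$ triples by edge type. Summing yields $|S| \geq 3\ell + 5 \cdot 2|E_G| = 3\ell + 10|E_G|$.

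Equality therefore forces every $b_i$ to satisfy $w_i = v_i$, every terminal to have in- and out-degree exactly $1$, every $f_{ij}$ to be in the shared case with some pair $(v_{ij}, u_{ij})$, and no edges of $S$ to lie outside the assigned blocks. Consequently, $v_1, \dots, v_\ell$ are exactly the labels appearing on $h$- and $c$-vertices in $S$. Combined with $v_{ij} \in H_i$ and $u_{ij} \in H_j$ (forced because $c_{v_{ij}}, h_{u_{ij}} \in S$), this forces $v_{ij} = v_i$ and $u_{ij} = v_j$, hence $\{v_i, v_j\} \in E_H$ for every $g_ig_j \in E_G$ (since $d_{v_iv_j}$ exists in $G^*$ only under this condition). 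Setting $\phi(g_i) := v_i$ thus defines an injection $V_G \to V_H$ with $\phi(g_i) \in H_i$ that preserves every edge of $G$ --- a PSI solution. The main obstacle is the per-$f_{ij}$ lower bound: one must rule out ingenious sharing in which a single $E_4$ or $E_5$ edge services several $f$-terminals via a chain of $f \to d$ out-edges, verifying that any such configuration either collapses to the shared case already counted or strictly inflates some terminal's in- or out-degree, in either case exceeding the $3\ell + 10|E_G|$ budget.
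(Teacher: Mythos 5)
Your proof is correct and takes essentially the same route as the paper's: a disjoint-block edge-accounting argument charging $3$ private edges from $E_1\cup E_2\cup E_3$ to each $b_i$ and $5$ private edges (three dotted from $E_6\cup E_7$, two light from $E_4\cup E_5$) to each $f_{ij}$, observing that the budget $3\ell+10|E_G|$ is therefore exactly met, and then reading off $\phi(g_i)=v_i$ from the forced triangles $(b_i,h_{v_i}),(h_{v_i},c_{v_i}),(c_{v_i},b_i)$. (The paper charges the $f$-terminals first and the $b$-terminals second, but that is cosmetic.) The one thing to fix is the final hedge: the ``ingenious sharing'' you leave open cannot occur, and you should close it rather than flag it. The two light edges you force for $f_{ij}$, namely $(c_{v'},d_{v'u'})$ and $(a_{vu},h_u)$, have $v',v\in H_i$ and $u',u\in H_j$, and since $H_1,\dots,H_\ell$ partition $V_H$, the pair of classes $(H_i,H_j)$ is recoverable from the edge's index; the only $E_7$ edges incident to $d_{v'u'}$ or $a_{vu}$ go to or from $f_{ij}$ itself, so no $f\to d\to a\to f'$ chain can ever escape to a different $f$-terminal. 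Thus the five-edge blocks are disjoint across $(i,j)$ automatically, which is exactly what the paper's phrase ``private'' edges is relying on. (Separately, your claim that $w_i\neq v_i$ costs ``at least two extra edges'' is a slight overstatement — the in-edge to $h_{w_i}$ could in principle be a light $E_5$ edge already charged to some $f$-block — but the single extra $E_3$ edge $(h_{w_i},c_{w_i})$, which lives in $H_i$ and hence cannot be charged to any other $b_{i'}$ block nor to any $f$-block, already breaks the budget, so the argument stands.)
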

\begin{proof}
Let $X$ be a solution of size $3\ell+10|E_G|$ for the instance $(G^*, T^*)$ of SCSS. Consider a terminal $f_{ij}\in F$. The only out-neighbors of $f_{ij}$ are vertices from $D$, and hence $X$ must contain an
edge $(f_{ij},d_{vu})$ such that $v\in H_i$ and $u\in H_j$. However the only neighbor of $d_{vu}$ is $a_{vu}$, and hence $X$ has to contain this edge as well. Finally, $X$ must also contain one incoming edge into $f_{ij}$ since we desire strong connectivity. So for each terminal $f_{ij}$, we need three ``private" dotted edges in the sense that every terminal in $F$ needs three such edges in any
optimum solution. This uses up $6|E_G|$ of the budget since $|F|=2|E_G|$. Referring to Figure~\ref{fig:scss}, we can see any $f_{ij}\in F$ needs
two ``private" light edges in $X$: one edge coming out of some vertex in $A$ and some edge going into a vertex of $D$. This uses up
$4|E_G|$ more from the budget leaving us with only $3\ell$ edges.

Consider $b_i$ for $i\in [\ell]$. First we claim that $X$ must contain at least three black edges for $b_i$ to have incoming and outgoing
paths to the other terminals. The only outgoing edge from $b_i$ is to vertices of $H$, and hence we need to pick an edge $(b_i,h_{v})$ such that $v\in H_i$. Since the only out-neighbor of $h_{v}$ is $c_{v}$, it follows that $X$ must pick this edge as well. Additionally, $X$
also needs to contain at least one incoming edge into $b_i$ to account for incoming paths from other terminals to $b_i$. So each $b_i$ needs to
have at least three edges selected in order to have incoming and outgoing paths to other terminals. Moreover, all these edges
are clearly ``private", i.e., different for each $b_i$. But as seen in the previous paragraph, our remaining budget was at most $3\ell$. Hence $X$ selects exactly three such edges for each $b_i$. We now claim that once $X$ contains the edges $(b_i,h_{v})$ and $h_{v},c_{v}$ such that $v\in H_i$ then $X$ must also contain the edge $(c_{v},b_i)$. Suppose not, and for incoming towards $b_i$ the solution $X$
selects the edge $(c_{w},b_i)$ for some $w\in H_i$ such that $w \neq v$. Then since $h_{w}$ is the only neighbor of $c_{w}$, the solution $X$ would
be forced to select this edge as well. This implies that at least four edges have been selected for $b_i$, which is a contradiction. So for every $i\in [\ell]$, there
is a vertex $v_i\in H_i$ such that the edges $(b_i,h_{v_i}), (h_{v_i},c_{v_i})$ and $(c_{v_i},b_i)$ are selected in the
solution for the \scss instance. Further these are the only black edges in $X$ corresponding to $b_i$ (refer to
Figure~\ref{fig:scss}). It also follows for each $f_{ij}\in F$, the solution $X$ contains exactly three of the dotted edges (we argued above that each $f_{ij}$ needs three dotted edges, and the budget now implies that this is the maximum we can allow).

Define $\phi: V_G\rightarrow V_H$ by $\phi(g_i)=v_i$ for each $i\in [\ell]$. Since $v_i\in H_i$ and the sets $H_{1}, H_2, \ldots, H_{\ell}$ form a
disjoint partition of $V_H$, it follows that the function $\phi$ is an injection. Consider any edge $g_{i}g_{j}\in E_G$. We have seen above
that the solution $X$ contains exactly three dotted edges per $f_{ij}\in F$. Suppose for $f_{ij}\in F$ the solution $X$ contains the edges $(f_{ij}, d_{vu}),
(d_{vu},a_{vu})$ and $(a_{vu},f_{ij})$ for some $v\in H_i, u\in H_j$. The only incoming path for $f_{ij}$ is via $d_{vu}$.
Also the only outgoing path from $b_i$ is via $c_{v_i}$. If $v_{i}\neq v$ then we will need two other dotted edges to
reach $f_{ij}$, which is a contradiction since have already picked the allocated budget of three such edges. Hence, $v_i=v$. Similarly, it follows that $v_j=u$.
Finally, the existence of the vertex $d_{vu}$ implies $vu\in E_H$, i.e., $\phi(g_i)\phi(g_j)\in E_H$.
%
\end{proof}


\subsection{Proof of Theorem~\ref{thm:scss-main-hardness-general-graphs}}

Finally, we are now ready to prove Theorem~\ref{thm:scss-main-hardness-general-graphs} which is restated below:

\begin{reptheorem}{thm:scss-main-hardness-general-graphs}
Under ETH, the edge-unweighted version of the SCSS problem cannot be solved in time $f(k)\cdot n^{o(k/\log k)}$ where $f$ is any
computable function, $k$ is the number of terminals and $n$ is the number of vertices in the instance.
\end{reptheorem}
\begin{proof}
Lemma~\ref{lemma:scss-hardness-general-graphs-reduction-easy} and Lemma~\ref{lemma:scss-hardness-general-graphs-reduction-hard} together give a parameterized reduction from PSI to SCSS.
Observe that the number of terminals $k$ of the SCSS instance is $|B\cup F|=|V_G|+ 2|E_G| = O(|E_G|)$ since we had the assumption
that $G$ is connected. The number of vertices in the SCSS instance is $|V^*| = |V_G|+2|V_H|+4|E_H|+2|E_G| = O(|E_H|)$.
Therefore from Theorem~\ref{thm:marx-csi} we can conclude that under ETH there is no $f(k)\cdot n^{o(k/\log k)}$ algorithm for SCSS
where $n$ is the number of vertices in the graph and $k$ is the number of terminals.
\end{proof}

\section{W[1]-hardness for DSN in planar DAGs}

The main goal of this section is to prove Theorem~\ref{thm:dsn-w[1]-hardness} which is restated below

\begin{reptheorem}{thm:dsn-w[1]-hardness}
The edge-unweighted version of the \dsn problem is W[1]-hard parameterized by the number $k$ of terminal pairs, even when the input is restricted to planar directed acyclic graphs (DAGs). Moreover, there is no $f(k)\cdot n^{o(k)}$ algorithm for any computable function $f$, unless the ETH fails.
\end{reptheorem}

Note that this shows that the $n^{O(k)}$ algorithm of Feldman-Ruhl is asymptotically optimal. To prove Theorem~\ref{thm:dsn-w[1]-hardness}, we reduce from the
\textsc{Grid Tiling} problem introduced by Marx~\cite{marx-beat-treewidth}.

\begin{center}
\noindent\framebox{\begin{minipage}{6.00in}
\textbf{$k\times k$ \textsc{Grid Tiling}}\\
\emph{Input }: Integers $k, n$, and $k^2$ non-empty sets $S_{i,j}\subseteq [n]\times [n]$ where $1\leq i, j\leq k$\\
\emph{Question}: For each $1\leq i, j\leq k$ does there exist an entry $s_{i,j}\in S_{i,j}$ such that
\begin{itemize}
\item If $s_{i,j}=(x,y)$ and $s_{i,j+1}=(x',y')$ then $x=x'$.
\item If $s_{i,j}=(x,y)$ and $s_{i+1,j}=(x',y')$ then $y=y'$.
\end{itemize}
\end{minipage}}
\end{center}



 \begin{figure}[t]
 \centering
 \includegraphics[height=5in]{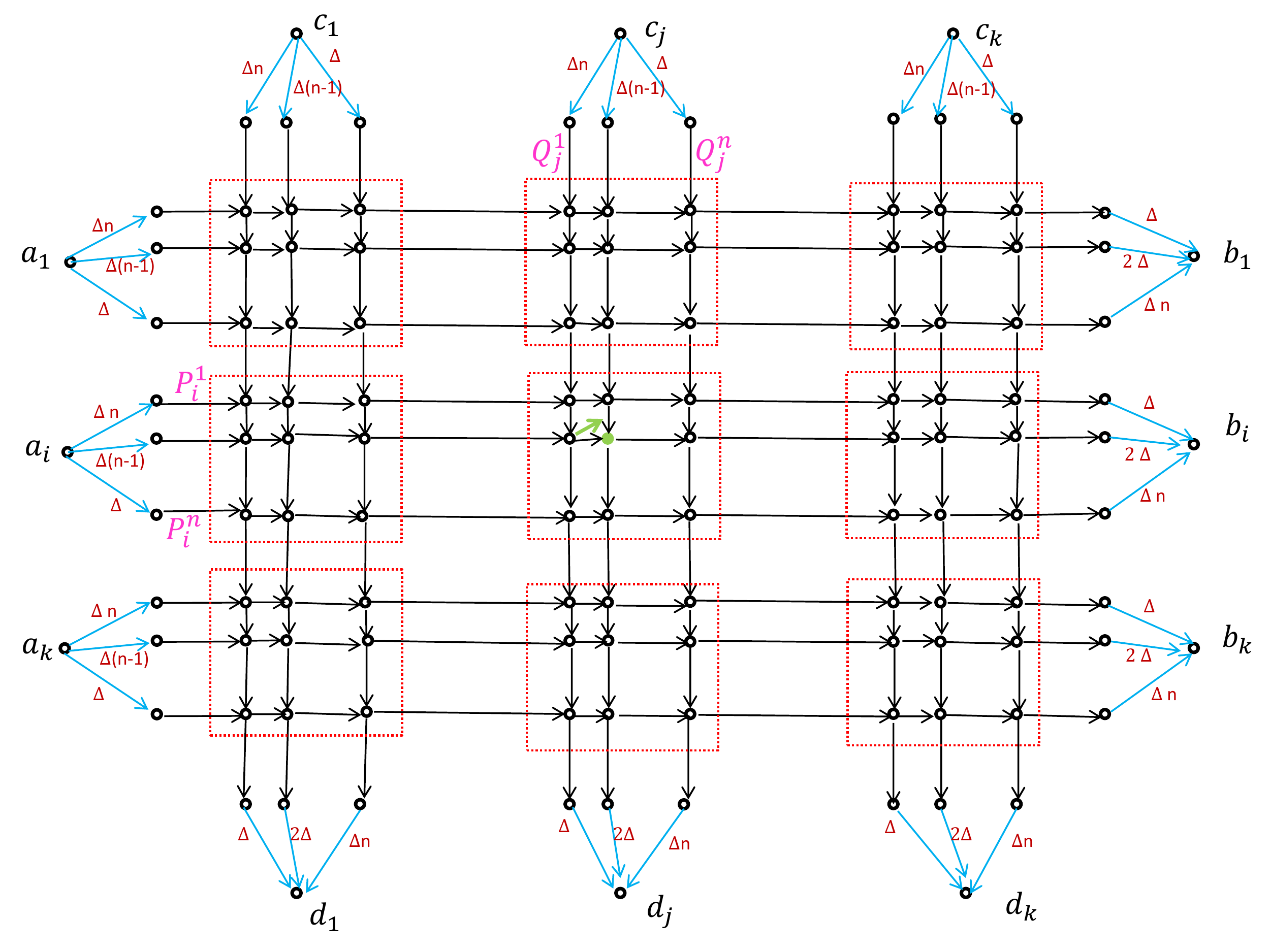}
 \caption{The instance of \dsn created from an instance of \gt.
 \label{fig:dsn}}
 \end{figure}

Consider an instance $(k,n, \{S_{i,j\ :\ 1\leq i,j \leq k}\})$ of \textsc{Grid Tiling}. We now build an instance $(G,T)$ of edge-weighted \dsn as shown in Figure~\ref{fig:dsn}.
Set $T= \{(a_i,b_i) \cup(c_i,d_i)\ : i\in [k]\}$, i.e., we have $2k$ terminal pairs. We introduce $k^2$ red gadgets where
each gadget is an $n\times n$ grid. Set the weight of each black edge to $2$.

\begin{definition}
An $a_i \leadsto b_i$ \emph{canonical} path is a path from $a_i$ to $b_i$ which starts with a blue edge coming out of $a_i$,
then follows a horizontal path of black edges and finally ends with a blue edge going into $b_i$. Similarly, a $c_j\leadsto
d_j$ \emph{canonical} path is a path from $c_j$ to $d_j$ which starts with a blue edge coming out of $c_j$, then follows a
vertically downward path of black edges and finally ends with a blue edge going into $d_j$.
\end{definition}

There are $n$ edge-disjoint $a_i \leadsto b_i$ canonical paths: let us call them $P^{1}_{i}, P^{2}_{i}, \ldots, P^{n}_i$ as
viewed from top to bottom. They are named using magenta color in Figure~\ref{fig:dsn}. Similarly we call the canonical paths
from $c_j$ to $d_j$ as $Q^{1}_{j}, Q^{2}_{j}, \ldots, Q^{n}_j$ when viewed from left to right. For each $i\in [k]$ and
$\ell\in [n]$ we assign a weight of $\Delta(n+1-\ell)$ and $\Delta\ell$ to the first and last blue edges of $P^{\ell}_{i}$,
respectively. Similarly for each $j\in [k]$ and $\ell\in [n]$ we assign a weight of $\Delta(n+1-\ell)$ and $\Delta\ell$ to the
first and last blue edges of $Q^{\ell}_{j}$, respectively. Thus the total weight of the first and the last blue edges on any canonical path
is exactly $\Delta(n+1)$. The idea is to choose $\Delta$ large enough such that in any optimum solution the paths between the
terminals will be exactly the canonical paths. We will see later that $\Delta=5n^2$ will suffice for this purpose. Any canonical path consists of the following set of edges:
\begin{itemize}
  \item Two blue edges (which sum up to $\Delta(n+1)$)
  \item $(k+1)$ black edges not inside the gadgets
  \item $(n-1)$ black edges inside each gadget
\end{itemize}
Since the number of gadgets each canonical path visits is $k$ and the weight of each black edge is 2, we have the
total weight of any canonical path is $\Delta(n+1)+2(k+1)+2k(n-1)$.

\begin{figure}
 \centering
 \includegraphics[height=3in]{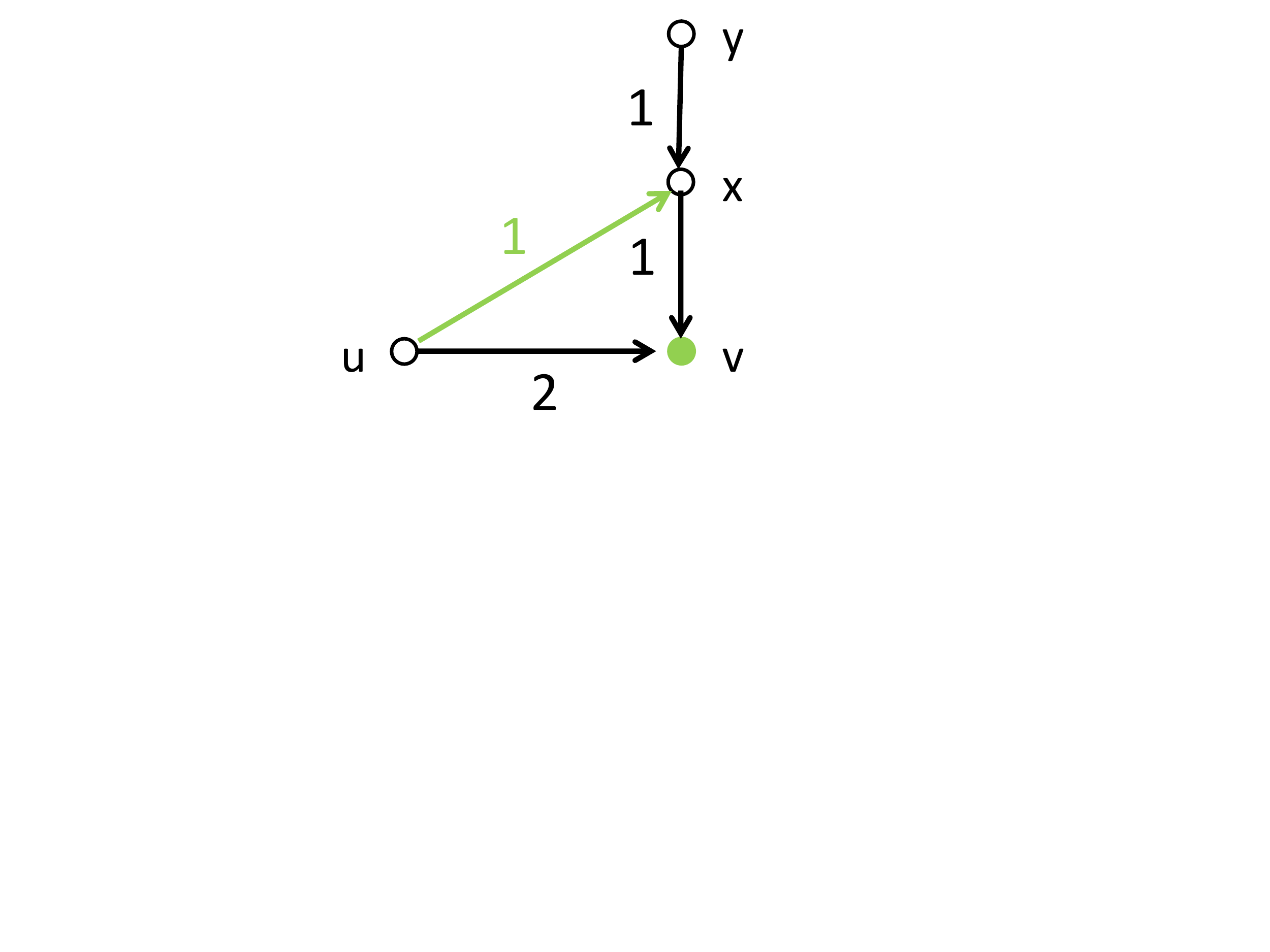}
 \vspace{-40mm}
 \caption{Let $u, v$ be two consecutive vertices on the canonical path $P^{\ell}_{i}$. Let $v$ be on the canonical path
 $Q^{\ell'}_{j}$ and let $y$ be the vertex preceding it on this path. If $v$ is a green vertex then we subdivide the edge $(y,v)$ by introducing a new vertex $x$
 and adding two edges $(y,x)$ and $(x,v)$ of weight 1. We also add an edge $(u,x)$ of weight 1. The idea is if both the edges $(y,v)$ and $(u,v)$ were
 being used initially then now we can save a weight of 1 by making the horizontal path choose $(u,x)$ and then we get $(x,v)$ for free, as it is already being used
 by the vertical canonical path.
 \label{fig:savings}}
 \end{figure}

Intuitively the $k^2$ gadgets correspond to the $k^2$ sets in the \textsc{Grid Tiling} instance. Let us denote by $G_{i,j}$ the gadget
which contains all vertices which lie on the intersection of any $a_i \leadsto b_i$ path and any $c_j \leadsto d_j$ path. If $(x,y)\in
S_{i,j}$ then we color green the vertex in the gadget $G_{i,j}$ which is the unique intersection of the canonical paths
$P_{i}^{x}$ and $Q_{j}^{y}$. Then we add a shortcut as shown in Figure~\ref{fig:savings}. The idea is if both the $a_i
\leadsto b_i$ path and $c_j \leadsto d_j$ path pass through the green vertex then the $a_i \leadsto b_i$ path can save a weight
of 1 by using the green edge and a vertical edge to reach the green vertex, instead of paying a weight of 2 to use the
horizontal edge reaching the green vertex. It is easy to see that there is a solution (without using green edges) for the DSN
instance of weight $B^* = 2k \Big ( \Delta(n+1)+2(k+1)+2k(n-1)\Big)$: each terminal pair just uses a canonical path and these
canonical paths are pairwise edge-disjoint.

The following assumption will be helpful in handling some of the border cases of the gadget construction. We may assume that $1 < \min\{x,y\}$ holds for every $(x,y)\in S_{i,j}$: indeed, we can increase $n$ by one
and replace every $(x,y)$ by $(x+1,y+1)$ without changing the problem. Hence, no green vertex can be in the first row or first column of any gadget. Combining this
fact with the orientation of the edges we get the only gadgets which can intersect any $a_i\leadsto b_i$ path are $G_{i,1},
G_{i,2}, \ldots, G_{i,k}$. Similarly the only gadgets which can intersect any $c_j\leadsto d_j$ path are $G_{1,j}, G_{2,j},
\ldots, G_{k,j}$. This completes the construction of the instance $(G,T)$ of \dsn.

Lemmas~\ref{lem:dsn-redn-easy} and \ref{lem:dsn-redn-hard} below prove that the reduction described above is indeed a correct reduction from \textsc{Grid Tiling} to DSN.

\begin{lemma}
\label{lem:dsn-redn-easy} If the instance $(k,n, \{S_{i,j\ :\ 1\leq i,j \leq k}\})$ of \textsc{Grid Tiling} has a solution then the instance $(G,T)$ of \dsn has a solution of weight at most $B^* - k^2$.
\end{lemma}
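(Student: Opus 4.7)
The plan is to exploit the Grid Tiling solution to select $k$ horizontal canonical paths and $k$ vertical canonical paths that together intersect at exactly $k^2$ green vertices, one per gadget, then argue that each such intersection yields a saving of exactly $1$ via the shortcut construction of Figure~\ref{fig:savings}.

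More concretely, given a Grid Tiling solution $s_{i,j}\in S_{i,j}$, the consistency conditions force $s_{i,j}=(x_i,y_j)$ for values $x_i\in[n]$ depending only on the row index and $y_j\in[n]$ depending only on the column index. The proposed DSN solution $H$ is the union of the canonical paths $P^{x_i}_i$ (for each $i\in[k]$) and $Q^{y_j}_j$ (for each $j\in[k]$), together with the shortcut modifications described below. This clearly routes every terminal pair. To compute the weight, I will first observe that if no shortcuts were used this solution would cost exactly $B^*$, since $2k$ pairwise edge-disjoint canonical paths are taken (edge-disjointness follows from the distinct blue edges at the endpoints and from each canonical path of $P$ traversing a distinct row of each gadget, while each canonical path of $Q$ traverses a distinct column).

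The key step is to account for the savings at each gadget $G_{i,j}$. In $G_{i,j}$, the horizontal canonical path $P^{x_i}_i$ uses row $x_i$ and the vertical canonical path $Q^{y_j}_j$ uses column $y_j$, so they meet at exactly one vertex $v_{i,j}$. Since $(x_i,y_j)=s_{i,j}\in S_{i,j}$, the vertex $v_{i,j}$ is green, so the shortcut vertex $x_{i,j}$ and the unit-weight edges $(y,x_{i,j}),(x_{i,j},v_{i,j}),(u,x_{i,j})$ exist, where $u$ and $y$ are the horizontal and vertical predecessors of $v_{i,j}$. I modify the solution at $v_{i,j}$ by replacing the horizontal edge $(u,v_{i,j})$ of weight $2$ with $(u,x_{i,j})$ of weight $1$ and by rerouting the vertical path through $(y,x_{i,j}),(x_{i,j},v_{i,j})$ (both of weight $1$, summing to the weight $2$ of the removed edge $(y,v_{i,j})$). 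The edge $(x_{i,j},v_{i,j})$ is now shared by both paths and counted only once in the edge-weighted solution, so this swap saves exactly $1$ unit at the gadget $G_{i,j}$ without breaking connectivity.

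Performing this swap independently at all $k^2$ gadgets gives a well-defined subgraph $H$ still containing an $a_i\leadsto b_i$ path and a $c_j\leadsto d_j$ path for every $i,j$. I will then note that the swaps are genuinely independent — each one only alters edges in a single gadget at a single green vertex — so the savings add. The total weight is therefore $B^*-k^2$, as claimed. The main (minor) obstacle is to carefully check edge-disjointness of the $2k$ unmodified canonical paths and that each local swap is indeed a net saving of one (as opposed to being undone by some interaction); both follow straightforwardly from the construction in Figure~\ref{fig:savings} and the fact that distinct canonical paths traverse distinct rows or columns inside every gadget.
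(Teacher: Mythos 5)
Your proof is correct and takes essentially the same approach as the paper: select canonical vertical paths $Q^{\gamma_j}_j$ and modified horizontal paths $P^{\alpha_i}_i$ using the green shortcut in each gadget, yielding a saving of $1$ per gadget. One small technicality: the subdivision of the vertical edge $(y,v)$ into $(y,x),(x,v)$ is performed once and for all during the construction of $G$, so the vertical canonical path already traverses $(y,x),(x,v)$ and is not ``rerouted'' as part of your modification; the only change is to the horizontal path, which takes $(u,x)$ then the already-present $(x,v)$ instead of $(u,v)$ --- but this does not affect the weight arithmetic, which you compute correctly.
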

\begin{proof}
For each $1\leq i,j\leq k$ let $s_{i,j}\in S_{i,j}$ be the entry in the solution of the \textsc{Grid Tiling} instance.
Therefore for every $i\in k$ we know that each of the $k$ entries $s_{i,1}, s_{i,2}, \ldots, s_{i,k}$ have the same first coordinate $\alpha_{i}$. Similarly for every $j\in [k]$ each of the $k$ vertices $s_{1,j}, s_{2,j}, \ldots, s_{k,j}$
has the same second coordinate $\gamma_j$. For each $j\in [k]$ we use the canonical path $Q^{\gamma_j}_{j}$ to satisfy the terminal for $(c_j,d_j)$. For each $i\in [k]$, we essentially use the canonical path $P_{i}^{\alpha_i}$ with the following modifications: for each $j\in [k]$, take the shortcut green edge (as shown in Figure~\ref{fig:savings}) when we encounter the green vertex (this is guaranteed to happen since $(\alpha_i, \gamma_j)=s_{i,j}\in S_{i,j}$) in $G_{i,j}$ at intersection of the canonical paths $P_{i}^{\alpha_i}$ and $Q_{j}^{\gamma_j}$. Hence, overall we save a total of $k^2$: a saving of one per gadget. Thus, we have produced a solution for the instance $(G,T)$ of weight $2k \Big ( \Delta(n+1)+2(k+1)+2k(n-1)\Big) - k^2 = B^*- k^2$.
%
\end{proof}

We now prove the other direction which is more involved. First we show some preliminary claims:

\begin{claim}
\label{claim:vertical-canonical} Any optimum solution for $(G,T)$ contains a $c_j\leadsto d_j$ canonical path for each $j\in
[k]$.
\end{claim}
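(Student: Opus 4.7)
The plan is to prove the claim by a budget argument, comparing the weight of any optimum solution against the upper bound $B^* - k^2$ established by Lemma~\ref{lem:dsn-redn-easy} (the relevant regime for the reduction). Suppose for contradiction that the optimum solution $F$ contains no canonical $c_j \leadsto d_j$ path for some $j \in [k]$; I shall derive a contradiction by showing that the weight of $F$ strictly exceeds $B^* - k^2$.

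The core structural observation is the following. Any $c_j \leadsto d_j$ path $P$ contained in $F$ starts with the first blue edge of some $Q_j^{\ell_1}$ and ends with the last blue edge of some $Q_j^{\ell_2}$, contributing blue weight $\Delta(n+1-\ell_1) + \Delta\ell_2 = \Delta(n+1+(\ell_2-\ell_1))$. Because every black edge inside the gadgets $G_{1,j},\ldots,G_{k,j}$ is oriented rightward or downward, any such $P$ must satisfy $\ell_2 \geq \ell_1$. Furthermore, if $\ell_1 = \ell_2 = \ell$, then $P$ cannot make any rightward move (it could not return to column $\ell$ in $G_{k,j}$), so $P$ must coincide exactly with the canonical path $Q_j^\ell$, contradicting the hypothesis. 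Hence every $c_j \leadsto d_j$ path in $F$ has $\ell_1 < \ell_2$, so the blue edges of $F$ incident to $c_j$ and $d_j$ contribute at least $\Delta(n+2) = \Delta(n+1) + \Delta$, an excess of $\Delta$ over the canonical minimum.

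Summing over all $2k$ pairs and using the trivial bound of $\Delta(n+1)$ on the blue weight per remaining pair, the total blue weight of $F$ is at least $2k\Delta(n+1) + \Delta$. For the black weight, each of the $2k$ pairs requires a path of at least $kn+1$ black edges of weight $2$ each, and the only savings available are from green-edge shortcuts (at most one per gadget, giving at most $k^2$ total savings). Therefore the black weight of $F$ is at least $4k^2 n + 4k - k^2$, and the total weight of $F$ is at least $B^* + \Delta - k^2 > B^* - k^2$, contradicting Lemma~\ref{lem:dsn-redn-easy}.

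The subtle point will be rigorously bounding the black weight in the presence of edge sharing: a non-canonical $c_j$ detour may reuse horizontal edges from some $(a_r, b_r)$ canonical path, apparently saving up to $2(\ell_2-\ell_1)$ units of black weight. The resolution is that $\Delta = 5n^2$ is chosen much larger than $2(\ell_2-\ell_1) \leq 2(n-1)$, so the extra $\Delta(\ell_2-\ell_1)$ blue weight strictly dominates any such sharing savings, keeping the total weight of any non-canonical solution strictly above $B^*-k^2$ and preserving the contradiction.
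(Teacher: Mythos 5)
Your approach is genuinely different from the paper's and has gaps. The paper proves the claim by a direct exchange argument that is purely local: assume the optimum $N$ has a $c_j \leadsto d_j$ path entering on $Q_j^\ell$ and exiting on $Q_j^{\ell'}$ with $\ell' > \ell$; add the missing non-blue edges of $Q_j^\ell$ (total added weight at most $2kn+2 < 5n^2 = \Delta$) and delete the last blue edge of $Q_j^{\ell'}$, saving $\Delta(\ell'-\ell) \geq \Delta$. Since that blue edge lies on no other terminal pair's path, the result is still feasible and strictly cheaper than $N$, contradicting optimality. This works unconditionally, with no reference to $B^*-k^2$ and no accounting of the total non-blue weight.

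Your global budget argument, by contrast, introduces two problems. First, it is inherently conditional: a contradiction with ``weight $> B^*-k^2$'' requires already knowing the optimum has weight at most $B^*-k^2$, which Lemma~\ref{lem:dsn-redn-easy} guarantees only when the Grid Tiling instance is a YES instance. The claim as stated holds for any optimum solution of $(G,T)$; your version proves less (although it would still suffice for how the claim is applied in Lemma~\ref{lem:dsn-redn-hard}). Second, and more substantively, your non-blue lower bound of $4k^2n+4k-k^2$ is asserted rather than proved. At this point in the argument Claim~\ref{claim:horizontal-canonical} is not yet available, so the $a_i \leadsto b_i$ paths need not be (almost) canonical: a vertically detouring $a_i$ path can share a whole staircase of ordinary weight-$2$ black edges with a horizontally detouring $c_j$ path inside a gadget, not just the single weight-$1$ green shortcut, so the assertion that the savings are bounded by $k^2$ is not justified. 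Your ``resolution'' that $\Delta$ dominates $2(\ell_2-\ell_1)$ only addresses savings caused by the bad $c_j$ detour and silently treats the $a_r$ paths as canonical, which has not been established. The paper's exchange argument simply never needs a lower bound on non-blue weight, which is why it is the cleaner route.
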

\begin{proof}
Suppose to the contrary that there is an optimum solution $N$ for $(G,T)$ which does not contain a canonical $c_{j}\leadsto d_{j}$ path for some $j\in [k]$.
From the orientation of the edges, we know that there is a $c_{j}\leadsto d_j$ path in $N$ that starts with the
blue edge from $Q_{j}^{\ell}$ and ends with a blue edge from $Q_{j}^{\ell'}$ for some $\ell'
> \ell$. We create a new set of edges $N'$ from $N$ as follows:
\begin{itemize}
  \item Add all those edges of $Q_{j}^{\ell}$ which were not present in $N$. In particular, we add the last blue edge of $Q_{j}^{\ell}$ since $\ell'>\ell$
  \item Delete the last blue edge of $Q^{\ell'}_{j}$.
\end{itemize}

It is easy to see that $N'$ is also a solution for $(G,T)$: this is because $N'$ contains the canonical path $Q_{j}^{\ell}$ to satisfy the pair $(c_j, d_j)$, and the last (blue) edge of any $c_j\leadsto d_j$ canonical
path cannot be on any $a_i\leadsto b_i$ path for any $i\in [k]$. Changing the last blue edge saves us $(\ell'-\ell)\Delta \leq \Delta = 5n^2$. However
we have to be careful since we added some edges to the solution. But these edges are the internal (black) edges of
$Q^{\ell}_{j}$, and their weight is $\leq 2(k+1) + 2k(n-1) =2kn+2 < 5n^2 = \Delta$ since $1\leq k\leq n$. Therefore we are able to create a new
solution $N'$ whose weight is less than that of an optimum solution $N$, which is a contradiction.
\end{proof}

\begin{definition}
\label{defn-almost-canonical} An $a_i\leadsto b_i$ path is called an \emph{almost canonical} path if its first and last edges are blue edges from the same $a_i \leadsto b_i$ canonical path.
\end{definition}

Hence, an $a_i\leadsto b_i$ almost canonical path looks very similar to an $a_i \leadsto b_i$ canonical path, except it can replace some of the horizontal black edges by green edges and vertical black edges as shown in Figure~\ref{fig:savings}. However, note that by definition, an almost canonical path must however end on the same horizontal level on which it began. The proof of the next claim is very similar to that of Claim~\ref{claim:vertical-canonical}.

\begin{claim}
\label{claim:horizontal-canonical} Any optimum solution for DSN contains an $a_i\leadsto b_i$ \emph{almost canonical} path for every
$i\in [k]$.
\end{claim}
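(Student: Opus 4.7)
The plan is to mirror the proof of Claim~\ref{claim:vertical-canonical}, but with two twists: first, inside the gadgets the path may descend to a lower row, so the start and end levels need not agree \emph{a priori}; second, we only need to produce an almost canonical path, which means any canonical rerouting at a suitably chosen row suffices.

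First I would suppose for contradiction that some optimum solution $N$ fails to contain an $a_i\leadsto b_i$ almost canonical path for some $i\in[k]$. Since every edge of $G$ is oriented either rightward or downward, any $a_i\to b_i$ path in $N$ must begin with the first blue edge of some $P_i^\ell$ and terminate with the last blue edge of some $P_i^{\ell'}$. The DAG structure forbids upward transitions, so necessarily $\ell'\geq\ell$. By Definition~\ref{defn-almost-canonical}, the assumption that the path is not almost canonical rules out $\ell'=\ell$, leaving $\ell<\ell'$.

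Next I would construct $N'$ from $N$ by removing the last blue edge of $P_i^{\ell'}$ (weight $\Delta\ell'$) and adding the last blue edge of $P_i^\ell$ (weight $\Delta\ell$) together with every horizontal black edge of $P_i^\ell$ not already present in $N$. The total weight of the added black edges is at most $2(k+1)+2k(n-1)\leq 2kn+2$, so the net cost change is bounded above by
\[
\Delta\ell+2kn+2-\Delta\ell' \;=\; -\Delta(\ell'-\ell)+2kn+2 \;\leq\; -\Delta+2kn+2 \;<\; 0,
\]
using $\Delta=5n^2$ and $k\leq n$. It remains to check that $N'$ is feasible. By construction, $N'$ contains the canonical path $P_i^\ell$, so the pair $(a_i,b_i)$ is satisfied. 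The only edge removed is the last blue edge of $P_i^{\ell'}$; because its head is the terminal $b_i$, it can appear only on an $a_i\to b_i$ demand path, which is already accounted for. All other demands are untouched since we otherwise only added edges. This contradicts the optimality of $N$.

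The main obstacle is verifying the orientation constraint $\ell'\geq\ell$: it is precisely this monotonicity, forced by the DAG structure together with the fact that the last blue edges enter $b_i$ only from row $\ell'$ in $G_{i,k}$, that makes the cost accounting go in the favorable direction. Once this is pinned down, the remainder of the argument is just careful bookkeeping analogous to Claim~\ref{claim:vertical-canonical}.
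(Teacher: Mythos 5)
Your proof is correct and follows essentially the same route as the paper's: assume a path begins on $P_i^\ell$ and ends on $P_i^{\ell'}$ with $\ell'>\ell$, then reroute onto $P_i^\ell$ and delete the last blue edge of $P_i^{\ell'}$, which costs at most $2kn+2<\Delta$ to add but saves at least $\Delta$. One small caveat: the statement that ``every edge of $G$ is oriented either rightward or downward'' is not literally true, since the green shortcut edge $(u,x)$ lands on the subdivision vertex $x$ slightly above the row of $u$; the precise justification for $\ell'\geq\ell$ (which the paper notes) is that the shortcut gadget forces the path to immediately descend back via $(x,v)$, so no $a_i\leadsto b_i$ path can climb to a strictly higher level $P_i^{\ell''}$.
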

\begin{proof}
Suppose to the contrary that there is an optimum solution $N$ which does not contain an almost canonical $a_{i}\leadsto b_{i}$ path for some $i\in [k]$. Hence, the $a_{i}\leadsto b_{i}$ path in $N$ starts and ends at different levels. From the orientation of the edges, we know that there is a $a_{i}\leadsto b_i$ path in the optimum solution that starts
with the blue edge from $P_{i}^{\ell}$ and ends with a blue edge from $P_{i}^{\ell'}$ for some $\ell'
> \ell$ (note that the construction in Figure~\ref{fig:savings} does not allow any $a_i\leadsto b_i$ path to climb onto an upper level).

We create a new set of edges $N'$ from $N$ as follows:
\begin{itemize}
  \item Add all those edges of $P_{i}^{\ell}$ which were not present in $N$.
  Note that in particular, we add the last blue edge of $P_{i}^{\ell}$ since $\ell'>\ell$.
  \item Delete the last blue edge of $P^{\ell'}_{i}$.
\end{itemize}

It is easy to see that $N'$ is also a solution for $(G,T)$: this is because $N'$ contains the canonical path $P_{i}^{\ell}$ to satisfy the pair $(a_i, b_i)$, and the last (blue) edge of any $a_i\leadsto b_i$ canonical
path cannot be on any $c_j\leadsto d_j$ path for any $j\in [k]$. Changing the last edge saves us $(\ell'-\ell)\Delta \leq \Delta = 5n^2$. But we
have to careful since we also added some edges to the solution. The total weight of edges added is $\leq 2(k+1) + 2k(n-1) = 2kn+2 < 5n^2 = \Delta$ since $1\leq k\leq n$. So we are able to create a new solution $N'$ whose weight is less than that of an optimum solution $N$, which is a contradiction.
\end{proof}

\begin{lemma}
\label{lem:dsn-redn-hard} If the instance $(G,T)$ of \dsn has a solution of weight at most $B^* - k^2$ then the instance $(k,n, \{S_{i,j\ :\ 1\leq i,j \leq k}\})$ of \textsc{Grid Tiling} has a solution.
\end{lemma}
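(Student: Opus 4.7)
The plan is to extract from the given solution $N$ a selection of entries $s_{i,j}\in S_{i,j}$ via a careful weight accounting. By Claim~\ref{claim:vertical-canonical}, $N$ contains a canonical $c_j\leadsto d_j$ path $Q_j^{\gamma_j}$ for each $j\in[k]$ (defining $\gamma_j\in[n]$), and by Claim~\ref{claim:horizontal-canonical}, $N$ contains an almost canonical $a_i\leadsto b_i$ path $\widetilde P_i$ whose first and last blue edges lie on $P_i^{\alpha_i}$ (defining $\alpha_i\in[n]$). I would first establish the structural fact that $\widetilde P_i$ actually stays at horizontal level $\alpha_i$ throughout its traversal of every gadget $G_{i,j}$. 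The reason is twofold: between two consecutive gadgets the only edge available at level $\ell$ is the horizontal edge of $P_i^{\ell}$, so the path cannot switch levels there; and inside any gadget the grid's vertical edges are oriented strictly downward while the green shortcut $(u,x)\to(x,v)$ preserves the level of $u,v$. Since $\widetilde P_i$ enters at level $\alpha_i$ and must exit at level $\alpha_i$, and can never climb up, it stays at $\alpha_i$ in every gadget.

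Next I would carry out the weight accounting. Let $V=\bigcup_j E(Q_j^{\gamma_j})$, which contributes weight exactly $k\cdot[\Delta(n+1)+2(k+1)+2k(n-1)]=B^{*}/2$. For the horizontal paths, I would show the lower bound
\[
w(\widetilde P_i\setminus V)\;\ge\;\Delta(n+1)+2(k+1)+2k(n-1)\;-\;s_i,
\]
where $s_i$ is the number of indices $j\in[k]$ with $(\alpha_i,\gamma_j)\in S_{i,j}$. The decisive observation is that inside $G_{i,j}$ the path $\widetilde P_i$ can gain one unit of saving only through a green vertex lying simultaneously on level $\alpha_i$ and in column $\gamma_j$ (so that the edge $(x,v)$ of the shortcut is already paid for by $Q_j^{\gamma_j}\subseteq V$); such a vertex exists precisely when $(\alpha_i,\gamma_j)\in S_{i,j}$. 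Any shortcut at a green vertex in a different column does not reduce the cost (it still costs $1+1=2$, the same as the direct horizontal edge), and adding extra vertical edges outside the $Q_j^{\gamma_j}$'s to enable additional sharings costs at least as much as the saving it enables. Hence the total weight satisfies
\[
w(N)\;\ge\;w(V)+\sum_{i=1}^{k} w(\widetilde P_i\setminus V)\;\ge\;B^{*}-\sum_{i=1}^{k}s_i\;\ge\;B^{*}-k^{2}.
\]

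Because the hypothesis gives $w(N)\le B^{*}-k^{2}$, every inequality must be tight; in particular $s_i=k$ for every $i$, which means $(\alpha_i,\gamma_j)\in S_{i,j}$ for all $1\le i,j\le k$. Setting $s_{i,j}:=(\alpha_i,\gamma_j)$ automatically satisfies the two consistency conditions of \textsc{Grid Tiling}: $s_{i,j}$ and $s_{i,j+1}$ share the first coordinate $\alpha_i$, and $s_{i,j}$ and $s_{i+1,j}$ share the second coordinate $\gamma_j$. This produces the required solution.

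The main obstacle will be the per-gadget saving bound: showing rigorously that no combination of shortcuts, auxiliary vertical segments outside the $Q_j^{\gamma_j}$'s, or deviations of $\widetilde P_i$ from level $\alpha_i$ can yield more than one unit of saving in a single gadget. This is where the level-invariance of $\widetilde P_i$ (from the orientation argument above) and a case analysis of which edges in the gadget are shared between horizontal and vertical paths must be combined cleanly, ensuring that the choice $\Delta=5n^{2}$ dominates any potential gain from rerouting.
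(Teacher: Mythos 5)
Your proposal is correct and takes essentially the same approach as the paper: both use Claims~\ref{claim:vertical-canonical} and~\ref{claim:horizontal-canonical} to extract vertical canonical and horizontal almost-canonical paths, observe that each gadget can contribute at most one unit of saving (a shared edge entering a green vertex) and that the budget $B^*-k^2$ then forces exactly one such saving per gadget, which yields $(\alpha_i,\gamma_j)\in S_{i,j}$ for all $i,j$. One small caution: Claims~\ref{claim:vertical-canonical} and~\ref{claim:horizontal-canonical} are stated for \emph{optimum} solutions, so you should first pass from the given solution to an optimum one (of weight still at most $B^*-k^2$) before invoking them; you should also note explicitly that the sets $\widetilde P_i\setminus V$ are pairwise edge-disjoint (they lie in different rows of gadgets), so your additive lower bound on $w(N)$ is valid.
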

\begin{proof}
Consider any optimum solution $X$. By Claim~\ref{claim:vertical-canonical} and
Claim~\ref{claim:horizontal-canonical} we know that $X$ has an $a_i\leadsto b_i$ almost canonical path and a $c_j\leadsto
d_j$ canonical path for every $1\leq i,j\leq k$. Moreover these set of $2k$ paths form a solution for DSN. Since any optimum
solution is minimal, $X$ is the union of these $2k$ paths: one for each terminal pair. For each $i,j\in [k]$ let the $a_i \leadsto b_i$ almost canonical path in $X$ be $\overline{P}_{i}^{\alpha_i}$ and the $c_j \leadsto d_j$ canonical path in $X$ be $Q_{j}^{\gamma_j}$.

The $a_i\leadsto b_i$ almost canonical path $\overline{P}_{i}^{\alpha_i}$ and $c_j\leadsto d_j$ canonical path $Q_{j}^{\gamma_j}$ in $X$ intersect in a unique  vertex in the gadget $G_{i,j}$. If each $a_i \leadsto b_i$ path was canonical instead of almost canonical, then  the weight of $X$ would have been exactly $B^*$. However we know that weight of $X$ is at most $B^* -k^2$. It is easy to see any $a_i\leadsto b_i$ almost canonical path and any $c_j\leadsto d_j$ canonical path can have at most one edge in common: the edge which comes vertically downwards into the green
vertex (see Figure~\ref{fig:savings}). There are $k^2$ gadgets, and there is at most one edge per gadget which is used for two paths in $X$. Hence for each gadget $G_{i,j}$ there is exactly one edge which is used by both the $a_i\leadsto b_i$ almost canonical path and the $c_j\leadsto d_j$ canonical path in $X$. So the endpoint of each of these common edges must be a green vertex, i.e., $(\alpha_i, \gamma_j)\in S_{i,j}$ for each $i,j\in [k]$.
\end{proof}


\subsection{Proof of Theorem~\ref{thm:dsn-w[1]-hardness}}

Finally, we are now ready to prove ~\autoref{thm:dsn-w[1]-hardness} which is restated below:
\begin{reptheorem}{thm:dsn-w[1]-hardness}
The edge-unweighted version of the \dsn problem is W[1]-hard parameterized by the number $k$ of terminal pairs, even when the input is restricted to planar directed acyclic graphs (DAGs). Moreover, there is no $f(k)\cdot n^{o(k)}$ algorithm for any computable function $f$, unless the ETH fails.
\end{reptheorem}


\begin{proof}
Given an instance $(k,n, \{S_{i,j\ :\ 1\leq i,j \leq k}\})$ of \textsc{Grid Tiling}, we use the reduction described earlier in this section to build an instance $(G,T)$ of edge-weighted \dsn (see Figure~\ref{fig:dsn} for an illustration). It is easy to see that the total number of vertices in $G$ is $O(n^{2}k^{2})$ and moreover can be constructed in $\poly(n,k)$ time. Each grid is planar (green shortcut edges do not destroy planarity), and the grids are arranged again in a grid-like manner. Figure~\ref{fig:dsn} actually gives a planar embedding of $G$. Moreover, it is not hard to observe that $G$ is a DAG.

It is known~\cite[Theorem 14.28]{fpt-book} that $k\times k$ \gt is W[1]-hard parameterized by $k$, and under ETH cannot be solved in $f(k)\cdot n^{o(k)}$ for any computable function $f$. Combining the two directions from Lemma~\ref{lem:dsn-redn-easy} and Lemma~\ref{lem:dsn-redn-hard}, we get a parameterized reduction from $k\times k$ \gt to an instance of DSN which is a planar DAG and has $O(k)$ terminal pairs. Hence, it follows that DSN on planar DAGs is W[1]-hard and under ETH cannot be solved in $f(k)\cdot n^{o(k)}$ time for any computable function $f$.
\end{proof}

Note that Theorem~\ref{thm:dsn-w[1]-hardness} shows that the $n^{O(k)}$ algorithm of
Feldman-Ruhl~\cite{feldman-ruhl} for DSN is asymptotically optimal.

\newpage

\bibliographystyle{splncs03}
\bibliography{docsdb}


\appendix

\section{Vertex-unweighted versions are more general than edge-weighted versions with
integer weights}
\label{appendix:vertex-general-than-edge}

In this section, for both the SCSS and DSN problems we show that the edge-weighted
version (with polynomially-bounded integer weights) can be solved using the
vertex-unweighted version. Hence all our hardness results from Theorem~\ref{thm:scss-main-hardness-planar-graphs},
Theorem~\ref{thm:scss-main-hardness-general-graphs} and Theorem~\ref{thm:dsn-w[1]-hardness} hold for the vertex-(un)weighted versions as well.

We give a formal proof for the DSN problem; the proof for the SCSS problem is similar. Consider an instance $I_1 = (G,T)$ of
edge-weighted DSN with integer weights where $T=\{(s_i,t_i)\ |\ i\in [k]\}$. Replace each edge of weight $\ell$ by $n\ell$ internal
vertices where $|G|=n$. Let the new graph be $G'$. Consider the instance $I_2$ of vertex-unweighted version where the set of terminals
is the same as in $I_1$.

\begin{theorem}
The instance $I_1$ of edge-weighted DSN has a solution of weight at most $C$ if and only if the instance $I_2$ of vertex-unweighted DSN has a solution with at most $Cn+n$ vertices.
\end{theorem}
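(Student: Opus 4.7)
The plan is to prove the two directions separately by showing a natural correspondence between edge sets in $G$ and vertex sets in $G'$ obtained via the subdivision.

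For the forward direction, I would start with an edge-weighted solution $F \subseteq E(G)$ of weight $W \leq C$ for $I_1$. I then define the vertex set $S \subseteq V(G')$ to consist of (i) the endpoints in $V(G)$ of all edges in $F$, and (ii) all $n \cdot w(e)$ subdivision vertices of each $e \in F$. Because subdividing an edge preserves reachability, $G'[S]$ contains an $s_i \leadsto t_i$ path for every $i \in [k]$, so $S$ is feasible for $I_2$. Counting yields $|S| \leq n + \sum_{e \in F} n \cdot w(e) = n + nW \leq n + nC$, as required.

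For the backward direction, suppose $S' \subseteq V(G')$ is feasible for $I_2$ with $|S'| \leq Cn + n$. I would define $F^* \subseteq E(G)$ to be the set of all edges $e=(u,v)$ of $G$ such that $u,v \in S'$ and all $n \cdot w(e)$ internal subdivision vertices of $e$ are contained in $S'$. Since every subdivision vertex has in-degree and out-degree exactly $1$ in $G'$, any $s_i\leadsto t_i$ path in $G'[S']$ that enters the subdivision of an edge $e=(u,v)$ at $u$ must traverse all internal vertices and exit at $v$; contracting each such subdivision then yields an $s_i \leadsto t_i$ path in $G$ using only edges of $F^*$. Hence $F^*$ is feasible for $I_1$. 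To bound its weight, observe that every edge $e \in F^*$ contributes exactly $n \cdot w(e)$ internal vertices to $S'$, so $n\cdot \sum_{e \in F^*} w(e) \leq |S'| - |S' \cap V(G)| \leq Cn + n - 2k$, since the $2k$ terminals of $I_2$ must all lie in $S'$.

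Finally I would close the arithmetic gap by using integrality and the (harmless) assumption that $n \geq 2k$, which we may take as a given because the $2k$ terminals are distinct vertices of $G$. This gives $\sum_{e \in F^*} w(e) \leq C + (n - 2k)/n < C+1$, and since the edge weights are integers, the sum is at most $C$. The main obstacle is precisely this arithmetic step: without integer weights, the multiplicative blow-up by $n$ per edge leaves a slack of roughly $1$ in the bound, and one has to invoke $n \geq 2k$ together with integrality of $\sum w(e)$ to pin the sum to exactly $\leq C$. The same argument works verbatim for SCSS, replacing $s_i \leadsto t_i$ reachability with all-pairs strong connectivity among the terminals.
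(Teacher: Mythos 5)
Your proof is correct and follows essentially the same approach as the paper: subdivide each weight-$w(e)$ edge into $n\cdot w(e)$ new vertices, translate edge solutions to vertex solutions (and back by taking exactly those edges whose entire subdivision survives in $S'$), and close the backward direction by combining integrality of the weights with a slack strictly less than $1$. The only difference is that the paper lower-bounds $|S'\cap V(G)|$ by $1$ rather than by $2k$, which sidesteps your appeal to the $2k$ terminals being distinct (not actually guaranteed by the DSN definition, since $s_i,t_j$ may coincide) and to $n\geq 2k$; neither assumption is needed, since any positive lower bound on $|S'\cap V(G)|$ already yields $\sum_{e\in F^*}w(e) < C+1$.
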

\begin{proof}
Suppose there is a solution $E_1$ for $I_1$ of weight at most $C$. For each edge in $E_1$ pick all its internal vertices and two
endpoints in $E_2$. Clearly $E_2$ is a solution for $I_2$. The number of vertices in $E_2$ is $Cn+\gamma$ where $\gamma$ is the number of
vertices of $G$ incident to the edges in $E_1$. Since $\gamma \leq n$ we are done.

Suppose there is a (vertex-minimal) solution $E_2$ for $I_2$ having at most $Cn+n$ vertices. For any edge $e\in G$ of weight $c$ we need to
pick all the $cn$ internal vertices (plus the two endpoints of $e$) in $E_2$ if we actually want to use $e$ in a solution for
$I_1$. So for every edge $e\in E$ we know that $E_2$ contains either all or none of the internal vertices obtained after splitting up
$e$ according to its weight in $G$. Let the set of edges of $G$ all of whose internal vertices are in $E_2$ be $E_1=\{e_1, e_2,
\ldots, e_r\}$ and their weights be $c_1, c_2, \ldots, c_r$ respectively. Since $E_2$ is a solution for $I_2$ it follows that $E_1$ is a solution
for $I_1$. Let $S$ be the union of set of endpoints of the edges in $E_1$. Therefore $Cn+n \geq |S| + n(\sum_{i=1}^{r} c_i)$.
Since $|S|\geq 1$ we have $C\geq \sum_{i=1}^{r} c_i$, i.e., $E_1$ has weight
at most~$C$.
\end{proof}

Note that the above reduction works even in the case when the edges have zero weight\footnote{We mention this explicitly because some of the reductions in this paper do have edges with zero weight}: in this case we simply
won’t be adding any internal vertices.

\section{Treewidth and Minors}
\label{appendix:gt-defns}

\begin{definition} \textbf{(treewidth)}
Let $G$ be a given undirected graph. Let $\T$ be a tree and $B:V(\T)\rightarrow 2^{V(G)}$. The pair $(\T,B)$ is called as a \emph{tree decomposition} of an undirected graph $G$ is a tree $\T$ in which every vertex $x\in V(\T)$ has an assigned set of vertices $B_x \subseteq V(G)$ (called a bag) such that the following properties are satisﬁed:
\begin{itemize}
\item $\textbf{(P1)}$: $\bigcup_{x\in V(\T)} B_x = V (G)$.
\item $\textbf{(P2)}$: For each $\{u,v\}\in E(G)$, there exists an $x\in V(\T)$ such that $u, v\in B_x$.
\item $\textbf{(P3)}$: For each $v\in  V(G)$, the set of vertices of $\T$ whose bags contain $v$ induce a connected subtree of $\T$.
\end{itemize}
The \emph{width} of the tree decomposition $(\T,B)$ is $\max_{x\in V(\T)} |B_x|-1$. The treewidth of a graph $G$, usually denoted by $\tw(G)$, is the minimum width over all tree decompositions of $G$.

\end{definition}

\begin{definition} \textbf{(minor)}
Let $G,H$ be undirected graphs. Then $H$ is called a minor of $G$ if $H$ can be obtained from $G$ by deleting edges, deleting vertices and by contracting edges.
\end{definition}

\begin{definition} \textbf{(subdivision)}
Let $G$ be an undirected graph. An edge $e=u-v$ is subdivided by adding a new vertex $w$ and the edges $u-w$ and $v-w$. An undirected graph $H$ is called a subdivision of $G$ if $H$ can be obtained from $G$ by subdividing edges of $G$.
\end{definition}

\begin{lemma}
Subdivisions of outerplanar graphs have treewidth at most $2$.
\label{lem:tw-outerplanar}
\end{lemma}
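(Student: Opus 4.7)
The plan is to reduce this to two well-known facts: (i) the class of outerplanar graphs is closed under edge subdivision, and (ii) every outerplanar graph has treewidth at most $2$. Granted these, the lemma is immediate, so the real work is just to argue~(i) cleanly; (ii) is a classical theorem that we can cite.

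First, I would verify closure under subdivision. Fix an outerplanar embedding of the base graph $G$, i.e., a planar embedding in which every vertex lies on the boundary of the outer face. Let $H$ be a subdivision of $G$, obtained by replacing some edges $e = uv$ with internally disjoint paths $u = w_0, w_1, \dots, w_\ell = v$. In the outerplanar drawing of $G$, each edge $e = uv$ can be routed along the boundary of the outer face, or else is a chord whose interior lies in some bounded face $F$ incident to the outer boundary. In either case I can subdivide the edge by placing the new vertices $w_1, \dots, w_{\ell-1}$ as points on the drawn curve; continuity of the drawing guarantees that I can then perturb them infinitesimally so that each lies on the boundary of the (new) outer face, while preserving planarity. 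Thus $H$ admits an outerplanar embedding.

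Second, I would invoke the classical characterization that every outerplanar graph has treewidth at most~$2$ (equivalently, outerplanar graphs have no $K_4$ or $K_{2,3}$ minor, and series-parallel graphs have treewidth~$\leq 2$). This gives $\tw(H) \leq 2$, which is exactly the claim.

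The main (mild) obstacle is a clean justification of step one; an alternative avoiding any geometric fiddling is purely combinatorial: starting from a tree decomposition $(\T, B)$ of $G$ with bags of size at most $3$, I construct a tree decomposition of $H$ of the same width by, for each subdivided edge $uv$ with subdivision vertices $w_1, \dots, w_{\ell-1}$, picking one bag $B_x$ containing both $u$ and $v$ and attaching a path of new bags $\{u, w_1, v\}, \{u, w_1, w_2\}, \{w_1, w_2, v\}, \dots$ as a pendant subtree at $x$ (a standard ``rerouting along a path'' gadget that keeps each new vertex in a connected subtree and covers every new edge). One checks (P1)--(P3) directly and concludes $\tw(H) \leq \max(\tw(G), 2) \leq 2$. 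Either route yields the lemma in a few lines.
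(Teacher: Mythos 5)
Your primary route fails at step~(i): the class of outerplanar graphs is \emph{not} closed under edge subdivision. A concrete counterexample: let $G=K_4-e$ on vertex set $\{1,2,3,4\}$ with the edge $\{3,4\}$ removed. This is outerplanar (draw the $4$-cycle $1,3,2,4$ on the outer face and route $\{1,2\}$ as an interior chord). Subdividing the chord $\{1,2\}$ with a new vertex $5$ gives exactly $K_{2,3}$ with parts $\{1,2\}$ and $\{3,4,5\}$, and $K_{2,3}$ is a forbidden minor for outerplanarity. This is precisely where the geometric ``perturb to the outer boundary'' claim breaks: a subdivision vertex placed on an interior chord is separated from the outer face by the two outer arcs between $u$ and $v$, and there is no way to push it out while keeping all the original vertices on the outer face and preserving planarity. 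So the first ``few-line'' route is simply wrong, not merely a mild obstacle.

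Your alternative combinatorial route is the right idea and is essentially what the paper does, but the bag sequence you wrote, $\{u,w_1,v\},\{u,w_1,w_2\},\{w_1,w_2,v\},\dots$, violates~(P3) for $v$ as soon as $\ell\geq 3$: the middle bag $\{u,w_1,w_2\}$ omits $v$, disconnecting the set of bags containing~$v$. A correct pendant gadget is, for instance, the path of bags $\{u,w_1,v\},\{w_1,w_2,v\},\dots,\{w_{\ell-2},w_{\ell-1},v\}$ attached at a bag $B_x\ni u,v$, keeping $v$ in every new bag. The paper avoids this bookkeeping entirely by proving the single-step statement ``subdividing one edge does not increase the treewidth'' (attach a single pendant bag $\{u,v,w\}$ to a bag containing $u,v$) and iterating; note that this induction must be phrased over graphs of treewidth $\le 2$, not over outerplanar graphs, precisely because outerplanarity is lost after the first subdivision.
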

\begin{proof}
Outerplanar graphs are known to be a subclass of series parallel graphs, and
hence have treewidth at most $2$. To prove this lemma, it is enough to show that
subdividing one edge of an outerplanar does not increase the treewidth. Let $G$
be an outerplanar graph, and $(\T,B)$ be a tree-decomposition of $G$ of width
at most $2$. Let $e=u-v$ be an edge in $G$ which is subdivided by adding a
vertex $w$ and the edges $u-w$ and $v-w$. We now build a tree decomposition for
the resulting graph $G'$. We add only one vertex to $V(\T)$: by property
$\textbf{(P2)}$, there exists $t\in V(\T)$ such that $u, v\in B_t$. Add a new
vertex $t'$ and set $B_{t'} = \{u,v,w\}$. Make $t'$ adjacent only to~$t$. It is
easy to check that $V(\T)\cup \{t'\}$ is a tree-decomposition for $G'$ of
treewidth at most $2$.
\end{proof}

\end{document}
